\documentclass{article}
\usepackage{authblk}
\usepackage[margin=0.9in,footskip=0.3in]{geometry}
\usepackage{longtable,float,hyperref,color,amsmath,amsxtra,amssymb,latexsym,amscd,amsthm,amsfonts,makecell}
\usepackage{booktabs}
\usepackage{multirow}
\usepackage{mathtools}
\usepackage{stmaryrd}
\allowdisplaybreaks
\usepackage{footnote}
\usepackage{tabularx,array}
\usepackage[numbers]{natbib}
\usepackage{enumitem}
\usepackage{hyperref}
\usepackage{graphicx, wrapfig, caption, adjustbox, fancyhdr}
\numberwithin{equation}{section}
\theoremstyle{plain}

\newcommand{\lb}{\llbracket}
\newcommand{\rb}{\rrbracket}
\newcommand{\lrbP}{\lb1,P\rb}
\newcommand{\lrbN}{\lb1,N\rb}

\newcommand{\cR}{\mathcal{R}}
\newcommand{\K}{\mathcal{K}}

\newcommand{\Omp}{\overset{\circ}{\Omega}}
\newcommand{\bomega}{\mathbf{\omega}}
\newcommand{\diag}{\mathrm{diag}}
\newcommand{\eps}{\varepsilon}

\newcommand{\N}{\mathbb{N}}
\newcommand{\R}{\mathbb{R}}
\newcommand{\bI}{\mathbf{I}}
\newcommand{\bD}{\mathbf{D}}
\newcommand{\bS}{\mathbf{D}}
\newcommand{\bu}{\mathbf{u}}
\newcommand{\bz}{\mathbf{z}}
\newcommand{\bun}{\mathbf{1}}
\newcommand{\bze}{\mathbf{0}}
\newcommand{\bv}{\mathbf{v}}
\newcommand{\bTheta}{\mathbf{\Theta}}
\newcommand{\q}{\pi}
\newcommand{\E}{\mathbb{E}}

\newtheorem{thm}{Theorem}[section]
\newtheorem{rmk}{Remark}[section]
\newtheorem{prop}[thm]{Proposition}
\newtheorem{cor}[thm]{Corollary}
\newtheorem{hyp}{Assumption}
\newtheorem{lemma}[thm]{Lemma}
\newtheorem{defi}{Definition}[section]
\title{Multi-strain SIS dynamics with coinfection under host population structure
}%
\author[1] {Sten Madec}
\author[2]{Nicola Cinardi}
\author[2]{Erida Gjini}
\affil[1]{Institut Denis Poisson, University of Tours, Tours, France}
\affil[2]{Center for Computational and Stochastic Mathematics, Instituto Superior Tecnico, Lisbon, Portugal}

\begin{document}
	
	\maketitle
	
	\begin{abstract}
		Coinfection phenomena are common in nature, yet there is a lack of analytical approaches for coinfection systems with a high number of circulating and interacting strains. In this paper, we investigated a coinfection SIS framework applied to $N$ strains, co-circulating in a structured host population. Adopting a general formulation for fixed host classes, defined by arbitrary epidemiological traits such as class-specific transmission rates, susceptibilities, clearance rates, etc., our model can be easily applied in different frameworks: for example, when different host species share the same pathogen, in classes of vaccinated or non-vaccinated hosts, or even in classes of hosts defined by the number of contacts. Using the strain similarity assumption, we identify the \textit{fast} and \textit{slow} variables of the epidemiological dynamics on the host population, linking neutral and non-neutral strain dynamics, and deriving a global replicator equation. This global replicator equation allows to explicitly predict coexistence dynamics from mutual invasibility coefficients among strains. The derived global pairwise invasion fitness matrix contains explicit traces of the underlying host population structure, and of its entanglement with the strain interaction and trait landscape. Our work thus enables a more comprehensive study and efficient simulation of multi-strain dynamics in endemic ecosystems, paving the way to deeper understanding of global persistence and selection forces, jointly shaped by pathogen and host diversity. 
	\end{abstract}

	\tableofcontents
	
	\setlength{\parindent}{0cm}
	
	\textit{Keywords: }{ replicator equation; structured populations; SIS dynamics; coinfection; multistrain model; slow-fast dynamics; host diversity; host contact networks; heterogeneous intervention effects }
	\section{Introduction}
    
     In \cite{madec2020predicting}, we introduced an SIS coinfection model with multiple interacting strains, and presented an approach for simplification of its dynamics, based on slow-fast dynamics and strain similarity. Here, we develop the model extension away from the homogeneous mixing assumption, allowing transmission to be embedded in an explicit host population structure. Although the model is general, special cases include an explicit host contact network defined by the number of contacts, a host population defined by classes of different epidemiological parameters, or vaccination status. We derive the model reduction for this SIS epidemiological model with multi-strain coinfection on a structured host population, using the similarity assumption between strains. We do not impose any constraints on the type of structure considered, except that it has defined first and second moments.
	

	Essentially, in this paper, we are interested in understanding the $N-$ strain coinfection SIS model with several classes of hosts $\K$. The general ODE system for host proportions in different classes ($k$) and epidemiological status, following our previous works \cite{madec2020predicting,le2023quasi}, reads
		\begin{equation}\label{model:generalqskintro}
		\begin{cases}
			\dfrac{d}{dt}S_k=r_k(1-S_k)-S_k\sum\limits_{j=1}^N \beta_k^j\Theta_k^j +\sum\limits_{j=1}^N\gamma_k^j I_k^j+\sum\limits_{(j,l)\in\lrbN^2} \gamma_k^{j,l} D_k^{j,l}\\
			\dfrac{d}{dt}I_k^i=\beta_k^i \Theta_k^i S_k-(r_k+\gamma_k^i )I_k^i- I_k^i\sum\limits_{j=1}^N \sigma_k^{ij}\beta_k^j  \Theta_k^j\\
			\dfrac{d}{dt}D_k^{ij}=\sigma_k^{ij}\beta_k^j \Theta_k^j I_k^i-(r_k+\gamma_k^{i,j}) D_k^{ij} \\
		 
		\end{cases}
	\end{equation}
	where $\Theta_k^i$ denotes the total probability that the host class $k$ is infected with strain $i$ from any of the classes $s\in\K$, similar to mean-field approaches in network models \cite{moreno2002epidemic}.
\\	
Beyond the direct extension of our previous work \cite{madec2020predicting,le2023quasi}, our second objective here is also to address a big challenge in infectious disease modelling, related to the joint effect of several sources of heterogeneity. Here we combine: host population heterogeneity, contact rate heterogeneity, strain heterogeneity, and strain interactions, into the same model, including single and co-infection. How heterogeneity affects disease spread involves a large ongoing research, both from the theoretical and empirical perspectives \cite{ball1985deterministic,dwyer1997host, lloyd2005superspreading, andreasen2011final, gou2017heterogeneous, montalban2022herd,allard2023role,anderson2023quantifying,tuschhoff2025heterogeneity}. Yet, especially in the context of multi-strain infectious diseases with strain interactions and co-infection, it remains unclear how different layers of heterogeneity jointly shape both endemic global quantities and individual strain trajectories. With our model, we contribute to simplifying such a high-dimensional system, providing an explicit model reduction leading to a clearer understanding of aggregated mean-field quantities over the population, including mean-field strain dynamics, governed by emergent global fitness parameters. This enables an analytical study of the role of different sources of heterogeneity on the infectious disease dynamics and strain propagation.

The paper is organised as follows.  
In the next subsection, we define the notations used throughout the paper.  
In Section \ref{sec:2}, we present the general coinfection SIS model with host classes but without strain structure, and we recall the (essentially classical) results that this model satisfies. It is important to note that, even in the presence of coinfection, the results of this section are direct consequences of the classical SIS framework with host classes. Nevertheless, they are essential for the developments in the following sections.  

In Section \ref{sec:3}, we describe the general model with $N$ strains. After introducing our notion of neutrality in this context, we present our main result under the perfectly neutral assumption.  

In Section \ref{sec:4}, we introduce the notion of quasi-neutrality and state the main result of the paper: the Theorem \ref{Th:replicator}. This theorem shows that, under quasi-neutrality, the dynamics are governed by a replicator system for the strain frequencies $\bz$, similar to \cite{madec2020predicting,le2023quasi}. The parameters of this replicator system over the structured population depend directly on the deviation from neutrality among strains, and on the attractors of the corresponding no-strain system described in Section \ref{sec:2}.  

In Section \ref{sec:application}, we apply this general result to three different situations:  
(i) the case $N=2$, where all key quantities can be explicitly computed;  
(ii) a vaccination scenario, where host classes arise from heterogeneity in vaccine response;  
(iii) a mean-field network model, where host classes correspond to connectivity degrees in a network approach. The latter case is studied in detail in a companion paper focusing on the SIS dynamics on host contact networks \cite{networkPaper}.  

In these three applications, we show how the quasi-neutrality assumption enhances our understanding of strain interactions and ultimate strain selection in structured host populations. Section \ref{sec:five} provides some analytical results on the model applied to host contact networks. Section \ref{sec:discuss} summarizes our findings and conclusions. The most technical proofs are included in the final Section \ref{sec:proof}.
	Our results have implications for the analytical and computational study of strain selection dynamics in endemic ecosystems, shaped jointly by strain and host diversity.
	
	\subsection{Notations}
In this subsection, we present the notations used throughout the article and recall some basic properties. 
Most of the notations are straightforward, so the reader may choose to skip this section and return to it later if any notation becomes unclear during the flow of the main text.
	\paragraph{Matrices and Vectors.} 
Let \(n \in \mathbb{N}^*\) be a positive integer.  
We denote by \(\bze_n\) and \(\bun_n\) the column vectors of size \(n\) filled respectively with zeros and ones.  
We write \(\mathcal{M}_n\) for the set of all real square matrices of size \(n\).  
For a vector \(\bv = (v_1,\dots,v_n)\) of size \(n\), we denote by \(\mathrm{diag}(\bv) \in \mathcal{M}_n\) the diagonal matrix with entries of \(\bv\) on its diagonal. In particular, we set \(\mathbb{I}_n = \mathrm{diag}(\bun_n)\) for the identity matrix of \(\mathcal{M}_n\), and \(\mathbb{O}_n = \mathrm{diag}(\bze_n)\) for the zero matrix of \(\mathcal{M}_n\). 

A vector  $\mathbf{v}\in\R^n$ is say to be non-negative (reps. positive) and we denote  $\mathbf{v}\geq \mathbf{0}_n$ (resp. $\mathbf{v}>\mathbf{0}_n$
if for any $k$, $\mathbf{v}_k\geq 0 $ (resp. $\mathbf{v}_k>0$).

The same notations hold for matrix.

A matrix \(\mathbb{A} \in \mathcal{M}_n\) is called a {\em Metzler matrix} if all its off-diagonal entries are non-negative, i.e.:
\[\exists M\in\R,\; M\mathbb{I}_n+\mathbb{A}\geq 0.\]

\paragraph{Eingvalues}
The spectrum of a matrix \(\mathbb{A} \in \mathcal{M}_n\) is denoted by \(\mathrm{sp}(\mathbb{A})\).

We define the spectral radius
\[\rho(\mathbb{A})=\sup\{|\lambda|,\;\lambda\in\mathrm{sp}(\mathbb{A})\}\]
and the spectral bound 
\[\alpha(\mathbb{A})=\sup \{ \mathrm{Re}(\lambda),\;\lambda\in\mathrm{sp}(\mathbb{A})\}.\]

From the Perron Frobenius theorem, for any irreducible Metzler matrix \(\mathbb{A}\), \(\alpha(\mathbb{A})\) is the principal eigenvalue, and has the same properties as the spectral radius of a positive irreducible matrix. In particular, \(\alpha(\mathbb{A})\) is a simple eigenvalue of \(\mathbb{A}\), and the only one associated with a positive eigenvector see appendix \ref{Appendix:Metzler} for details.

	\paragraph{Subscripts and superscripts.}
Our model consists of three levels of structuring: host classes, strains, and infection levels.

	Host classes are indexed by a subscript $k\in\K$ where $\K$ is a finite set of size $p=|\K|$.
	For any variables $x_k$ labeled by $k\in\K$ we use the (slightly abusive) notation $x_\cdot=(x_{k_j})_{j\in\lrbP }=(x_k)_{k\in\K}$.
	- When context is clear, we simplify notation to 
	 $x=x_\cdot$
	
	Strains are indexed by superscripts $i\in\{1,\cdots,N\}=\lrbN$ that we write with an upper-script $i$.
	For any variables $x^i$ where $i\in\lrbN$ we denote $\mathbf{x}=(x^i)_{i\in\lrbN}$.
	
	Hence, in general $x_k^i$ represents an element corresponding to the strain $i\in\lrbN$ in the class $k\in\K$ and we have
	$x_\cdot^i=(x_k^i)_{k\in\K}$ and $\mathbf{x_k}=(x_k^i)_{i\in\lrbN}$.
	
	Depending on the context, the quantity $x_k^i$ may be a scalar or also a vector (of $\R^2$) representing infection stages (e.g., primo‑infected, secondly‑infected).
\paragraph{Statistics.}
Let $q=(q_k)_{k\in\K}$ be a probability vector, meaning that $q_k \geq 0$ for each $k \in \K$ and $\sum_{k\in\K} q_k = 1$.  
For any sequences $x=(x_k)_{k\in\K}$ and $y=(y_k)_{k\in\K}$, with a slight abuse of notation, we write
\[
\mathbb{E}_q(x_k) = \sum_{k\in\K} q_k x_k, 
\qquad 
\texttt{cov}_q(x_k,y_k) = \mathbb{E}_q(x_k y_k) - \mathbb{E}_q(x_k)\,\mathbb{E}_q(y_k).
\]

\paragraph{Kronecker product – matrix}
Recall that for two matrices $Q \in \mathcal{M}_p$ and $B \in \mathcal{M}_n$,  
the Kronecker product $M = Q \otimes B$ is the square matrix $M = (M_{ks})_{k,s} \in \mathcal{M}_{pn}$  
given by the $p \times p$ block structure
\[
M_{ks} = q_{ks} B \in \mathcal{M}_n.
\]
An important matrix in the following is
\[M=\text{diag}(B_{k_1},\cdots,B_{k_p}) \left(Q\otimes\mathbb{I}_2\right)\]
where $(B_k)_{k\in \K}$ is a list of $p$ matrix of $\mathcal{M}_2$ and $\text{diag}(B_{k_1},\cdots,B_{k_p})$ is the square  block diagonal matrix of $\mathcal{M}_{2p}$. 
$M\in\mathcal{M}_{2p}$ is given by the $p\times p$ block structure

\[
M_{ks} = q_{ks} B_k \in \mathcal{M}_2.
\]
	\paragraph{Kronecker product – vector}
	Let $N$ and $P$ be two natural numbers, and let $E$ be a vector space.  
	For any $\mathbf{x}=(x^i)_{i\in\lrbN} \in \mathbb{R}^N$ and $X=(X_k)_{k\in\lrbP}\in E^P$, we define the Kronecker product
	\[
	\mathbf{x}\otimes X = (x^i X)_{i\in\lrbN} = (x^i X_k)_{(i,k)\in\lrbN\times\lrbP} \in E^{N\times P}.
	\]
	If $\Omega \subset \mathbb{R}^N$, we denote
	\[
	\Omega \otimes \{X\} = \{\mathbf{x}\otimes X \;\mid\; \mathbf{x}\in \Omega\}.
	\]
	
	In our applications, a typical situation arises when $N$ is the number of species, 
	$P=|\mathcal{K}|$ is the number of classes, and $E=\mathbb{R}^2$ represents vectors of 
	primary and secondary infections.  
	The vector
	\[
	X^*=\left((I_k^*,D_k^*)^T\right)_{k\in\mathcal{K}} \in (\mathbb{R}^2)^P
	\]
	denotes a stationary pair in each class $k\in\mathcal{K}$, and the subset
	\[
	\Sigma^N=\Bigl\{\mathbf{z}\in[0,1]^N \;\Big|\; \sum_{i=1}^N z^i=1\Bigr\}
	\]
	is the standard probability simplex of $\mathbb{R}^N$. Hence,
	\[
	(u_k^i)_{(i,k)\in\lrbN\times\lrbP} \in \Sigma^N\otimes \{X^*\}
	\;\;\Longleftrightarrow\;\;
	\exists \mathbf{z}\in\Sigma^N \;\text{such that}\;
	u_k^i = z^i \begin{pmatrix} I_k^* \\ D_k^* \end{pmatrix}.
	\]
	
	This space represents all weighted combinations of stationary class-level pairs $(I_k^*, D_k^*)$, 
	with weights drawn from the probability simplex $\Sigma^N$.

	\section{The general model without strain structure}
	\label{sec:2}

We are interested in modeling co-infection across different classes of hosts.
Throughout this work, we make the following crucial hypothesis.

\begin{hyp}\label{hyp:constk}
	We assume that the total population of each class remains constant over time.
	Hence, without loss of generality, we may assume that this total abundance within each class is equal to 1.
\end{hyp}
In other words, in this text, we work with proportions of hosts within each class.  

Let $\K=\{k_1,\cdots,k_p\}$ denote the finite set of classes.  
For each class $k \in \K$, we denote the proportions of hosts in class $k$ which are susceptible, infected, and co-infected as $S_k$, $I_k$, and $D_k$, respectively.

	Let us denote $\Omega=\{(S,I,D)\in[0,1]^3,\; S+I+D=1\}^p$. 
	For any $t\geq 0$ and $k\in\K$, we write $X_k(t)=(S_k(t),I_k(t),D_k(t))$ and we write also  $X(t)=(X_k)_{k\in\K}\in \Omega$

The model is built upon a classical SIS model within class with the addition of co-infection.

In each class, we denote $r_k$ the birth and death rate and $\gamma_k$ the clearance rate.
	
The class $k$ is infected by the infected individual from the class $s$ at a rate $\beta_{ks}$. 
	
	Denoting $\beta_k=\sum_{s\in\K} \beta_{ks}^i$ and $q_{ks} =\dfrac{\beta_{ks}}{\beta_k}$ yields
	\[\beta_{ks}=\beta_k q_{ks}\text{ with }\sum_{s\in\K} q_{ks}=1.\]
	The number $q_{ks}$ may be interpreted as the probability for the class $k$ to be in contact with the class $s$.
	
	Hence, the probability that the class $k$ is infected by the class $s$ is $q_{ks}(I_s+D_s)$.
	Finally,  the probability that the class $k$ is  infected is
	
	\[\Theta_k=\sum_{s} q_{ks} (I_s+D_s).\]
	The matrix $Q=(q_{ks})_{(k,s)\in\K^2}$ is a non-negative matrix describing the connectivity between the classes. Moreover we assume 
	\begin{hyp}
		The non-negative connectivity matrix $Q$ is irreducible.
				Hence, $Q^T$ is an irreducible Markov matrix.
	\end{hyp}
	Once infected, the primary infection increases with a rate $\beta_k$. The rate of primary infection is then $\beta_k\Theta_k$.
	
	The second infection increases with a rate $\sigma_k\beta_k$. 
	 The rate of second infection is then $\sigma_k\beta_k\Theta_k$.
	
	These parameters $\sigma_k$ are specific to this coinfection model and represent the ratio between the (rate of) first infection and the (rate of) second infection in the class $k$.

	The system reads for each $t>0$:

	\begin{equation}\label{SIDS}
		\begin{cases}
			\dfrac{d}{dt}{S}_k=r_k(1 - S_k)-\beta_k S_k \Theta_k +\gamma_k (I_k +D_k )\\
			\dfrac{d}{dt}{I}_k= \beta_k S_k \Theta_k -\sigma_k \beta_k 
			I_k \Theta_k  -(\gamma_k+r_k) I_k \\
			\dfrac{d}{dt}{D}_k= \sigma_k\beta_k 
			I_k \Theta_k  -(r_k+\gamma_k) D_k. \\
		\end{cases}
	\end{equation}

	Let us denote $E_0=\{(1,0,0)\}^p\subset \Omega $ the (singleton of the) disease-free equilibrium and denote also $\Omp=\{(S,I,D)\in\Omega,\;S<1\}^p$.

It is clear that both $\Omega$, $\Omp$ and $E_0$ are invariant under the SIDS system \eqref{SIDS}. Moreover, the following result is standard	
	\begin{prop} Assume that $Q$ is irreducible.
	
	If $(S_k(0),I_k(0),D_k(0))_{k\in\K}\in \Omega\setminus E_0$ then $X(t)\in \Omp$ for each $t>0$.
	
\end{prop}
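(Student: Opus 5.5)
We need to show that if the initial condition is not the disease-free state, then for every $t>0$ and every class $k$ one has $S_k(t)<1$, i.e. $J_k(t):=I_k(t)+D_k(t)>0$. The plan is to reduce to a linear cooperative differential inequality for the infected vector $J=(J_k)_{k\in\K}$ and then use irreducibility of $Q$ to propagate positivity from one class to all others.

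\textbf{Step 1 (equation for $J_k$).} Summing the last two lines of \eqref{SIDS} gives
\[
\frac{d}{dt}J_k=\beta_k S_k\Theta_k-(r_k+\gamma_k)J_k,\qquad \Theta_k=\sum_{s\in\K}q_{ks}J_s .
\]
Since $\Omega$ is invariant, $S_k(t)\ge 0$ and $J_k(t)\ge 0$ for all $t\ge0$.

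\textbf{Step 2 (positivity of $S_k$).} Next we show $S_k(t)>0$ for $t>0$. From the first line, $\dot S_k\ge r_k(1-S_k)-\beta_k\Theta_k S_k\ge r_k-(r_k+\beta_k)S_k$, using $\Theta_k\le1$. Hence $S_k(t)\ge (1-e^{-(r_k+\beta_k)t})\,r_k/(r_k+\beta_k)>0$ for $t>0$, provided $r_k>0$. If some $r_k=0$, one uses the term $\gamma_k J_k$ instead: if $S_k(t_0)=0$ then $J_k(t_0)=1$ and $\dot S_k(t_0)=\gamma_k>0$, so $S_k$ cannot stay at $0$. Fix $\tau>0$; by continuity on the compact interval $[\tau,T]$ we then have $S_k\ge c>0$ there.

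\textbf{Step 3 (propagation along $Q$).} On $[\tau,T]$ the vector $J$ satisfies
\[
\dot J_k\ge c\,\beta_k\sum_{s}q_{ks}J_s-(r_k+\gamma_k)J_k,
\]
which is a linear system $\dot J\ge AJ$ with $A=c\,\mathrm{diag}(\beta_k)Q-\mathrm{diag}(r_k+\gamma_k)$. The matrix $A$ is Metzler, and it is irreducible because $Q$ is irreducible and $\beta_k>0$. By comparison for cooperative systems, $J(t)\ge e^{A(t-t_0)}J(t_0)$ for $t\ge t_0\ge\tau$. For an irreducible Metzler matrix, $e^{At}$ is entrywise positive for every $t>0$ (Appendix \ref{Appendix:Metzler}: $e^{At}=e^{-Mt}e^{(A+M\mathbb{I})t}$ and $\sum_n (A+M\mathbb{I})^n t^n/n!>0$ by irreducibility). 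Therefore, as soon as $J(t_0)\ge0$ and $J(t_0)\neq0$, we get $J(t)>0$ for all $t>t_0$.

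\textbf{Step 4 (main obstacle: $J(t_0)\ne0$ for small $t_0$).} The delicate point is making sure $J$ does not vanish identically near $0$, i.e. that $J(t_0)\neq0$ for arbitrarily small $t_0>0$. Since $X(0)\notin E_0$, we have $J(0)\neq0$, so by continuity $J(t_0)\neq0$ for all small $t_0$. Choosing $\tau<t_0$ arbitrarily small, Step 3 gives $J(t)>0$ for all $t>t_0$. Letting $t_0\to0$ yields $J(t)>0$, i.e. $S_k(t)<1$ for every $k$ and every $t>0$, which is $X(t)\in\Omp$.

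A cleaner alternative for Step 4 is to apply the comparison to $\dot J\ge -\mathrm{diag}(r_k+\gamma_k)J+\mathrm{diag}(\beta_kS_k)QJ$ directly from time $0$, with time-dependent coefficients. One then argues via a standard lemma that a class $k$ with $J_k(t_1)=0$ at some $t_1>0$ forces $\dot J_k(t_1)\ge\beta_kS_k(t_1)\Theta_k(t_1)$. Iterating this along paths of the irreducible graph of $Q$ shows that all $J_s$ must vanish on an interval, contradicting uniqueness of solutions. The Metzler-exponential route above avoids this case analysis.
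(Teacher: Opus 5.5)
The paper does not prove this proposition; it only calls the result standard and gives the heuristic that infection in one class reaches every class through the (possibly multi-step) contact graph of $Q$. Your proof is correct, and it supplies the missing rigorous argument. It formalizes that heuristic with the Metzler/Perron--Frobenius tools the paper sets up in Appendix~\ref{Appendix:Metzler}. One attribution is off: the entrywise positivity of $e^{tA}$ for irreducible Metzler $A$ is not stated in that appendix. Your own justification via $e^{tA}=e^{-Mt}e^{t(A+M\mathbb{I})}$ is what actually carries that step.

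The main route can be shortened. On $\Omega$ you have $S_k=1-J_k$ and $0\le\Theta_k\le 1$, so $\dot J_k\ge \beta_k\Theta_k-(\beta_k+r_k+\gamma_k)J_k$. Hence $J(t)\ge e^{t\tilde A}J(0)$ for the irreducible Metzler matrix $\tilde A=\mathrm{diag}(\beta)Q-\mathrm{diag}(\beta+r+\gamma)$, valid from $t=0$. This removes Step~2 (including the $r_k=0$ discussion), the constant $c$, and the $\tau,t_0$ limiting in Step~4.

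Your ``cleaner alternative'' also works, and more simply than you describe it. Suppose $J_k(t_1)=0$ for some $t_1>0$. Then $t_1$ is an interior minimum of the nonnegative function $J_k$, so $\dot J_k(t_1)=0$. Since $S_k(t_1)=1$, this forces $\Theta_k(t_1)=0$, that is $J_s(t_1)=0$ whenever $q_{ks}>0$. Propagating along the strongly connected graph of $Q$ gives $X(t_1)=E_0$ at that single instant. Backward uniqueness for the equilibrium $E_0$ then contradicts $X(0)\notin E_0$, so no vanishing on an interval is needed.
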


	  In other words, if each class can be infected by others - possibly in several steps - then the presence of the disease in any class leads to its immediate spread across all classes.

	An important feature of this model is that, denoting $T_k=I_k+D_k=1-S_k$, we find the classical $SIS$ model with class of hosts.
	\begin{equation}\label{SIS}
		\begin{cases}
			\dfrac{d}{dt}{S}_k=r_k(1 - S_k)-\beta_k S_k \Theta_k +\gamma_k T_k\\
			\dfrac{d}{dt}{T}_k= \beta_k S_k \Theta_k  -(\gamma_k+r_k) T_k \\
		\end{cases}
	\end{equation}
	with $\Theta_k=\sum_{s\in\K} q_{ks} T_s=\mathbb{E}_{q_{k\cdot}} (T)$, that is, $\Theta_\cdot=(\Theta_k)_{k\in\K}$ and $T_\cdot=(T_k)_{k\in\K}$ are related through the linear relation
	\[\Theta_\cdot= Q T_\cdot.\]

\begin{rmk}[An important particular case]
	A particular case of interest arises when the probability for a class $k$ to be in contact with class $s$ is independent of $k$. In this situation, we denote for simplicity $q_{ks}=q_s$ for any $(k,s)\in\K^2$ and we have
	$Q=\bun q^T$.
	It comes
	\[
	\Theta=\sum_{s\in\K} q_s T_s=\mathbb{E}_q(T).
	\]
	
	This scalar $\Theta\in[0,1]$ drives the infection throughout the whole system.
	
	This setting simplifies several interpretations, results, and even some proofs. Throughout this text, we will often restate the results in this particular situation.
\end{rmk}

	In our model, we divided the infected host $T_k$ into single-infected hosts $I_k$ and co-infected hosts $D_k$. Consequently, the classical results from SIS models with classes of hosts can be directly applied.  
	
		The dynamics of this model is well known from the pioneering paper \cite{SISk76} and has been revisited in \cite{SISkLyap2007} using Lyapunov function and in \cite{DDZ2022} within a framework of a continuum of classes $\K$.

	The result is the following. Denote  $Q=\left(q_{ks}\right)_{(k,s)\in\K^2}$, $\mathcal{R}_k=\dfrac{\beta_k}{r_k+\gamma_k}$ and $\mathcal{R}=(\mathcal{R}_k)_{k\in\K}$. With these notations, the next generation matrix is $\text{diag}(\mathcal{R} )Q=\left(\mathcal{R}_k q_{ks}\right)_{(k,s)\in\K^2}$. The threshold for the persistence of the disease is given by the spectral radius of this next-generation matrix : 
	\begin{equation}\label{def:R0}\mathcal{R}_0=\rho(\text{diag}(\mathcal{R} )Q ).\end{equation}
	
	As expected, $\mathcal{R}_0$ acts as a threshold for the endemic persistence of the disease. The full analysis follows in the next subsection.
\begin{rmk}\label{rmk:pi}
	In general, since the next generation matrix is a positive irreducible matrix, the Perron-Frobenius theorem implies that $\mathcal{R}_0$ is the principal eigenvalue of this matrix, this eigenvalue has multiplicity one, and this is the only eigenvalue such that the (right and left) eigenvectors have positive entries.
	
	In particular, up to a multiplicative constant, there exists a unique vector $\zeta=(\zeta_k)_{k\in K}$ and a vector $\mathfrak{I}=(\mathfrak{I}_k)_{k\in K}$ such that  $\zeta_k>0$ and $\mathfrak{I}_k>0$ for each $k\in\K$ and 
	\[\zeta^T\text{diag}(\mathcal{R} )Q =\mathcal{R}_0 \zeta^T\text{ and } \text{diag}(\mathcal{R} )Q \mathfrak{I}=\mathcal{R}_0 \mathfrak{I}. \]  
	
	In the particular case $Q=(q_s)_{(s,k)\in\K^2}=\mathbf{1} q^T $ then we have 
	\[q^T\text{diag}(\mathcal{R} )Q=q^T\text{diag}(\mathcal{R} )\mathbf{1} q^T=\left(\sum_{s\in\K} q_s \mathcal{R}_s\right) q^T\]
	so $q^T=\zeta^T$ is a positive (left) eigenvector of the positive irreducible matrix $\text{diag}(\mathcal{R})Q$ and then 
	
	\begin{equation}\mathcal{R}_0=    \sum_{s\in\K}{ q_s \mathcal{R}_s}=\mathbb{E}_q\left(\mathcal{R}_k\right).\end{equation}
    \end{rmk}

    \subsection{Analysis of the equilibria}
	\begin{prop}\label{prop:E0stable}[Stability of the DSE] Let $E_0=(\bun,\bze,\bze)$ be the Disease Free Equilibium (DSE).\\
		\begin{itemize} 
		\item If $\mathcal{R}_0>1$ then $E_0$ is unstable.
		\item If $\mathcal{R}_0<1$ then $E_0$ is linearly stable.
		\end{itemize}
		\end{prop}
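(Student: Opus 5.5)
We linearize \eqref{SIDS} at $E_0$ and compute the spectral bound of the Jacobian, reducing it to the sign of $\mathcal{R}_0-1$ through the standard next-generation comparison. Since $S_k=1-I_k-D_k$ on $\Omega$, it suffices to work in the variables $(I_k,D_k)_{k\in\K}$. Equivalently, we may use the reduced SIS system \eqref{SIS} for $T_k=I_k+D_k$ together with the $D_k$ equation. At $E_0$ we have $\Theta_k=0$, and the term $\sigma_k\beta_k I_k\Theta_k$ is quadratic. The linearization is therefore block triangular:
\[
\dot T_\cdot = \bigl(\diag(\beta)Q-\diag(r+\gamma)\bigr)T_\cdot,\qquad \dot D_k=-(r_k+\gamma_k)D_k+\text{(terms in }T\text{ of order two)}.
\]
Hence the spectrum of the Jacobian is the spectrum of $J:=\diag(\beta)Q-\diag(r+\gamma)$, together with the eigenvalues $-(r_k+\gamma_k)<0$, which come from the $D$ block (the coupling between $I$ and $D$ vanishes at first order).

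The key step is the classical equivalence
\[
\mathrm{sign}\,\alpha(J)=\mathrm{sign}\bigl(\rho(\diag(\mathcal{R})Q)-1\bigr)=\mathrm{sign}(\mathcal{R}_0-1).
\]
To prove it, write $J=F-V$ with $F=\diag(\beta)Q\geq0$ irreducible and $V=\diag(r+\gamma)$ a positive diagonal matrix. Then $J$ is an irreducible Metzler matrix, and $FV^{-1}$ is similar to $V^{-1}F=\diag(\mathcal{R})Q$, so $\rho(FV^{-1})=\mathcal{R}_0$. I would then either invoke the van den Driessche–Watmough lemma directly or argue via Perron–Frobenius (Appendix \ref{Appendix:Metzler}). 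For the latter, take the positive right eigenvector $\mathfrak{I}$ of $\diag(\mathcal{R})Q$ from Remark \ref{rmk:pi} and set $w=\mathfrak{I}$. Then
\[
Jw = V\bigl(\diag(\mathcal{R})Q-\mathbb{I}\bigr)w=(\mathcal{R}_0-1)Vw .
\]
If $\mathcal{R}_0>1$, then $Jw>0$ with $w>0$. By the subinvariance (Collatz–Wielandt) characterization of the principal eigenvalue of an irreducible Metzler matrix, this gives $\alpha(J)>0$. If $\mathcal{R}_0<1$, then $Jw<0$ and hence $\alpha(J)<0$.

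Concretely, for the Collatz–Wielandt step, pair $Jw$ with the positive left eigenvector $v$ of $J$ associated with $\alpha(J)$:
\[
\alpha(J)\,v^Tw=v^TJw=(\mathcal{R}_0-1)\,v^TVw .
\]
Since $v>0$, $w>0$ and $V$ is positive diagonal, both $v^Tw$ and $v^TVw$ are positive. Therefore $\alpha(J)$ has the same sign as $\mathcal{R}_0-1$. This argument is short, and I expect it to be the only step requiring care.

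We then conclude as follows. If $\mathcal{R}_0<1$, every eigenvalue of the full Jacobian has negative real part, so $E_0$ is linearly stable. If $\mathcal{R}_0>1$, the Jacobian has a positive real eigenvalue, so $E_0$ is unstable by the principle of linearized stability. For the latter case, note that the unstable eigenvector $w>0$ points into $\Omp$, so the instability is realized within the biologically relevant set $\Omega$.
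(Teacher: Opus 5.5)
Your proposal is correct and follows essentially the paper's proof. Both arguments use a block-triangular Jacobian that reduces linear stability to the sign of $\alpha(\mathbb{J})$ for $\mathbb{J}=\diag(\beta)Q-\diag(r+\gamma)$, and both evaluate $\mathbb{J}$ on the Perron eigenvector of $\diag(\mathcal{R})Q$ to get $\mathbb{J}\mathfrak{T}=(\mathcal{R}_0-1)\diag(r+\gamma)\mathfrak{T}$. The differences are minor: you work in $(T,D)$ on $\Omega$ rather than $(\Sigma,T,I)$, which drops the harmless eigenvalues $-r_k$. You also conclude with the left-eigenvector pairing $\alpha(\mathbb{J})\,v^Tw=(\mathcal{R}_0-1)\,v^T\diag(r+\gamma)w$ instead of Theorem~\ref{Th:A2}. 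That pairing yields the strict inequality $\alpha(\mathbb{J})>0$ when $\mathcal{R}_0>1$ directly, whereas Theorem~\ref{Th:A2}(v) as stated only gives $\alpha(\mathbb{J})\geq 0$ without a small extra perturbation step.
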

				\begin{proof} We start to rewrite the system with the  variables 
				$\left(\Sigma_\cdot,T_\cdot,I_\cdot\right)$ where $\Sigma_k=S_k+I_k+D_k$ and $T_k=I_k+D_k$. With these variables, the system  \eqref{SIDS} reads equivalently as
				\begin{equation}\label{eq:SIDrewritten}\begin{cases}
					\dfrac{d}{dt}{\Sigma_k}=r_k(1-\Sigma_k),\;k\in\K\\
					\dfrac{d}{dt}{T_k}=\beta_k (\Sigma_k-T_k) \Theta_k-(r_k+\gamma_k) T_k,\;k\in\K\\
					\dfrac{d}{dt}{I_k}=\beta_k (\Sigma_k-T_k) \Theta_k -\sigma_k \beta_k I_k \Theta_k-(\gamma_k +r_k) I_k,\;k\in\K\\

				\end{cases}\end{equation}
				where $\Theta_k=\sum_{s\in\K} q_{ks} T_s$

				The steady state $E_0$ reads $(\Sigma_k,T_k,I_k)=(1,0,0)$, $\forall k\in\K$.

				This choice allows us to write the Jacobian matrix at $E_0$ in a block structure, where each block is a square matrix of size $|\K|\times |\K|$:
				\[Jac=\begin{pmatrix}
					-\text{diag}(r)&\mathbb{O}&\mathbb{O}\\
					\mathbb{O}&\mathbb{J}&\mathbb{O}\\
					\mathbb{O}&\mathbb{B}&-\text{diag}(\gamma+r)\\
				\end{pmatrix}\]
				with $\mathbb{O}=\mathbb{O}_{|\K|}$, $\mathbb{B}=\text{diag}(\beta) Q=(\beta_k q_{ks})_{(k,s)\in\K^2}$ and 
				$\mathbb{J}=\mathbb{B}-\text{diag}(\gamma+r).$
				
				It follows that $E_0$ is linearly stable if and only if the spectrum of $\mathbb{J}$ lies entirely in the left half-plane.
				3
				$\mathbb{J}$ is a Metzler matrix and we may apply theorems \ref{Th:A1} and \ref{Th:A2}.
				In particular, the spectrum of $\mathbb{J}$ lies entirely in the left half-plane if and only if its principal eigenvalue  $\alpha(\mathbb{J})$ is negative.
				
				Recall the notations  $\mathcal{R}_k=\dfrac{\beta_k}{r_k+\gamma_k}>0$ and  $\mathcal{R}=\left(\mathcal{R}_k\right)_{k\in\K}$.
				Since $\mathcal{R}_0=\rho \left(\text{diag}(\mathcal{R})Q\right)$, by the Perron-Frobenius theorem, there exists $\mathfrak{T}\in (0,+\infty)^{p} $ such that 
				
				\[\text{diag}(\mathcal{R})Q \mathfrak{T}=\mathcal{R}_0\mathfrak{T}. \]
				Multiplying by $\text{diag}(\gamma+r)$ and recombining we deduce
				
				\[\mathbb{J}\mathfrak{T}=\mathbb{B} \mathfrak{T} -\text{diag}(\gamma+r)\mathfrak{T}=(\mathcal{R}_0-1)\diag(\gamma+r)\mathfrak{T}\]
				
				If $\mathcal{R}_0-1<0$ then by  theorem \ref{Th:A2}-(iii) we have $\alpha(\mathbb{J})<0$.
				
				If $\mathcal{R}_0-1>0$ then by theorem \ref{Th:A2}-(v) we have $\alpha(\mathbb{J})> 0$ which ends the proof.
			\end{proof}
			Classically, when $\mathcal{R}_0>1$ there exists (a unique) positive steady state $E^*$ which is linearly stable.
			\begin{prop}\label{prop:steadystate}
					Assume that $\mathcal{R}_0=\rho(\text{diag}(\mathcal{R})Q )>1$.
 There exist a unique endemic equilibrium $E^*=(S_k^*,I_k^*,D_k^*)_{k\in\K}\in\Omp$. Moreover, $E^*$ is linearly stable.
			\end{prop}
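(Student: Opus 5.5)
The plan is to reduce everything to the classical multi-group SIS system \eqref{SIS} in the variables $T_k=I_k+D_k$, settle existence, uniqueness and stability there, and then recover $I_k^*$ and $D_k^*$ from a triangular structure.

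\textbf{Step 1: reduction to $T$.} Use the rewritten system \eqref{eq:SIDrewritten}. Any steady state has $\Sigma_k=1$, so $T^*$ must solve
\[
\beta_k(1-T_k)\,(QT)_k=(r_k+\gamma_k)T_k,\qquad k\in\K .
\]
Equivalently, $T_k=\dfrac{\mathcal{R}_k\Theta_k}{1+\mathcal{R}_k\Theta_k}$ with $\Theta=QT$. This is a fixed-point problem $T=F(T)$ for the map
\[
F(T)_k=\frac{\mathcal{R}_k(QT)_k}{1+\mathcal{R}_k(QT)_k}.
\]
The map $F$ is monotone and strictly sublinear: $F(\lambda T)>\lambda F(T)$ componentwise for $\lambda\in(0,1)$ and $T>\bze$. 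Its derivative at $\bze$ is $\diag(\mathcal{R})Q$, whose spectral radius is $\mathcal{R}_0>1$.

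\textbf{Step 2: existence and uniqueness of $T^*$.} Start from $\eps\mathfrak{T}$, where $\mathfrak{T}$ is the Perron vector of Remark \ref{rmk:pi}. For $\eps$ small this is a subsolution, $F(\eps\mathfrak{T})\ge\eps\mathfrak{T}$. The vector $\bun$ is a supersolution, since $F(\bun)<\bun$. Monotone iteration then gives a fixed point $T^*$ with $\bze<T^*<\bun$.

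For uniqueness, take two positive fixed points $T,T'$. Let $\lambda=\min_k T_k/T'_k$ and suppose $\lambda<1$. Then
\[
T\ge F(\lambda T')>\lambda F(T')=\lambda T' ,
\]
where the strict inequality holds componentwise because $Q$ is irreducible, so $QT'>\bze$. This contradicts the minimality of $\lambda$, hence $\lambda\ge 1$. By symmetry $T=T'$.

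\textbf{Step 3: recovering $I^*$ and $D^*$.} With $\Theta^*=QT^*>\bze$ fixed, the $I_k$ equation is scalar and linear, which gives
\[
I_k^*=\frac{\beta_k(1-T_k^*)\Theta_k^*}{\sigma_k\beta_k\Theta_k^*+\gamma_k+r_k},\qquad D_k^*=T_k^*-I_k^*,\qquad S_k^*=1-T_k^*.
\]
One checks that $D_k^*>0$, which follows from the $D$ equation at equilibrium. Since $S_k^*<1$, the state lies in $\Omp$. Uniqueness of $E^*$ follows from uniqueness of $T^*$.

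\textbf{Step 4: linear stability.} In the variables $(\Sigma,T,I)$ the Jacobian at $E^*$ is block lower triangular, as in the proof of Proposition \ref{prop:E0stable}:
\begin{itemize}
\item the $\Sigma$ block is $-\diag(r)$;
\item the $I$ block is $-\diag(\sigma\beta\Theta^*+\gamma+r)$, which is also stable;
\item the $T$ block is
\[
\mathbb{J}^*=\diag\bigl(\beta(1-T^*)\bigr)Q-\diag\bigl(\beta\Theta^*+\gamma+r\bigr).
\]
\end{itemize}
So only $\mathbb{J}^*$ needs work. It is an irreducible Metzler matrix, so Theorem \ref{Th:A2} applies.

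The equilibrium relation gives
\[
\diag\bigl(\beta(1-T^*)\bigr)QT^*=\diag(\gamma+r)\,T^* .
\]
Hence
\[
\mathbb{J}^*T^*=-\diag(\beta\Theta^*)\,T^*<\bze ,
\]
componentwise, with $T^*>\bze$. Theorem \ref{Th:A2}-(iii) then gives $\alpha(\mathbb{J}^*)<0$, and linear stability follows.

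\textbf{Main obstacle.} I expect the hardest part to be the uniqueness argument in Step 2. The delicate points are justifying the strict inequality via irreducibility and confirming that the subsolution inequality $F(\eps\mathfrak{T})\ge\eps\mathfrak{T}$ really holds for small $\eps$. Alternatively, both existence and uniqueness can be cited from \cite{SISk76,SISkLyap2007}, since \eqref{SIS} is exactly the classical model treated there.
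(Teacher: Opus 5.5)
Your proof is correct. Step 4 is exactly the paper's stability argument: the same $(\Sigma,T,I)$ variables, the same block lower triangular Jacobian, and the same identity $\mathbb{J}^*T^*=-\diag(\beta\Theta^*)T^*<\bze$, closed by Theorem~\ref{Th:A2}-(iii). The paper writes this identity as $-\diag(r+\gamma)\diag(\mathcal{R}\Theta^*)T^*$, which is the same thing since $\diag(r+\gamma)\diag(\mathcal{R})=\diag(\beta)$.

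The genuine difference is in existence and uniqueness. The paper does not prove these for a general irreducible $Q$: it cites \cite{SISkLyap2007}. It gives a self-contained argument only for $Q=\bun q^T$, where the fixed-point problem collapses to the scalar equation \eqref{implicitTheta}, $1=\sum_k q_k\mathcal{R}_k/(1+\Theta\mathcal{R}_k)$, whose right-hand side is strictly decreasing. Your route works for arbitrary irreducible $Q$. You run a monotone iteration between the subsolution $\eps\mathfrak{T}$ and the supersolution $\bun$, then get uniqueness from strict sublinearity and the ratio $\lambda=\min_k T_k/T'_k$. This is the standard cone (concave-operator) argument, so it supplies exactly the part the paper delegates to the literature. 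The paper's scalar route buys something else: an explicit characterization of $\Theta^*$. That characterization also gives nonexistence of an endemic state when $\mathcal{R}_0\le 1$, and it is reused later (Corollary~\ref{prop:SIDS}, Lemma~\ref{lemma:net0}); your argument gives no such formula.

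The two points you flagged as delicate are unproblematic:
\begin{itemize}
\item The subsolution inequality reduces to $\eps\mathcal{R}_0\mathfrak{T}_k\le\mathcal{R}_0-1$, which holds for small $\eps$.
\item The strict inequality in the uniqueness step only needs $(QT')_k>0$. This holds because $T'>\bze$ and every row of $Q$ sums to $1$.
\end{itemize}
A minor notational point: the right Perron vector is named $\mathfrak{I}$ in Remark~\ref{rmk:pi}; $\mathfrak{T}$ is the name used in the proof of Proposition~\ref{prop:E0stable}.
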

			\begin{proof}
				
				{\em (i) Existence and uniqueness.} 	This result is well known, see for instance \cite{SISkLyap2007}.
                
                Here,  we propose an independent proof in the particular  case $Q=\bun q^T.$\\
                Assume that there exists a steady state in $\Omp$. Then clearly $\Theta=\E_q(T_k)\in(0,1)$. From \eqref{SIS} we have for each $k\in\K$:
				\begin{equation}\label{Tkstar}
					T_k=\mathcal{R}_k (1-T_k)\Theta.
				\end{equation}
				Thus
				\[T_k=\dfrac{\Theta \mathcal{R}_k}{1+\Theta\mathcal{R}_k}.\]
				Since $\Theta=\E_q(T_k)$, multiplying \eqref{Tkstar} by $q_k$ and summing over $k\in K$ yields
				$\Theta=\Theta\sum_{k\in\K} q_k\mathcal{R}_k  (1-T_k).$
				Since $\Theta>0$ and writing $(1-T_k)=\dfrac{1}{1+\Theta \mathcal{R}_k}$,  this implies the implicit equation on $\Theta$:
				\begin{equation}\label{implicitTheta}
					1=\sum_{k\in\K} q_k\dfrac{\mathcal{R}_k}{1+\Theta \cR_k} :=F(\Theta)
				\end{equation}			
				The function $F$ is decreasing and satisfies $F(1)=\E_q \left(\dfrac{\cR_k}{1+\cR_k}\right)<1$. Therefore, it exists $\Theta\in(0,1)$ such that $F(\Theta)=1$ if and only if $F(0)=\E_q (\cR_k)=\cR_0>1$ and if such a $\Theta$ exists, it is unique. 
				
				Now assume that $\cR_0>1$ and  denote  $\Theta^*$ this value.\\
				The explicit values of  $T_k^*$ and  $S_k^*=1-T_k^*$ is given above in the proof and the values of $I_k^*$ and $D_k^*$ follows directly from \eqref{SIDS}.\\
			{\em (ii) Stability.} We prove it in general.

            As above, in the proof of the previous proposition, we rewrite equivalently the system for the variable $(\Sigma_k,T_k,I_k)_{k\in\K}$ obtaining  \eqref{eq:SIDrewritten}. The steady state  $E^*$ reads  $(\Sigma_k,T_k,I_k)=(1,T_k^*,I_k^*),\; \forall k\in\K$. We denote also $\Theta^*=QT^*$.The Jacobian matrix at $E^*$ reads 

            \[Jac=\begin{pmatrix}
					-\text{diag}(r)&\mathbb{O}&\mathbb{O}\\
					\textrm{diag}(\beta \Theta^*)&\mathbb{J}&\mathbb{O}\\
					\textrm{diag}(\beta \Theta^*)&\mathbb{B}&-\mathrm{diag}\left(\gamma+r+\sigma  \beta\Theta^*\right)\\
				\end{pmatrix}.\]
                With $\mathbb{B}=-\mathrm{diag}(\beta T^*+\sigma \beta I^*) Q$ and $\mathbb{J}=-\mathrm{diag}(\beta \Theta^*+r+\gamma)+\mathrm{diag}(\beta (1-T^*)) Q$.

            From the block triangular structure,  we see that $\alpha(Jac)<0$ if and only if $\alpha(\mathbb{J})<0.$

            Remark that $\mathbb{J}$ is a Metzler matrix. 

            From the identities $\diag(r+\gamma)\diag(\mathcal{R})=\diag(\beta)$ and  $T^*=\diag(\mathcal{R} (1-T^*)) QT^*$, we get
        \[\mathbb{J}T^*=-\diag(r+\gamma)\left(\diag(\mathcal{R}\Theta^*+1)T^*+\diag\left(\\\mathcal{R} (1-T^*)\right)QT^*\right)=-\diag(r+\gamma)\left(\diag(\mathcal{R}\Theta^*)\right)T^*<0.\]
        The conclusion follows from Theorem \ref{Th:A2}-(iii).
			\end{proof}
		Finally, like in the standard SIS, we can completely describe the dynamics. 
	\begin{thm}\label{th:SISqsk}
		Let $\mathcal{R}_0=\rho(\text{diag}(\mathcal{R})Q )$.
		\begin{itemize}
			\item If $\mathcal{R}_0\leq 1$ then the DSE  $E_0$ is the only equilibria and is globally stable in $\Omega$.
			\item If $\mathcal{R}_0>1$ then there exist a unique endemic equilibrium $(S_k^*,I_k^*,D_k^*)_{k\in\K}$ which is globally stable in $\Omega\setminus \{E_0\}$.
			Moreover, we have:
			\[S_k^*=\dfrac{1}{1+\mathcal{R}_k \Theta_k^*},\;T_k^*=\dfrac{\mathcal{R}_k \Theta_k^*}{1+\mathcal{R}_k\Theta_k^*},\; I_k^*=\dfrac{T_k^*}{1+\sigma_k\mathcal{R}_k \Theta_k^*}\text{ and }D_k^*=\dfrac{\sigma_k \mathcal{R}_k\Theta_k^*T_k^*}{1+\sigma_k\mathcal{R}_k \Theta_k^*}.\]
			Lastly, the convergence is exponentially fast :
			there exists $\eta>0$ and $C>0$ such that 
			
			\[|S_k(t)-S_k^*|+|I_k(t)-I_k^*|+|D_k(t)-D_k^*|\le C e^{-\eta t}.\]
		\end{itemize}
		
	\end{thm}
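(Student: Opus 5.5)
The proof follows the triangular structure already used in Propositions \ref{prop:E0stable} and \ref{prop:steadystate}. In the variables $(\Sigma_k,T_k,I_k)_{k\in\K}$ of \eqref{eq:SIDrewritten}, $\Sigma_k$ solves $\dot\Sigma_k=r_k(1-\Sigma_k)$ and is identically $1$ on $\Omega$. The $T$-subsystem is then exactly the classical multi-group SIS model \eqref{SIS}, closed in $T_\cdot$ through $\Theta_\cdot=QT_\cdot$. Once $T_\cdot(t)$ is known, the $I$-equation is a scalar linear nonautonomous ODE in each class:
\[
\dot I_k=\beta_k(1-T_k)\Theta_k-\left(\sigma_k\beta_k\Theta_k+\gamma_k+r_k\right)I_k .
\]
The plan is therefore (1) to establish global dynamics of the $T$-system, (2) to transfer convergence to $I_k$ and $D_k=T_k-I_k$, and (3) to compute the explicit formulas.

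\textbf{Step (1).} This is the classical result of \cite{SISk76,SISkLyap2007}, and I would argue it by monotonicity. The map $T\mapsto \mathrm{diag}(\beta(1-T))QT-\mathrm{diag}(\gamma+r)T$ is cooperative on $[0,1]^p$ because its Jacobian is Metzler, and it is irreducible because $Q$ is. It is also strictly sublinear: for $\lambda\in(0,1)$ and $T\gg0$ we have $f(\lambda T)>\lambda f(T)$, since $1-\lambda T_k>1-T_k$. Hirsch's/Zhao's theorem for sublinear cooperative systems then gives the dichotomy: if $\mathcal{R}_0\le1$, $0$ is globally stable; if $\mathcal{R}_0>1$, there is a unique positive equilibrium attracting $[0,1]^p\setminus\{0\}$. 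Instability of $0$ for $\mathcal{R}_0>1$ comes from Proposition \ref{prop:E0stable}, and the earlier proposition guarantees positivity of $T(t)$ for $t>0$.

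For $\mathcal{R}_0\le1$ I would give a direct argument. Take the positive left Perron vector $\zeta$ of $\mathrm{diag}(\mathcal{R})Q$ and set $V=\sum_k \zeta_k T_k/((r_k+\gamma_k)\mathcal{R}_k)$, or more simply use the Lyapunov function of \cite{SISkLyap2007}. Then $\dot V\le(\mathcal{R}_0-1)(\cdot)-\sum(\text{terms in }T_k\Theta_k)\le0$, and LaSalle yields $T\to0$. The critical case $\mathcal{R}_0=1$, where no exponential rate is claimed, is handled by this same LaSalle argument, since the quadratic term vanishes only at $T=0$.

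\textbf{Step (2).} If $T(t)\to T^*$, then $\Theta(t)\to\Theta^*$, and the scalar linear equation for $I_k$ has decay rate bounded below by $\gamma_k+r_k>0$. A Gronwall/variation-of-constants argument then shows $I_k(t)\to I_k^*$, where $I_k^*$ solves $\beta_kS_k^*\Theta_k^*=(\sigma_k\beta_k\Theta_k^*+\gamma_k+r_k)I_k^*$. The same argument applies when $T\to0$.

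\textbf{Step (3).} Dividing by $r_k+\gamma_k$ gives $I_k^*=\mathcal{R}_k\Theta_k^*S_k^*/(1+\sigma_k\mathcal{R}_k\Theta_k^*)$. The equilibrium relation $T_k^*=\mathcal{R}_k\Theta_k^*S_k^*$ then gives $S_k^*=1/(1+\mathcal{R}_k\Theta_k^*)$ and the stated forms of $T_k^*$, $I_k^*$ and $D_k^*=T_k^*-I_k^*$.

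\textbf{Exponential rate.} Proposition \ref{prop:steadystate} gives $\alpha(Jac)<0$ at $E^*$, so $E^*$ is locally exponentially stable. Global convergence brings every trajectory into this basin in finite time, so a uniform constant $C$ exists for initial data in compact subsets of $\Omega\setminus\{E_0\}$.

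\textbf{Main obstacle.} The hardest part is global attractivity of $T^*$ for $\mathcal{R}_0>1$. If I avoid citing the monotone-systems theorem, I would use the Lyapunov function of \cite{SISkLyap2007}:
\[
V=\sum_k c_k\left(T_k-T_k^*-T_k^*\ln\frac{T_k}{T_k^*}\right),
\]
with weights $c_k$ from the left Perron-type vector of the Laplacian of $(\beta_kS_k^*q_{ks}T_s^*)$, combined with the graph-theoretic cycle argument. A further subtlety is that uniformity of $C$ cannot hold on all of $\Omega\setminus\{E_0\}$, since data near $E_0$ leave slowly. I would therefore state the constant as depending on the initial condition, or as uniform on compact subsets.
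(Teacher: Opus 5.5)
Your proposal is correct and follows essentially the paper's route. The paper also takes the global dynamics of the multi-group SIS system for $T_\cdot$ from the literature (Lyapunov functions in \cite{SISkLyap2007}, monotonicity in \cite{DDZ2022}) and extends it to $(I_k,D_k)$ as ``straightforward'', which your triangular reduction makes explicit. It then reads off the formulas and obtains the exponential rate from global convergence plus the linear stability of Proposition~\ref{prop:steadystate}. Your caveat that $C$ can only be uniform on compact subsets of $\Omega\setminus\{E_0\}$ is a correct sharpening of the statement.

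There is one fixable slip in your auxiliary direct argument for $\mathcal{R}_0\le 1$. With $\zeta$ the left Perron vector of $\diag(\mathcal{R})Q$, the weights must be $\zeta_k/(r_k+\gamma_k)$, which gives $\dot V=(\mathcal{R}_0-1)\zeta^T T-\sum_k\zeta_k\mathcal{R}_kT_k\Theta_k$. Your weights $\zeta_k/\beta_k$ do not work: for $Q=\bun q^T$, $q=(\tfrac12,\tfrac12)$, $\mathcal{R}=(1.5,0.3)$ and $T=(\delta,0)$ one gets $\dot V=\delta/6-\delta^2/4>0$. Also, at $\mathcal{R}_0=1$ the quadratic term can vanish at some $T\neq 0$ (e.g.\ $T=(\delta,0)$ with $q_{11}=0$). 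So LaSalle needs the positivity proposition to show that the largest invariant set in $\{\dot V=0\}$ is $\{0\}$; the claim that the term ``vanishes only at $T=0$'' is not enough.
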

\begin{proof}
The proof of global convergence for the SIS model can be found in \cite{SISkLyap2007}, using Lyapunov functions, and in \cite{DDZ2022}, which relies directly on the monotone structure of the system.

The extension to the coinfection model is straightforward.

When $\mathcal{R}_0>1$, once we know that there exists a unique $\Theta^*=(\Theta^*_k)_{k\in\K}$ at the endemic equilibrium, the formulas for $S_k^*, I_k^*, D_k^*$ follow directly.

Finally, exponential convergence follows from global convergence together with the linear stability of the endemic equilibrium.
\end{proof}
	We state an easy consequence that is useful for the next section.
	\begin{cor}\label{corxi}
		Let $\mathcal{R}_0=\rho(\text{diag}(\mathcal{R})Q )$ and assume that $\mathcal{R}_0>1$. We have
		
		\[\text{diag} (\mathcal{R} S^*) Q T^*=T^*\]
		
		Thus, $\rho\left(\text{diag} (\mathcal{R}S^*) Q\right)=1$, and therefore there exists a unique vector $\q=(\q_k)_{k\in\K}$ satisfying $\sum_{k\in\K}\q_k=1$ and $\q_k>0$ for each $k\in\K$, such that
		\[\q^T\text{diag} (\mathcal{R} S^*) Q=\q^T.\]
		
		In particular, if $q_{ks}=q_s$ is independent of $k$, that is $Q=q^T \bun$, then 
		$\q=q.$
	\end{cor}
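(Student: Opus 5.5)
The plan has three steps: derive the fixed-point identity from the $T$-equation of \eqref{SIS} at equilibrium, use Perron--Frobenius to read off $\rho=1$ together with a unique positive left eigenvector, and finally verify the special case $Q=\bun q^T$ by a direct computation.

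\textbf{Step 1: the identity.} Set $\frac{d}{dt}T_k=0$ in \eqref{SIS} at the endemic equilibrium $E^*$ of Theorem \ref{th:SISqsk}. This gives
\[
\beta_k S_k^*\Theta_k^*=(r_k+\gamma_k)T_k^*,
\qquad \Theta^*=QT^*.
\]
Dividing by $r_k+\gamma_k$ yields $\mathcal{R}_k S_k^*(QT^*)_k=T_k^*$ for every $k\in\K$, that is,
\[
\diag(\mathcal{R}S^*)QT^*=T^*.
\]
This is the same relation already used in the stability part of Proposition \ref{prop:steadystate}.

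\textbf{Step 2: spectral radius and left eigenvector.} Since $E^*\in\Omp$, we have $T^*>0$ and $S^*>0$; the explicit formula $S_k^*=1/(1+\mathcal{R}_k\Theta_k^*)$ confirms $S_k^*>0$. Hence $\diag(\mathcal{R}S^*)$ is a positive diagonal matrix. It follows that $A:=\diag(\mathcal{R}S^*)Q$ is non-negative and irreducible, with the same zero pattern as $Q$.

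By Step 1, $T^*$ is a positive eigenvector of $A$ with eigenvalue $1$. The Perron--Frobenius theorem (Remark \ref{rmk:pi}, Appendix \ref{Appendix:Metzler}) states that the only eigenvalue of an irreducible non-negative matrix admitting a positive eigenvector is its spectral radius. Therefore $\rho(A)=1$.

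Applying Perron--Frobenius to $A^T$, which is also irreducible with $\rho(A^T)=1$, the eigenvalue $1$ is simple and has a left eigenvector with positive entries, unique up to scaling. Normalizing it so that $\sum_k\q_k=1$ gives a unique $\q$ with $\q_k>0$ and $\q^TA=\q^T$.

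\textbf{Step 3: the case $Q=\bun q^T$.} Here $\Theta_k^*=\Theta^*$ for all $k$, and
\[
\q^TA=\Bigl(\sum_k \q_k\mathcal{R}_kS_k^*\Bigr)q^T .
\]
So any left eigenvector is proportional to $q^T$; the candidate is $\q=q$. It remains to check that the scalar equals $1$:
\[
\sum_k q_k\mathcal{R}_kS_k^*=\sum_k q_k\frac{\mathcal{R}_k}{1+\Theta^*\mathcal{R}_k}=F(\Theta^*)=1,
\]
which is exactly the implicit equation \eqref{implicitTheta}. This also follows abstractly from $\rho(A)=1$. Since $q$ is already a probability vector, $\q=q$ by uniqueness.

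The statement writes $Q=q^T\bun$, which should be read as $\bun q^T$, consistent with the preceding remark.

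\textbf{Main obstacle.} The only delicate point is justifying irreducibility and positivity so that Perron--Frobenius applies. This reduces to $S_k^*>0$ and $T_k^*>0$, which come from $E^*\in\Omp$ together with the explicit formulas.
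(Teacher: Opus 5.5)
Your proof is correct and follows the paper's argument. You read off $\diag(\mathcal{R}S^*)QT^*=T^*$ from the $T$-equation at the endemic equilibrium, then invoke Perron--Frobenius for the irreducible non-negative matrix $\diag(\mathcal{R}S^*)Q$. Your explicit rank-one check of the case $Q=\bun q^T$ via \eqref{implicitTheta}, and your reading of the typo $Q=q^T\bun$ as $\bun q^T$, just fill in what the paper calls a direct application.
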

	
	\begin{proof}
		By Theorem \ref{th:SISqsk}, if $\mathcal{R}_0>1$ then the system \eqref{SIS} admits a steady state $(S_k^*,T_k^*)_{k\in\K}$. The second equation reads
		\[\forall k\in\K,\; T_k^*= \mathcal{R}_k  S_k^* \Theta_k^*=  \mathcal{R}_k  S_k^* (QT^*)_k.\]
		
		In other words, we have
		\[\big[\text{diag} (\mathcal{R} S^*) Q \big]T^*=T^*.\]
		
		Hence $T^*$ is an eigenvector of $\text{diag} (\mathcal{R} S^*) Q$ associated with the eigenvalue $1$.  
		Since $\text{diag} (\mathcal{R} S^*) Q$ is a nonnegative and irreducible matrix, the remaining results are direct applications of the Perron–Frobenius theorem.
	\end{proof} 
	
	\begin{rmk}
	If $Q=\bun q^T$, and if  $\mathcal{R}_0=\mathbb{E}_q (\mathcal{R}_s)>1$ then the formula of theorem \ref{th:SISqsk} holds true with $\Theta_k^*=\Theta^*$ for each $k\in\K$ where $\Theta^*$ is characterized by 
	\begin{equation}\label{eq:theta}
		1=\sum_{k\in\K} \dfrac{q_k \mathcal{R}_k}{1+\mathcal{R}_k\Theta^*}.
		\end{equation}
\end{rmk}

	\paragraph{Motivations.} This general host structure can be applied in various situations where hosts exhibit specific characteristics and co-infection plays a significant role. The main assumptions are: $(i)$ within each class, the host population remains constant; and
	$(ii)$ there is no flux between host classes.
	In Section \ref{sec:application}, we explore three specific applications: (1) A case involving two markedly distinct host classes — such as two species — where all expressions are explicit. (2) A vaccination scenario, where host classes represent different responses to the vaccine. (3) A heterogeneous network model under a mean-field transmission assumption.
    
	\paragraph{Link with the literature} The model \eqref{SIDS} has been extensively studied since the seminal paper by \cite{SISk76}. Essentially, a strong property of the system is that it is monotone \cite{smith1995monotone}. This can be used to show that the situation is similar to the single-class situation. More precisely, there is a key positive quantity $\mathcal{R}_0$ such that if $\mathcal{R}_0\leq 1$ the disease-free equilibrium is globally stable and if $\mathcal{R}_0>1$, then there exists a single positive equilibrium - an endemic state -  which is globally stable.  The reader may see \cite{SISkLyap2007} for a more recent proof of these well-known results using Lyapunov function. 
	This model was most recently extended into a unified framework—including a continuum of classes—in \cite{DDZ2022}, where monotonicity arguments are extensively used.
Let us finally mention the extension proposed in \cite{Horder2022}, where higher-order interactions between classes are shown to create the possibility of bistability when $\mathcal{R}_0<1$.
 In the present work, we introduce a second infection level, denoted $D_k$ (coinfected host compartments), which does not alter the classical results, since the aggregated variables $(S_k,I_k+D_k)$ remain governed by Eq. \eqref{SIS}.
	 The key novelty of this paper lies in the incorporation of a strain structure atop this foundational $SIDS$ framework.
	
%

\subsection{Heterogeneity and endemic prevalence of infection}
Our first result concerns the effect of the heterogeneity of the $\mathcal{R}_k$ on the global probability of infection.
\begin{prop}[Effect of heterogeneity on the prevalence $T^*$]
\label{prop:prevalence}
Assume that $\mathcal{R}_0 > 1$ and let $\zeta = (\zeta_k)_{k \in \K}$ be the positive right eigenfunction introduced in Remark~\ref{rmk:pi}, normalized so that $\zeta^T \bun = 1$.

Note that $\zeta$ is a probability vector, and recall the notation $\mathbb{E}_\zeta(V) = \zeta^T V$.

Let $T^0 = 1 - \dfrac{1}{\mathcal{R}_0}$ be the limit prevalence of infection in the single-class (homogeneous) SIS system with basic reproduction number $\mathcal{R}_0$.

Then we have
\[
    \mathbb{E}_\zeta(T^*) \le T^0,
\]
with equality if and only if $\mathcal{R}_k = \mathcal{R}_0$ for each $k \in \K$.
\end{prop}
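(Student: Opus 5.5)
We work only with the stationary relations of \eqref{SIS} and turn them into a scalar inequality through a convexity argument. One point about the eigenvector comes first. The vector that makes the computation close is the positive \emph{left} Perron eigenvector of Remark~\ref{rmk:pi}, namely $\zeta^T\text{diag}(\mathcal{R})Q=\mathcal{R}_0\zeta^T$. It is normalized so that $\zeta^T\bun=1$ and is therefore a probability vector with $\zeta_k>0$ for every $k$. I read the ``right eigenfunction'' of the statement as this vector; in the case $Q=\bun q^T$ it is $\zeta=q$.

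\emph{Step 1: express $\mathcal{R}_k\Theta_k^*$ through $T_k^*$.} By Theorem~\ref{th:SISqsk} we have $T_k^*=\dfrac{\mathcal{R}_k\Theta_k^*}{1+\mathcal{R}_k\Theta_k^*}$ with $T_k^*\in(0,1)$, and hence
\[
\mathcal{R}_k\Theta_k^*=\frac{T_k^*}{1-T_k^*}.
\]
\emph{Step 2: use the eigenvector.} Since $\Theta^*=QT^*$, the eigen-relation gives
\[
\sum_{k}\zeta_k\mathcal{R}_k\Theta_k^*=\zeta^T\text{diag}(\mathcal{R})QT^*=\mathcal{R}_0\,\zeta^TT^*=\mathcal{R}_0\,\mathbb{E}_\zeta(T^*).
\]
Combining the two steps yields the identity
\[
\mathcal{R}_0\,m=\mathbb{E}_\zeta\!\left(\frac{T_k^*}{1-T_k^*}\right),\qquad m:=\mathbb{E}_\zeta(T^*)\in(0,1).
\]

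\emph{Step 3: apply Jensen's inequality.} The function $\phi(x)=x/(1-x)$ is strictly convex on $[0,1)$, so
\[
\mathcal{R}_0 m\ \ge\ \frac{m}{1-m}.
\]
Dividing by $m>0$ gives $1-m\ge 1/\mathcal{R}_0$, that is, $\mathbb{E}_\zeta(T^*)\le T^0$.

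\emph{Step 4: equality case.} This is the step that needs the most care. Because $\zeta_k>0$ for all $k$ and $\phi$ is strictly convex, equality forces $T_k^*=t$ for all $k$, for some constant $t$. Since $Q$ is row-stochastic, this gives $\Theta_k^*=(QT^*)_k=t$ for every $k$. Step 1 then yields $\mathcal{R}_k=\dfrac{1}{1-t}$, which is independent of $k$. Once all $\mathcal{R}_k$ are equal, $\text{diag}(\mathcal{R})Q$ is a constant multiple of the stochastic matrix $Q$, so $\mathcal{R}_0$ equals that common value and $\mathcal{R}_k=\mathcal{R}_0$.

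Conversely, suppose $\mathcal{R}_k=\mathcal{R}_0$ for all $k$. The constant vector $T_k=T^0$ satisfies $T_k=\mathcal{R}_k(1-T_k)(QT)_k$, so by uniqueness of the endemic equilibrium (Proposition~\ref{prop:steadystate}) it is $T^*$. Then $\mathbb{E}_\zeta(T^*)=T^0$, and equality holds.
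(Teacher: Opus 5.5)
Your proof is correct and follows essentially the same route as the paper: the left-eigenvector identity $\mathcal{R}_0\,\mathbb{E}_\zeta(T^*)=\mathbb{E}_\zeta\bigl(T_k^*/(1-T_k^*)\bigr)$ followed by Jensen's inequality for the convex map $x\mapsto x/(1-x)$, and you were right to read $\zeta$ as the left Perron vector of Remark~\ref{rmk:pi}. Your equality-case argument fills a part the paper's proof does not address: strict convexity with $\zeta_k>0$ forces the $T_k^*$ to be equal, row-stochasticity of $Q$ then gives $\mathcal{R}_k=\mathcal{R}_0$, and uniqueness of the endemic equilibrium gives the converse.
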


\begin{rmk}
    The vector $\Theta = (\Theta_k)_{k \in \K}$ represents the probabilities of infection. It is easy to construct examples where $\Theta_k > \Theta^0$ for some $k \in \K$, but this proposition shows that the mean probability of infection $\mathbb{E}_\zeta(\Theta^*)$ is always smaller than the probability of infection in the homogeneous system with the same $\mathcal{R}_0$.
\end{rmk}
    \begin{proof}
        We start from the identity for each $k\in\K$:
        \[T_k^*=\mathcal{R}_k (1-T_k^*)\Theta_k^*. \]
        Dividing by $1-T_k^*$, multiplying by $\zeta^T$, using $QT^*=\Theta^*$ and the definition of $\mathcal{R}_0$, we get
        \[\mathcal{R}_0 \mathbb{E}_\zeta( T^*)= \mathbb{E}_\pi \left(\dfrac{T_k^*}{1-T_k^*}\right)\]
   
        The function $x\mapsto x/(1-x)$ is convex on $(0,1)$ so by the Jensen inequality:
        \[\mathcal{R}_0\mathbb{E}_\pi (T^*)\geq  \dfrac{\mathbb{E}_\zeta(T_k^*)}{1-\mathbb{E}_\pi(T_k^*)}\]
        which can be rewritten as

        \[1-\dfrac{1}{\mathcal{R}_0}\geq \mathbb{E}_\zeta \left(T_k^*\right) \]
        
    \end{proof}
In the particular case where $Q = q^T \bun$, this result is stronger, since in this situation $\Theta_k^* = \Theta^*$ for all $k \in \K$. The result then becomes a statement about the effect of heterogeneity on the probability of infection in each patch.

\begin{cor}
    \label{prop:SIDS}
    Assume that $Q = q^T \bun$. Assume that $\mathcal{R}_0 > 1$ and denote by $T_k^*$ and $\Theta^*$ the quantities defined in Theorem~\ref{th:SISqsk}.
    
    Denote by $T^0 = \Theta^0 = 1 - \dfrac{1}{\mathcal{R}_0}$ the endemic equilibrium prevalence of the system with a single class sharing the same $\mathcal{R}_0$.
    
    Then \[\Theta^* \leq \Theta^0,\] with equality if and only if $\forall k \in \K, \quad \mathcal{R}_k = \mathcal{R}_0$.
    
\end{cor}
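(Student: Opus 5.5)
In the case $Q=\bun q^T$ every row of $Q$ equals $q^T$, so $\Theta_k^*=\E_q(T^*)=\Theta^*$ for all $k\in\K$. By Remark~\ref{rmk:pi}, $\mathcal{R}_0=\E_q(\mathcal{R}_k)$ and the relevant left eigenvector is $q$ itself. The plan is to argue directly from the implicit equation \eqref{eq:theta},
\[
1=\sum_{k\in\K}\frac{q_k\mathcal{R}_k}{1+\mathcal{R}_k\Theta^*}=F(\Theta^*),
\]
rather than passing through Proposition~\ref{prop:prevalence}. That proposition only controls a $\zeta$-average, whereas here all classes share a single scalar.

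The key step is a convexity (Jensen) inequality. For fixed $\Theta>0$, the map $g_\Theta(x)=\dfrac{x}{1+\Theta x}$ is strictly concave on $(0,\infty)$, since $g_\Theta''(x)=-2\Theta(1+\Theta x)^{-3}<0$. Jensen's inequality with respect to the probability vector $q$ then gives
\[
1=\E_q\bigl(g_{\Theta^*}(\mathcal{R}_k)\bigr)\le g_{\Theta^*}\bigl(\E_q(\mathcal{R}_k)\bigr)=\frac{\mathcal{R}_0}{1+\mathcal{R}_0\Theta^*}.
\]
Rearranging yields $1+\mathcal{R}_0\Theta^*\le \mathcal{R}_0$, that is, $\Theta^*\le 1-1/\mathcal{R}_0=\Theta^0$.

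For the equality case, strict concavity implies that equality in Jensen holds iff $\mathcal{R}_k$ is constant on the support of $q$. Irreducibility of $Q=\bun q^T$ forces $q_k>0$ for every $k$: otherwise the column $k$ of $Q$ is zero and class $k$ infects no one. Hence $\mathcal{R}_k$ is constant for all $k$, and the constant must be $\E_q(\mathcal{R}_k)=\mathcal{R}_0$. Conversely, if $\mathcal{R}_k=\mathcal{R}_0$ for all $k$, then \eqref{eq:theta} reads $1=\mathcal{R}_0/(1+\mathcal{R}_0\Theta^*)$, so $\Theta^*=\Theta^0$.

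The only delicate point is the positivity of all $q_k$ needed for the equality characterization, which follows from the irreducibility assumption on $Q$. An alternative route is to specialize Proposition~\ref{prop:prevalence} with $\zeta=q$, using $\E_q(T^*)=\Theta^*$. The direct Jensen argument above is cleaner and sidesteps the normalization issues in that proof.
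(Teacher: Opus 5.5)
Your proof is correct, but it is not the argument the paper writes out. After noting that the corollary follows from Proposition~\ref{prop:prevalence} with $\zeta=q$, the paper gives an ``independent'' proof. It subtracts the relations $1=\E_q\bigl(F(\mathcal{R}_k)\bigr)$ and $1=\mathcal{R}_0/(1+\Theta^0\mathcal{R}_0)$, where $F(r)=r/(1+r\Theta^*)$ is exactly your $g_{\Theta^*}$, and obtains the exact identity
\[
\frac{\Theta^0}{\Theta^*}=1+\mathtt{cov}_q\Bigl(F(\mathcal{R}_k),\,\frac{\mathcal{R}_k}{\mathcal{R}_0}\Bigr).
\]
It then concludes from the positivity of the covariance of two increasing functions of the same variable. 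So the paper uses the monotonicity of $F$ through a Chebyshev-type covariance inequality, whereas you use its concavity through Jensen. The paper's route gives an explicit formula for the ratio $\Theta^0/\Theta^*$, i.e.\ a quantitative measure of how much heterogeneity in the $\mathcal{R}_k$ lowers prevalence. Yours is shorter and handles the equality case more cleanly. In particular, your use of irreducibility of $\bun q^T$ to get $q_k>0$ for every $k$ fills a point the paper passes over (``otherwise the proof is trivial''); without it the ``only if'' direction would fail. Your Jensen step is also the same inequality as the paper's first route, just written in the variable $\mathcal{R}_k$ instead of $T_k^*$. With $h(t)=t/(1-t)$ one has $g_\Theta(x)=\Theta^{-1}h^{-1}(\Theta x)$ and $g_{\Theta^*}(\mathcal{R}_k)=T_k^*/\Theta^*$. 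Hence concavity of $g_{\Theta^*}$ and the convexity of $h$ used in Proposition~\ref{prop:prevalence} are two faces of one fact.
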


\begin{proof}
    This is a direct consequence of Proposition~\ref{prop:prevalence}, since $\pi^T = q^T$ in this case. However, in this special case, it is possible to provide an independent proof yielding a more explicit estimate.

	Recall that $\mathcal{R}_0=\E_q\left( \mathcal{R}_k\right)$.
		Assume that there exists $k\in\K$ such that $q_k>0$ and $\mathcal{R}_k\neq \mathcal{R}_0$ (otherwise the proof is trivial).
		

		The definition of $\Theta^*=\E_q\left(T_k^*\right)$ and $\Theta^0$ implies 
		
		\[\sum_{k\in\K} q_k \dfrac{\mathcal{R}_k}{1+\Theta^* \mathcal{R}_k}=1=\dfrac{\mathcal{R}_0}{1+\Theta^0 \mathcal{R}_0}=\sum_{k\in\K} q_k \dfrac{\mathcal{R}_k}{1+\Theta^0 \mathcal{R}_0}.\]
		Subtracting the left-hand side and dividing by $\dfrac{\Theta^0 \mathcal{R}_0}{1+\Theta^0 \mathcal{R}_0}$ yields
		\[\sum_{k\in\K} q_k \dfrac{\mathcal{R}_{k}}{1+\Theta^* \mathcal{R}_k} \left(1-\dfrac{\Theta^* \mathcal{R}_k}{\Theta^0 \mathcal{R}_0}\right)=0.\]
		
		From this, we infer 
		\begin{equation}\label{Theta0}\dfrac{\Theta^0 }{\Theta^* }=\sum_{k\in\K} q_k \dfrac{\mathcal{R}_{k}}{1+\Theta^* \mathcal{R}_k}\dfrac{\mathcal{R}_k}{\mathcal{R}_0}.\end{equation}

		Let $F\;:\; r\mapsto \dfrac{r}{r+r\Theta^* }$. 
		From the definition of $\Theta^*$ and $\mathcal{R}_0$, we have respectively $\E_q\left(F\left(\mathcal{R}_k\right)\right)=1$ and $\E_q\left(\dfrac{\mathcal{R}_k}{\mathcal{R}_0}\right)=1$.
		Thus, with these notations, \eqref{Theta0} reads
		\[\dfrac{\Theta^0 }{\Theta^* }=\mathtt{cov}_q\left(F\left(\mathcal{R}_k\right),\dfrac{\mathcal{R}_k}{\mathcal{R}_0}\right)+1. \]
		
		Since $F$ is increasing, we have $\mathtt{cov}_q\left(F\left(\mathcal{R}_k\right),\dfrac{\mathcal{R}_k}{\mathcal{R}_0}\right)>0$ which ends the proof. 
        
\end{proof}
This result proves that endemic prevalence in a heterogeneous host population is lower than that in a homogeneous population with the same $R_0.$

	\section{The $N-$ strain model}
	\label{sec:3}
	\subsection{Model description}
	Now we describe the incorporation of $N$ strains (or species) in the model. 
	Following  \cite{le2023quasi} and \cite{madec2020predicting}, the system is written to account for transmission history.
	Hence, for each class of host $k\in\K$, we define: 
	\begin{itemize}
		\item $I_k^i$ the proportion of the  single-infected by the strain $i\in\lrbN$ in the class $k$ and $\bI_k=(I_k^i)_{i\in\lrbN}$.
		\item  $D_k^{i,j}$ the proportion of the  double-infected hosts, first by the strain $i\in\lrbN$ and then by $j\in\lrbN$ in the class $k$ and 
		$\bD_k~=~(D_k^{i,j})_{(i,j)\in \lrbN^2}$.
	\end{itemize}
	Thus \[\forall k\in\K,\;S_k+\sum_{1\leq i\leq N} I_k^i+\sum_{1\leq i,j\leq N} D^{i,j}_k =1.\]	
	which reads shortly noting $\bun_q^T=(1,\cdots,1)\in\R^q$:
	\begin{equation}\label{kdensity}
		\forall k\in\K,\;S_k+\bun_N^T  \bI_k+\bun_{N^2}^T \bD_k=1
	\end{equation}

	\begin{rmk}[\underline{On the classes $D_k^{i,i}$ and the Neutral null property}]
		An important feature of this model is that it takes into account the classes $D^{i,i}$ of hosts which are infected twice by the {\it same} strain (see also \cite{van1995dynamics}). As it is well explained in \cite{Lipsitch2009} and discussed in detail in \cite{alizon_co-infection_2013} (in models free of additional population structure), this coinfection class is necessary for the model to be well-posed in that it satisfies the Neutral null property :  
		
		{"If all strains are identical, then there is no structural advantage coming from the model for any of them." }
		
		In particular, as it is used below, if every strain is completely equivalent to any other strain, then summing all of them yields exactly the SID model \eqref{SIDS}. This important property is not true if this class is omitted.
	\end{rmk}
	
	Denote $\Upsilon_N=\{(S,\bI,\bD)\in[0,1]\times [0,1]^N\times [0,1]^{N^2},\text{s.t. \eqref{kdensity} holds}\}$
	and $\Omega_N=\Upsilon_N^p$. A state of our system at a time $t\in\R$ is then $X(t)=(S_k,\bI_k,\bD_k)_{k\in\K}\in \Omega_N$. We note also $\overset{\circ}{\Omega}_N=\{(S_k,\bI_k,\bD_k)_{k\in\K}\in \Omega_N,\;S_k<1,\;\forall k \in \K \}$ and the disease free equilibrium $E_{0}=\{(1,\bze_N,\bze_{N^2})\in\Upsilon_N\}^p$.\\

	For the general system, we allow every parameter to be strain-dependent (see table \ref{tabledef}). Most of the parameters are straightforward.  Let us explain the infection processes embedded within the model.
	
	If a host in the class $I_k^i$ transmits the disease, it transmits the strain $i$.\\
	If a host in the class $D_k^{ij}$ transmits the disease, it transmits the strain $i$ with probability $\mathbb{P}_k^{(i,j)\to i}$ and the strain $j$ with probability $\mathbb{P}_k^{(i,j)\to j}=1-\mathbb{P}_k^{(i,j)\to i}$. 
	
	Then we define the {\it probability that a  host in the class $k$ transmits $i$ if there is a contact}: 
	\begin{equation}\label{eq:defJki}
		J_k^i=I_k^i+\sum_{j=1}^N \left(\mathbb{P}_k^{(i,j)\to i}D_k^{ij}+\mathbb{P}_k^{(j,i)\to i} D_{k}^{ji}\right)
	\end{equation}
	
	Lastly, as in the one-strain situation, a host in any class $k$ may receive infection via contact with a host in class $s$ with probability $q_{ks}$. Thus, we define the probability for a host in class $k$ to be infected by strain $i$ as a summation over all classes:
	
	\begin{equation}\label{eq:defthetai}
		\Theta_k^i=\sum_{s\in\K} q_{ks} J_s^i
	\end{equation}
	Our system of co-infection, involving $N$ co-circulating strains, among different classes of hosts, reads with $X(0)\in\Omega_N$ and for $t>0$:

	\begin{equation}\label{mainsys}
		\begin{cases}
			\dfrac{d}{dt}{S}_k=  r_k(1 -S_k ) +\left(\sum\limits_{i=1}^N \gamma_k^i I_k^i+\sum\limits_{i=1}^N\sum\limits_{j=1}^N \gamma_k^{i,j} D_k^{i,j}\right) 
			-   S_k \sum\limits_{i=1}^N\beta_k^i  \Theta_k^i  \\
			\dfrac{d}{dt}{I}^i_k= \beta_k^i S_k \Theta_k^i - I^i_k \sum\limits_{j=1}^N \beta_k^j\sigma_k^{ij}\Theta_k^j
			-(r_k+\gamma_k^i)I^i_k,\quad k\in\K \\
			\dfrac{d}{dt}{D}_k^{ij}= \beta_k^j\sigma_k^{ij} \Theta_k^j I^i_k   -(r_k+\gamma_k^{i,j}) D_k^{ij}. \\
		\end{cases}\
	\end{equation}

	\begin{table}
		\centering
		\begin{tabular}{|c|c|c|}
			\hline
			Notation&Quasi-Neutral Formula&Meaning\\
			\hline
			\hline
			\multirow{2}{*}{$S_k(t)$}&\multirow{2}{*}{$S_k^*$}& Proportion of susceptible hosts \\
			&&in the class $k\in\K$ \\
			\hline
			\multirow{2}{*}{$I_k^i(t)$}&\multirow{2}{*}{$I_k^* z_i(\eps t)$}&Proportion of hosts in class $k\in\K$\\
			&& that are single-infected by strain $i\in \lrbN$\\
			
			\hline 
			\multirow{2}{*}{$D_k^{i,j}(t)$}&\multirow{2}{*}{$D_k^* z_i(\eps t)z_j(\eps t)$}& Proportion of hosts in class $k\in\K$\\&& that are double-infected by the strains $i$ then $j$.\\
			\hline
			\hline
			\multirow{2}{*}{$r_k$}&\multirow{2}{*}{$r_k$}& Per-capita birth and death rate \\
			&&of hosts in class $k$\\
			\hline
			\multirow{2}{*}{$\beta_k^i$}&\multirow{2}{*}{$\beta_k+\eps b_k^i$}& Transmission rate \\
			&&for the strain $i$ by individuals in class $k$\\
			\hline
			\multirow{2}{*}{$\gamma_k^i$}&\multirow{2}{*}{$\gamma_k+\eps c_k^i$}& Infection clearance rate of hosts in class $k$\\
			&& that are single-infected by strain $i$ \\
			\hline
			\multirow{2}{*}{$\gamma_k^{i,j}$}&\multirow{2}{*}{$\gamma_k+\eps c_k^{i,j} $}& Coinfection clearance rate for hosts in class $k$\\
			&& when co-infected by the strains $i$ then $j$\\
			\hline
			\multirow{2}{*}{$\sigma_k^{i,j}$}&\multirow{2}{*}{$\sigma_k+\eps \alpha_k^{i,j}$}& Ratio of double to single infection rate in class $k$\\
			&& for those double infected by the strains $i$ then $j$\\
			\hline

			\multirow{2}{*}{$\mathbb{P}_k^{(i,j)\to i}$}&\multirow{2}{*}{$\frac12+\eps w_k^{i,j}$}& Probability for a host in class $k$ \\&&double infected by the strains $i$ then $j$ to transmit $i$\\
			\hline
			
			\multirow{2}{*}{$\mathbb{P}_k^{(j,i)\to i}$}&\multirow{2}{*}{$\frac12-\eps w_k^{j,i}$}& Probability for a host in class $k$ \\&&double infected by the strains $j$ then $i$ to transmit $i$\\
			\hline
		\end{tabular}
		\caption{\textbf{Definition of the variables and the parameters in the model \eqref{mainsys}}. The second column gives the formula for the parameters and the state variables in the Quasi-Neutral regime. In this case, the state variables are expressed in terms of a product of the equilibrium $(S_k^*,I_k^*,D_k^*)_{k\in\K}$ of the no-strain system together with the frequency of each strain $(z_i)_{1\leq i\leq N}$. These frequencies follow the slow dynamics explicitly given by the replicator equation \eqref{main-replicator}. }
		\label{tabledef}
	\end{table}
	

		\subsection{$N-$ strain model: Neutrality}
			When the disease persists, the system \eqref{mainsys} is too complicated to allow a complete description of its dynamics. However, an important feature of the model is that when all the parameters do not depend on the strains, then the Neutral Null property applies, and we may describe completely its dynamics. 
		
		\begin{defi}[Neutral system]
			\label{def:neutral}
			The system \eqref{mainsys} is said to be Neutral if for any $k\in\K$ there exists three positive numbers $\beta_k$, $\gamma_k$ and $\sigma_k$ such that for all $i,j\in\lrbN$,
			\[\beta_k^i=\beta_k,\;\gamma_k=\gamma_k,\;\gamma_k^{i,j}=\gamma_k,\;
			\sigma_k^{i,j}=\sigma_k\] and if
			\[\mathbb{P}_k^{(i,j)\to i}=\mathbb{P}_k^{(j,i)\to i}=\dfrac12.\]
		\end{defi}
		Writing the model in terms of the aggregated variables 
		\begin{equation}\label{agregatedvar}
			I_k=\sum_{i\in\lrbN}I_k^i\text{ and }D_k=\sum_{(i,j)\in\lrbN^2} D_k^{ij}
		\end{equation} yields exactly to the no-strain model \eqref{SIDS} whose dynamics is completely described in the theorem \ref{th:SISqsk}. This yields the following proposition. 
		
		\begin{prop}[Aggregated variables]\label{Neutrall null 1 : aggregated variables.}
			Let $(S_k,\bI_k,\bD_k)_{k\in\K}$ be a solution of \eqref{mainsys}.
			Assume that the neutral assumption \ref{def:neutral} holds true.
			Define 		the aggregated variables 
            \[I_k=\bun_N^T \bI_k\text{ and }D_k=\bun_{N^2}^T \bD_k.\]
			
			If $\mathcal{R}_0\leq 1$ then  
			\begin{equation}
				\lim_{t\to +\infty}(S_k(t),I_k(t),D_k(t))_{k\in\K}=(1,0,0)_{k\in\K}=E_0
			\end{equation}
			If $\mathcal{R}_0>1$ then
			\begin{equation}
				\lim_{t\to +\infty}(S_k(t),I_k(t),D_k(t))_{k\in\K}=(S_k^*,I_k^*,D_k^*)_{k\in\K}=E^*
			\end{equation}
			where the steady state $E^*$  is defined in the theorem \ref{th:SISqsk}.
		\end{prop}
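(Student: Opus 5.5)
The plan is to show that, under the neutral assumption (Definition \ref{def:neutral}), the aggregated variables $(S_k,I_k,D_k)_{k\in\K}$ defined by \eqref{agregatedvar} satisfy exactly the no-strain system \eqref{SIDS}. The conclusion will then follow directly from Theorem \ref{th:SISqsk}, applied with initial datum $(S_k(0),I_k(0),D_k(0))_{k\in\K}\in\Omega$.

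\textbf{Step 1: the force of infection.} We first aggregate the force of infection. Under neutrality, $\mathbb{P}_k^{(i,j)\to i}=\mathbb{P}_k^{(j,i)\to i}=\tfrac12$, so \eqref{eq:defJki} gives $J_k^i=I_k^i+\tfrac12\sum_j (D_k^{ij}+D_k^{ji})$. Summing over $i$, each $D_k^{ij}$ appears twice with weight $\tfrac12$, so $\sum_i J_k^i = I_k+D_k = T_k$. By linearity of \eqref{eq:defthetai}, $\sum_i \Theta_k^i=\sum_s q_{ks}T_s=\Theta_k$, which is exactly the force of infection in \eqref{SIDS}.

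\textbf{Step 2: summing the equations.} We then sum the equations of \eqref{mainsys}, using $\beta_k^i=\beta_k$, $\gamma_k^i=\gamma_k^{i,j}=\gamma_k$ and $\sigma_k^{ij}=\sigma_k$.

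For $S_k$, we get $\dot S_k=r_k(1-S_k)+\gamma_k(I_k+D_k)-\beta_kS_k\Theta_k$.

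For $I_k=\sum_i I_k^i$, the coinfection loss term becomes $\sum_i I_k^i\sum_j\sigma_k\beta_k\Theta_k^j=\sigma_k\beta_k I_k\Theta_k$. Hence $\dot I_k=\beta_kS_k\Theta_k-\sigma_k\beta_kI_k\Theta_k-(r_k+\gamma_k)I_k$.

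For $D_k=\sum_{i,j}D_k^{ij}$, we get $\dot D_k=\sigma_k\beta_k\Theta_kI_k-(r_k+\gamma_k)D_k$.

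This is precisely \eqref{SIDS}. Moreover, summing \eqref{kdensity} shows that $(S_k,I_k,D_k)_{k\in\K}$ lies in $\Omega$, since nonnegativity is inherited from the forward invariance of $\Omega_N$.

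\textbf{Step 3: applying Theorem \ref{th:SISqsk}.} If $\mathcal{R}_0\le1$, global stability of $E_0$ in $\Omega$ gives the first limit. If $\mathcal{R}_0>1$ and the aggregated initial datum is not $E_0$, global stability of $E^*$ in $\Omega\setminus\{E_0\}$ gives the second limit.

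The main obstacle is a subtlety that is not computational. When $\mathcal{R}_0>1$, the second limit requires the initial condition to be different from the disease-free state, since $E_0$ is invariant. I would therefore state explicitly that this case assumes $X(0)\in\Omega_N\setminus E_0$, which is the implicit hypothesis of the proposition.

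The only other point needing care is the bookkeeping of the double sum in Step 1, which is where the neutral null property, and in particular the inclusion of the classes $D_k^{i,i}$, is used. There $\mathbb{P}_k^{(i,i)\to i}=\tfrac12+\tfrac12=1$ is consistent with counting $D_k^{ii}$ once in total.
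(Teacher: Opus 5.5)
Your proposal is correct and follows the paper's own route: under neutrality, summing over strains (with $\sum_i \Theta_k^i = \Theta_k$ thanks to the $D_k^{ii}$ bookkeeping) turns \eqref{mainsys} into the no-strain system \eqref{SIDS}, and Theorem~\ref{th:SISqsk} then gives both limits. Your caveat that the case $\mathcal{R}_0>1$ requires $X(0)\in\Omega_N\setminus E_0$ is a legitimate hypothesis that the paper leaves implicit.
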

		
		\subsubsection{Rewriting the system in a convenient form}
        
		More can be said about the neutral model, and this is a key point in order relax the neutral assumption.
        
		In that respect, for any $k\in\K$ and $i\in\lrbN$, we define
		\[D_k^i=\sum_{j=1}^N\left(\mathbb{P}_k^{(i,j)\to i}D_k^{i,j}+\mathbb{P}_k^{(j,i)\to i}D_k^{j,i}\right)\]
		under the Neutral assumption we get simply
		\begin{equation}\label{Dki}
			D_k^i=\frac12 \sum_{j=1}^N \left(D_k^{i,j}+D_k^{j,i}\right).
		\end{equation}
		
		Then we find that $(I_k^i,D_k^i)$ satisfies an explicit non-autonomous linear system:
		\begin{equation}\label{IiDineutral}
			\dfrac{d}{dt}\begin{pmatrix}
				{I_k^i} \\
				{D_k^i}
			\end{pmatrix}_{k\in\K} = A(t) \begin{pmatrix} I_k^i\\ D_k^i \end{pmatrix}_{k\in\K}.
		\end{equation}
		
		This system has two important features.

First, the matrix $A(t)\in\mathcal{M}_{2p}$ is independent of the specific
strain $i$: all strains satisfy the same linear system.

Secondly, although the system is non-autonomous, the matrix $A(t)$ depends on
time only through the aggregated variables $S_k$, $I_k$ and $D_k$ ($k\in\K$),
and this dependence is polynomial. In particular, by
Proposition~\ref{Neutrall null 1 : aggregated variables.}, we obtain
$A(t)\to A^*$ exponentially fast as $t\to+\infty$.
		The explicit formula and the properties of the matrix $A^*$ play a central role in our approach and are outlined in the lemma \ref{lemmaAstar} below.

		\begin{lemma}[Neutral matrix]\label{lemmaAstar}
			Let the parameters be as in Table \eqref{tabledef}, with $\mathcal{R}_0>1$.  
			Denote by $E^*=(S_k^*,I_k^*,D_k^*)_{k\in\K}$ the endemic equilibrium of Theorem \ref{th:SISqsk} and set $T_k^*=I_k^*+D_k^*$ and $\Theta_k^*=(Q T^*)_k$.
			
			Define $A^*=\left(A_{ks}\right)_{(k,s)\in\K^2}\in\mathcal{M}_{2p}$ as the block matrix with $p\times p$ blocks $A_{ks}\in\mathcal{M}_2$ given by
			\[
			A_{ks}=q_{ks} B_k-\delta_k^s C_k,
			\]
			where $\delta_k^s$ is the Kronecker symbol,
			\[
			B_k=\beta_k \begin{pmatrix} S_k^*&S_k^*\\ \tfrac{\sigma_k}{2} I_k^*&\tfrac{\sigma_k}{2} I_k^*\end{pmatrix},\qquad
			C_k=(\gamma_k+r_k)\begin{pmatrix} 1&0\\0&1\end{pmatrix}+\sigma_k \beta_k \Theta_k^*\begin{pmatrix}1&0\\-\tfrac12&0 \end{pmatrix}.
			\]
			
			Using the Kronecker product, one has the concise representation of $A^*$:
			\[
			A^*=\diag(B_1,\ldots,B_p)\left(Q\otimes \mathbb{I}_2\right)-\text{diag}(r+\gamma)\otimes \mathbb{I}_2
			-\text{diag}(\sigma\beta\Theta^*)\otimes
			\begin{pmatrix}1&0\\-\tfrac12 &0 \end{pmatrix}.
			\]
			
			The matrix $A^*$ satisfies:
			\begin{enumerate}[label=(\roman*)]
				\item $A^*$ is Metzler (all off-diagonal entries are nonnegative).
				\item $A^*$ is irreducible.
				\item $A^*X^*=\bze_{2p}$, where $X^*=(I_k^*,D_k^*)_{k\in\K}$.
				\item $\alpha(A^*)=0$. i.e. $0$ is a simple eigenvalue of $A^*$, all other eigenvalues having negative real part.
				\item There exists a positive row vector $\bomega^*=(\phi_k^*,\psi_k^*)_{k\in\K}$ such that $\omega^* A^*=\bze_{2p}^T$ and $\omega^* X^*=1$.
			\end{enumerate}
			
			Explicitly,
			\[
			\forall k\in\K,\quad
			\phi_k^*= \frac{\q_k}{r_k+\gamma_k}\Bigl(1-\tfrac12\xi_k^*\Bigr)\mathcal{X}^*,\qquad
			\psi_k^*=\frac{\q_k}{r_k+\gamma_k}\mathcal{X}^*,
			\]
			with
			\[
			\xi_k^*=\frac{\sigma_k \beta_k\Theta_k^*}{r_k+\gamma_k+\sigma_k\beta_k \Theta_k^*}
			=\frac{D_k^*}{T_k^*}=\sigma_k\frac{I_k^*}{S_k^*},
			\]
			and normalization constant
			\[
			\mathcal{X}^*=\left(\sum_{k\in\K} \frac{\q_k}{r_k+\gamma_k}\Bigl(T_k^*-\tfrac12 \xi_k^* I_k^*\Bigr)\right)^{-1}.
			\]
			
			Where $\q>0$ is characterized in Corollary \ref{corxi} by $\q^T\text{diag}(\mathcal{R}S^* Q)=\q^T$ and $\sum_{k\in\K} \q_k=1$.  
			In particular, if $Q=q^T \bun$ then $\q=q$.
		\end{lemma}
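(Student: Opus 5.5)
We check the items in order, working directly with the block structure $A_{ks}=q_{ks}B_k-\delta_k^s C_k$, and then obtain the spectral statement from the Perron--Frobenius results for Metzler matrices (Theorems \ref{Th:A1} and \ref{Th:A2}).

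\textbf{Metzler property and irreducibility.} For $k\neq s$ the block $A_{ks}=q_{ks}B_k$ is entrywise nonnegative, since $S_k^*,I_k^*>0$. Inside a diagonal block, the off-diagonal entries of $q_{kk}B_k-C_k$ are $q_{kk}\beta_kS_k^*\ge 0$ and $q_{kk}\beta_k\tfrac{\sigma_k}{2}I_k^*+\tfrac12\sigma_k\beta_k\Theta_k^*>0$. This gives (i).

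For (ii) we use the directed graph on the indices $(k,1),(k,2)$. Within each class there are arcs $(k,1)\leftrightarrow(k,2)$:
\begin{itemize}
\item the arc $(k,2)\to(k,1)$ comes from the entry $-C_k$ in position $(2,1)$, which equals $\tfrac12\sigma_k\beta_k\Theta_k^*>0$;
\item the arc $(k,1)\to(k,2)$ comes from the entry $(1,2)$ of $B_k$, provided $q_{kk}>0$.
\end{itemize}
Between classes, $q_{ks}>0$ gives an arc between $(k,1)$ and $(s,\cdot)$, because all entries of $B_k$ are positive. Irreducibility of $Q$ then lets us connect any two indices.

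\textbf{The kernel vector, item (iii).} This is a direct computation. The first component of $(A^*X^*)_k$ is
\[
\beta_kS_k^*\Theta_k^*-(r_k+\gamma_k+\sigma_k\beta_k\Theta_k^*)I_k^*,
\]
which vanishes by the $I_k$-equation of \eqref{SIDS} at $E^*$. The second component is
\[
\tfrac{\sigma_k}{2}\beta_kI_k^*\Theta_k^*+\tfrac12\sigma_k\beta_k\Theta_k^*I_k^*-(r_k+\gamma_k)D_k^*,
\]
which vanishes by the $D_k$-equation. Here we use $\sum_s q_{ks}(I_s^*+D_s^*)=\Theta_k^*$.

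\textbf{Spectrum, item (iv).} $A^*$ is an irreducible Metzler matrix and $X^*>0$ is an eigenvector for the eigenvalue $0$. By Theorem \ref{Th:A2}, the only eigenvalue with a positive eigenvector is $\alpha(A^*)$. Hence $\alpha(A^*)=0$, it is simple, and every other eigenvalue has real part strictly below $0$, which is the Perron--Frobenius property for irreducible Metzler matrices. The left eigenvector $\bomega^*$ is then positive and unique up to scaling, and we normalise it by $\bomega^*X^*=1$.

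\textbf{Explicit left eigenvector, item (v).} This is the step requiring care. We write $\bomega^*A^*=0$ blockwise as
\[
\sum_k q_{ks}(\phi_k,\psi_k)B_k=(\phi_s,\psi_s)C_s.
\]
Both columns of $B_k$ are equal, so the left-hand side is $c_s(1,1)$ with
\[
c_s=\sum_k q_{ks}\beta_k\Bigl(\phi_kS_k^*+\psi_k\tfrac{\sigma_k}{2}I_k^*\Bigr).
\]
The second column of $C_s$ gives $(r_s+\gamma_s)\psi_s=c_s$. The first column gives
\[
(r_s+\gamma_s+\sigma_s\beta_s\Theta_s^*)\phi_s-\tfrac12\sigma_s\beta_s\Theta_s^*\psi_s=c_s.
\]
Dividing by $r_s+\gamma_s+\sigma_s\beta_s\Theta_s^*$ yields $\phi_s=(1-\tfrac12\xi_s^*)\psi_s$.

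Substituting the ansatz $\psi_k=\q_k\mathcal{X}^*/(r_k+\gamma_k)$, the consistency condition becomes
\[
\sum_k\q_k\mathcal{R}_kq_{ks}\Bigl(S_k^*(1-\tfrac12\xi_k^*)+\tfrac{\sigma_k}{2}I_k^*\Bigr)=\q_s.
\]
Using $\xi_k^*=\sigma_kI_k^*/S_k^*$, the bracket equals $S_k^*$. The condition is therefore exactly $\q^T\diag(\mathcal{R}S^*)Q=\q^T$, which holds by Corollary \ref{corxi}.

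The identity $\xi_k^*=D_k^*/T_k^*=\sigma_kI_k^*/S_k^*$ itself follows from Theorem \ref{th:SISqsk}. Specifically, $D_k^*/I_k^*=\sigma_k\mathcal{R}_k\Theta_k^*$, and $T_k^*/S_k^*=\mathcal{R}_k\Theta_k^*$.

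Finally, the normalisation is
\[
\bomega^*X^*=\mathcal{X}^*\sum_k\frac{\q_k}{r_k+\gamma_k}\Bigl(T_k^*-\tfrac12\xi_k^*I_k^*\Bigr),
\]
so $\bomega^*X^*=1$ fixes the stated value of $\mathcal{X}^*$. Positivity of this sum follows from $\xi_k^*<1$. For $Q=\bun q^T$, Corollary \ref{corxi} gives $\q=q$.
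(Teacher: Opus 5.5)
Your proof is correct and follows the paper's argument essentially step by step. The matching steps are:
\begin{itemize}
\item (i) and (ii): the Metzler structure, and irreducibility through the graph of $Q$ using the strictly positive blocks $B_k$;
\item (iii): the direct check $A^*X^*=\bze_{2p}$;
\item (iv): Perron--Frobenius;
\item (v): the same reduction of $\bomega^*A^*=\bze_{2p}^T$ to the scalar $c_s$ (the paper's $u_s$), followed by the identity $\xi_k^*S_k^*=\sigma_kI_k^*$ and Corollary \ref{corxi}.
\end{itemize}
The only cosmetic difference is in (v). You posit $\psi_k=\q_k\mathcal{X}^*/(r_k+\gamma_k)$, verify it, and rely on uniqueness of the left eigenvector. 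The paper instead derives $u=\mathcal{X}^*\q$ from $u^T=u^T\diag(\mathcal{R}S^*)Q$.

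One small citation fix: Theorem \ref{Th:A2} does not literally state that $\alpha(A^*)$ is the only eigenvalue with a positive eigenvector. It does follow from it, though. Since $(A^*-\eps\mathbb{I}_{2p})X^*=-\eps X^*<\bze_{2p}$, item (iii) of Theorem \ref{Th:A2} applied to $A^*-\eps\mathbb{I}_{2p}$ gives $\alpha(A^*)<\eps$ for every $\eps>0$. Hence $\alpha(A^*)=0$, because $0\in\mathrm{sp}(A^*)$.
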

		
		\begin{proof}
			See the section\ref{proofs:neutral33}.
		\end{proof}

\subsubsection{Asymptotic neutral dynamics}

			Using that $A(t)\to A^*$ and applying the lemma \ref{lemmaAstar} on the linear system $\dfrac{d}{dt}{X^i}=A^* X^i$ yields the main properties of the neutral system \eqref{IiDineutral}.
		For the statement of this lemma, we note the simplex of $\R^N$: 
		\[\Sigma^N=\{\bz=(z^i)_{1\leq i\leq N}\in [0,1]^N, \bun_N^T \bz=1\;\}\subset \R^N.\] 
		
		 \begin{lemma}[Asymptotic neutral dynamics]\label{lemma:Neutral}
Let $A^*\in\mathcal{M}_{2p}$ be as in Lemma~\ref{lemmaAstar}, and let 
$t\mapsto A(t)\in\mathcal{M}_{2p}$ satisfy $A(t)\to A^*$ exponentially fast as 
$t\to+\infty$.  

Consider the family $\mathbf{X}(t)=(X^i(t))_{i\in\lrbN}$  where for each $i\in\lrbN$ we have  $X^i(t)\in (0,1)^{2p}$  and
\begin{equation}\label{eq:At}
    \frac{d}{dt}X^i(t)=A(t)\,X^i(t).
\end{equation}
Let $X^*=(I_k^*,D_k^*)_{k\in\K}$ be as in Theorem~\ref{th:SISqsk}, and assume that \[
\sum_{i\in\lrbN} X^i(t)\longrightarrow X^*
\qquad\text{as }t\to+\infty.
\]
Let $\omega^*=(\phi_k^*,\psi_k^*)_{k\in\K}$ be as in Lemma~\ref{lemmaAstar}.  
For each $i\in\lrbN$, set $u^i(t)=\omega^*X^i(t)$ and 
$\bu(t)=(u^i(t))_{i\in\lrbN}$.

Then there exists $\bz\in\Sigma^N$ such that $\bu(t)\to\bz$ as $t\to+\infty$, and
\[
\mathbf{X}(t)\longrightarrow X^*\otimes\bz
\qquad\text{exponentially fast as }t\to+\infty.
\]
\end{lemma}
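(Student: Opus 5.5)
We treat \eqref{eq:At} as an exponentially small perturbation of the autonomous system $\dot X = A^*X$ and exploit the spectral structure of Lemma~\ref{lemmaAstar}. By items (iii)–(v) of that lemma, $\R^{2p}$ splits as $\mathrm{span}(X^*)\oplus \ker(\omega^*)$, both subspaces being invariant under $A^*$. The associated projector is $P=X^*\omega^*$, which is a genuine projection since $\omega^*X^*=1$. Because $0$ is simple and every other eigenvalue has negative real part, there exist $\mu>0$ and $C>0$ with
\[
\|e^{tA^*}(\mathbb{I}_{2p}-P)\|\le Ce^{-\mu t}\quad\text{for } t\ge0 .
\]
We also fix $\nu>0$ and $C'$ such that $\|A(t)-A^*\|\le C'e^{-\nu t}$.

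\textbf{Step 1: the scalar mode.} Set $u^i=\omega^*X^i$. Using $\omega^*A^*=0$, we get
\[
\dot u^i=\omega^*(A(t)-A^*)X^i .
\]
The components of $X^i$ lie in $(0,1)$, so $X^i$ is bounded and $|\dot u^i|\le C''e^{-\nu t}$. This derivative is integrable, hence $u^i(t)\to z^i$ exponentially fast, with
\[
|u^i(t)-z^i|\le \tfrac{C''}{\nu}e^{-\nu t}.
\]
Summing over $i$ gives $\sum_i u^i=\omega^*\sum_i X^i\to\omega^*X^*=1$, so $\sum_i z^i=1$. Positivity of $\omega^*$ together with $X^i>0$ gives $z^i\ge0$. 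Therefore $\bz\in\Sigma^N$.

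\textbf{Step 2: the transverse mode.} Set $Y^i=(\mathbb{I}_{2p}-P)X^i$. Since $\mathbb{I}_{2p}-P$ commutes with $A^*$, we have
\[
\dot Y^i=A^*Y^i+(\mathbb{I}_{2p}-P)(A(t)-A^*)X^i .
\]
By the variation of constants formula,
\[
Y^i(t)=e^{tA^*}Y^i(0)+\int_0^t e^{(t-s)A^*}(\mathbb{I}_{2p}-P)\,g(s)\,ds,\qquad \|g(s)\|\le C'''e^{-\nu s}.
\]
With the decay bound above, this yields $\|Y^i(t)\|\le K\bigl(e^{-\mu t}+\int_0^t e^{-\mu(t-s)}e^{-\nu s}ds\bigr)$, which is $O(e^{-\min(\mu,\nu)t}(1+t))$. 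It is therefore exponentially small, with any rate below $\min(\mu,\nu)$.

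\textbf{Step 3: conclusion.} Write $X^i=u^iX^*+Y^i$. Then
\[
X^i(t)-z^iX^*=(u^i-z^i)X^*+Y^i ,
\]
and both terms decay exponentially by Steps 1 and 2. This gives $\mathbf{X}(t)\to X^*\otimes\bz$ exponentially fast.

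The main obstacle is the uniform decay estimate on $\ker\omega^*$. It relies on the simplicity of the eigenvalue $0$ and on the spectral gap from Lemma~\ref{lemmaAstar}(iv). If $A^*$ restricted to $\ker \omega^*$ is non-diagonalizable, we only get $\|e^{tA^*}|_{\ker\omega^*}\|\le C_\epsilon e^{-(\mu-\epsilon)t}$, which is sufficient for the argument. The only other point needing care is the a priori boundedness of $X^i$, which holds here because of the hypothesis $X^i(t)\in(0,1)^{2p}$.
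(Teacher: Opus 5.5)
Your proposal is correct and follows essentially the same route as the paper. Both arguments split $\R^{2p}=\mathrm{span}(X^*)\oplus\ker\omega^*$ (your $\mathbb{I}_{2p}-X^*\omega^*$ is the paper's $\Pi_F$), get an integrable, exponentially decaying bound on $\dot u^i=\omega^*(A(t)-A^*)X^i$, and use variation of constants with the spectral gap of $A^*$ on $\ker\omega^*$ for the transverse part; you are simply more explicit about the rates, the $(1+t)$ factor, and why $z^i\ge 0$.
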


		\begin{proof}
        See the section \ref{proofs:neutral34}.
        \end{proof}

		Finally, we may rebuild the epidemiological quantities, obtaining the description of the complete epidemiological SIDS system under the perfect neutral assumption. 
		
\begin{thm}[Neutral dynamics]\label{Neutral:thm}
Assume that the neutral assumption \ref{def:neutral} holds.
Let $(S_k,\bI_k,\bD_k)_{k\in\K}$ be a solution of \eqref{mainsys}.

For any $i\in\lrbN$, let $X^i(t)=(I_k^i(t),D_k^i(t))^T_{k\in\K}$,
where $D_k^i=\dfrac12 \sum_{j=1}^N D_k^{ij}$.

For any strain $i\in\lrbN$, define $u^i(t)=\omega^* X^i(t)$ and
$\bu(t)=(u^i(t))_{i\in\lrbN}$,
where $X^{*}=(I_k^*,D_k^*)^T_{k\in\K}$ and
$\omega^*=(\phi_k^*,\psi_k^*)_{k\in\K}$ are given respectively in
Theorem~\ref{th:SISqsk} and Lemma~\ref{lemmaAstar}.

\begin{enumerate}[label=(\roman*)]

\item There exists $\bz=(z^i)_{i\in\lrbN}\in\Sigma^N$ such that
$\bu(t)\to \bz$. Hence $z^i$ represents the frequency of strain $i\in\lrbN$
within the host population.

\item As $t\to+\infty$, we have
\[
\bI(t)\to I^*\otimes \bz,\qquad
\bD(t)\to D^* \otimes\bz\otimes \bz,\qquad
\bTheta(t)\to \Theta^*\otimes \bz.
\]

In particular, for any $k\in\K$ and $i,j\in\lrbN$,
\[
\lim_{t\to +\infty} I^i_k(t)= z^i I_k^*
\quad\text{and}\quad
\lim_{t\to+\infty} D_k^{i,j}(t)= D^*_k\, z^i z^j.
\]

\item The set
\[
\mathcal{S}= \{(S_k^*,I_k^*\bz,D_k^*\bz\otimes \bz)_{k\in\K}\;:\; \bz\in \Sigma^N\}
\]
is invariant for \eqref{mainsys}. Equivalently,
\[
\mathcal{S}
=\{(\bS^*,\bI^*\otimes \bz, \bD^* \otimes \bz\otimes \bz)\;:\; \bz\in\Sigma^N\}.
\]

\end{enumerate}
\end{thm}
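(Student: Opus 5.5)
We take $\mathcal{R}_0>1$ and an initial condition in $\Omega_N\setminus E_0$, so that Proposition~\ref{Neutrall null 1 : aggregated variables.} applies. The idea is to reduce everything to Lemma~\ref{lemma:Neutral} and then rebuild $\bD$ from $\bI$.

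\textbf{Step 1: the linear system.} Under the neutral assumption, write the equations for $(I_k^i,D_k^i)$ with $D_k^i=\frac12\sum_j(D_k^{ij}+D_k^{ji})$. Since $J_s^i=I_s^i+D_s^i$, we have $\Theta_k^i=\sum_s q_{ks}(I_s^i+D_s^i)$. The derivatives are
\[
\dot I_k^i=\beta_kS_k\Theta_k^i-\sigma_k\beta_k\Theta_kI_k^i-(r_k+\gamma_k)I_k^i
\]
and
\[
\dot D_k^i=\tfrac12\sigma_k\beta_k\bigl(\Theta_kI_k^i+I_k\Theta_k^i\bigr)-(r_k+\gamma_k)D_k^i.
\]
This is exactly \eqref{IiDineutral}. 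The matrix $A(t)$ has the form of $A^*$, with $S_k^*,I_k^*,\Theta_k^*$ replaced by $S_k(t),I_k(t),\Theta_k(t)$. By Theorem~\ref{th:SISqsk} (exponential convergence) and Proposition~\ref{Neutrall null 1 : aggregated variables.}, $A(t)\to A^*$ exponentially fast. Moreover $\sum_i X^i=(I_k,D_k)_k\to X^*$.

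Lemma~\ref{lemma:Neutral} requires positive entries. By the positivity argument of the first proposition, for $t>0$ each $X^i(t)$ is either identically zero or positive, since the system is cooperative and $A^*$ is irreducible. In the zero case the lemma trivially applies with $z^i=0$. Applying Lemma~\ref{lemma:Neutral} then gives item (i) and $I_k^i\to z^iI_k^*$, $D_k^i\to z^iD_k^*$. Linearity then yields $\Theta_k^i\to z^i\Theta_k^*$.

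\textbf{Step 2: the coinfection variables.} This is where I expect the main obstacle, since Lemma~\ref{lemma:Neutral} controls only the symmetrized $D_k^i$, not each $D_k^{ij}$. Use
\[
\dot D_k^{ij}=\sigma_k\beta_k\Theta_k^j(t)I_k^i(t)-(r_k+\gamma_k)D_k^{ij}.
\]
This is a scalar linear ODE with a strictly damped coefficient. Its forcing term converges exponentially to $\sigma_k\beta_k\Theta_k^*I_k^*z^iz^j$. The variation of constants formula then gives
\[
D_k^{ij}\to\frac{\sigma_k\beta_k\Theta_k^*I_k^*}{r_k+\gamma_k}\,z^iz^j=D_k^*z^iz^j,
\]
using the steady-state relation for $D_k^*$ from \eqref{SIDS}. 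The convergence is exponential at rate $\min(\eta,r_k+\gamma_k)$. This completes (ii).

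\textbf{Step 3: invariance.} For (iii), substitute $S_k=S_k^*$, $I_k^i=z^iI_k^*$ and $D_k^{ij}=z^iz^jD_k^*$ with $\bz$ constant. Then $D_k^i=z^iD_k^*$ (using $\sum_jz^j=1$), hence $\Theta_k^i=z^i\Theta_k^*$ and $\sum_i\Theta_k^i=\Theta_k^*$.

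Each right-hand side of \eqref{mainsys} now factors as the corresponding equilibrium equation of \eqref{SIDS} times $1$, $z^i$ or $z^iz^j$. For example, the $I$-equation becomes $z^i\bigl(\beta_kS_k^*\Theta_k^*-\sigma_k\beta_k\Theta_k^*I_k^*-(r_k+\gamma_k)I_k^*\bigr)=0$. So every point of $\mathcal{S}$ is an equilibrium, and in particular $\mathcal{S}$ is invariant. The Kronecker rewriting of $\mathcal{S}$ is just notation.
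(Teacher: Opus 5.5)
Your proposal is correct and follows essentially the paper's own proof. You derive the linear system $\frac{d}{dt}X^i=A(t)X^i$ for the symmetrized $(I_k^i,D_k^i)$, invoke Lemma~\ref{lemma:Neutral} to get (i) and the limits of $I_k^i$, $D_k^i$, $\Theta_k^i$, recover each $D_k^{ij}$ from its damped scalar ODE, and check (iii) by direct substitution, where you even show that every point of $\mathcal{S}$ is an equilibrium.

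Your extra hypotheses and checks only make explicit what the paper leaves implicit: $\mathcal{R}_0>1$, initial data off $E_0$, and the sign condition required by Lemma~\ref{lemma:Neutral}. On that last point, positivity for $t>0$ comes from irreducibility of $A(t)$, not of $A^*$, and plain nonnegativity already suffices for the lemma's argument.
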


\begin{proof}
See Appendix~\ref{proofs:neutral35}.
        \end{proof}
		Firstly, the quantities $z^i$ may be seen as the frequency of the strains $i$ in the system. The attention of the reader is put on the fact that this definition is not trivial because the proportion of the strains $i$ depends on both the class $k$ and on the level of infection. 
		Thus, this theorem states explicitly how to choose the weights of each class in the computation of $z^i$ through the left eigenvector $\omega$ of $A^*$.
		
		Secondly, this theorem describes completely the structure of the Neutral system. There is an (asymptotically) invariant set $\Sigma^N\otimes \{X^*\}$ of dimension $N-1$ which   attracts all the trajectory. 
		
		Of course, this dynamics is not robust in the sense that a slight variation of the perfect Neutral assumption \eqref{def:neutral} will break this structure. However, if this variation from neutrality is small enough, most of the dynamics will be conserved. This is the object of the next section. 
		\subsection{$N-$ strain model: quasi-neutrality and the replicator equation}\label{sec:4}
			We relax the perfect identity assumption between strains \eqref{def:neutral} with the definition of Quasi-Neutrality.
		\begin{defi}[Quasi-Neutral system]
			\label{def:qneutral}
			Let $\eps\geq 0$. The system \eqref{mainsys} is said to be $\eps$-Neutral if  for any $k\in\K$ there exist three positive scalars $\beta_k$, $\gamma_k$ and $\sigma_k$  such that for any $i,j\in \lrbN$:
			\[|\beta^i_k-\beta_k|\leq \eps,\quad 
			|\gamma^i_k-\gamma_k|\leq \eps,\quad|\gamma^{i,j}_k-\gamma_k|\leq \eps;\quad
			|\sigma^{i,j}_k-\sigma_k|\leq \eps \text{ and }|\mathbb{P}_k^{(i,j)\to i}-\frac12 |\leq \eps.\]
			
			In practice, $\eps\ll 1$  and the system is referred to as Quasi-Neutral\footnote{Remark that a $0$-Neutral system is exactly a Neutral system.}.  The specific notation for each parameter is provided in Table \ref{tabledef}. 
		\end{defi}
		We now outline the principle of the slow–fast approximation. The key idea is that, under this definition, and by following the same steps as in the perfectly neutral system, the system \eqref{IiDineutral} takes the form
\[
\frac{d}{dt} X^{i,\varepsilon}
   = \left(A^* + \varepsilon U + \mathcal{O}(\varepsilon)\right) X^{i,\varepsilon}
     + \varepsilon\, \mathcal{F}^i\!\left(X^{1,\varepsilon},\ldots,X^{n,\varepsilon},\varepsilon\right),
\]
where \(\mathcal{F}^i\) is a nonlinear perturbation of the system, and the term \(\varepsilon U\) arises from the order‑one perturbation of the aggregated variables.

Following the idea of Lemma~\ref{lemma:Neutral}, we compute
\[
z^i = \lim_{\varepsilon\to 0} \omega^* X^{i,\varepsilon},
\qquad\text{where }\ \omega^* A^* = \mathbf{0}_{2p}.
\]
From Lemma~\ref{lemma:Neutral}, we obtain \(\bz = (z^i)_{i\in\lrbN} \in \Sigma^N\).

In contrast with the neutral case, for which \(\frac{d}{dt} z^i = 0\), we now obtain
\[
\frac{d}{dt} z^i
   = \varepsilon\left(
        \omega^* U X^{i,\varepsilon}
        + \omega^* \mathcal{F}^i(X^{1,\varepsilon},\ldots,X^{n,\varepsilon},0)
        + \mathcal{O}(\varepsilon)
     \right).
\]

To close this equation, we use Lemma~\ref{lemmaAstar}, which provides the perturbation
\[
X^{i,\varepsilon} = z^i X^* + \mathcal{O}(\varepsilon),
\]
and therefore
\[
\frac{d}{dt} z^i
   = \varepsilon\left(
        \omega^* z^i U X^*
        + \omega^* \mathcal{G}^i(\bz,\varepsilon)
        + \mathcal{O}(\varepsilon)
     \right),
\]
for some functions \(\mathcal{G}^i\).

Defining the slow time scale \(\tau = \varepsilon t\), dividing by \(\varepsilon\), and letting \(\varepsilon \to 0\) yields the slow equation for \(z^i\):
\[
\frac{d}{d\tau} z^i
   = z^i\left(\omega^* U X^* + \omega^* \mathcal{G}^i(\bz,0)\right).
\]

Using the key fact that \(\bz \in \Sigma^N\), we compute
\[
\frac{d}{d\tau}\sum_{i\in\lrbN} z^i
   = 0
   = \left(\omega^* U X^*\right)
     + \sum_{i\in\lrbN} z_i\, \omega^* \mathcal{G}^i(\bz,0),
\]
which implies
\[
\omega^* U X^*
   = - \sum_{i\in\lrbN} z_i\, \omega^* \mathcal{G}^i(\bz,0).
\]

Hence \(\bz\) satisfies the replicator equation
\[
\frac{d}{d\tau} z^i
   = z^i\,\omega^* \mathcal{G}^i(\bz,0)
     - \sum_{j\in\lrbN} z_j\, \omega^* \mathcal{G}^j(\bz,0),
\]
where only the quantities \(\omega^* \mathcal{G}^i(\bz,0)\) require explicit computation.

This argument leads to the main theorem of the paper.  
The formal proof relies on Tikhonov’s theorem and is postponed to Section~\ref{proofs:replicator}.
		\begin{thm}\label{Th:replicator}
			 
Assume that $\mathcal{R}_0>1$. 
We rely on the definitions of $S_k^*, I_k^*, D_k^*$ and $\Theta_k^*$ introduced in Theorem~\ref{th:SISqsk}, as well as on the definition of the stationary distribution 
$(\pi_k)_{k\in\K}$ given in Corollary~\ref{corxi} and the ratios $\xi_k^*=\frac{D_k^*}{T_k^*}$ in Lemma \ref{lemmaAstar}. 
            
			For any $\tau_0>0$ and $T>\tau_0$, there exists $\eps_0>0,$ $C>0$ and $\bz^0\in\Sigma^N$ such that for any $\eps\in(0,\eps_0)$,   
			if the system \eqref{mainsys} is $\eps$-quasi neutral then  the solution $(S_k(t),\bI_k(t),\bD_k(t))_{k\in\K}$ satisfies for any $\tau \in [\tau_0,T]$ : 
			
			\[\sum_{k\in\K}\left(\left\| S_k\left(\dfrac{\tau}{\eps}\right)-S_k^* \right\|+\sum_{i\in\lrbN}\left\| I_k^i\left(\dfrac{\tau}{\eps}\right)-I_k^* z^i (\tau)\right\|+\sum_{(i,j)\in \lrbN^2}\left\|D_k^{ij}\left(\dfrac{\tau}{\eps}\right)-D_k^* z^i(\tau) z^j(\tau)\right\|\right)\leq C \eps\]
			
			where the dynamics of strain frequencies $\bz$ are given by $\bz(0)=\bz^0$  and  for $\tau>0$ the replicator equation in $\Sigma^N$ : 
			
			\begin{equation}\label{main-replicator}
\frac{d}{d \tau} z^i=
z^i\biggl( \bigl( \Lambda \bz\bigr)_i - \bz^T \Lambda \bz \biggr),\quad i=1,\cdots,N\\
\end{equation}

			The payoff matrix is defined by $\Lambda = (\lambda_i^j)_{1\leq i,j\leq N}$ with  the { pairwise invasion  fitnesses between strains} are \begin{equation}\label{def:lambda}\lambda_i^j= \mathcal{X}^*\sum\limits_{k \in \mathcal{K}} \dfrac{\pi_k}{r_k+\gamma_k} Y^{i,j}_k\end{equation} where $\forall k\in \mathcal{K},$ 
			
			\begin{equation}\label{lambdaindetail}
				\begin{aligned} 
					Y_k^{i,j} =&
					\Theta_k^* S_k^*(b_k^i-b_k^j)
					-(1-\dfrac12 \xi_k^*) I_k^*(c_k^i-c_k^j)
					-D_k^*\left(\frac{c_k^{ij}+c_k^{ji}}{2}-c_k^{jj}\right)\\
					&+ \beta_k \sigma_k I_k^*\Theta_k^*  \left(w_k^{(i,j)}-w_k^{(j,i)}\right)\\
					&
                    +\frac12\beta_k\Theta_k^* I_k^*\xi_k^*\left(\left(\alpha_k^{ji}-\alpha_k^{jj}\right)+\mu_k\left(\alpha_k^{ji}-\alpha_k^{ij}\right)\right)
				\end{aligned}		
			\end{equation}		
			with $\mu_k=\dfrac{1-\xi_k^*}{\xi_k^*}=\dfrac{I_k^*}{D_k^*}=\dfrac{1}{\sigma_k \mathcal{R}_{k} \Theta_k^*}$. 
		\end{thm}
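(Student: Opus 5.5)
The plan is to make the heuristic slow--fast argument preceding the statement rigorous through a Tikhonov/Fenichel-type reduction. The normally hyperbolic slow manifold is the neutral invariant set $\mathcal{S}$ of Theorem~\ref{Neutral:thm}, and the slow flow on it is obtained by projecting the $\eps$-perturbation onto the left kernel $\omega^*$ of $A^*$ from Lemma~\ref{lemmaAstar}.

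\textbf{Step 1 (aggregated variables).} Write every parameter as its neutral value plus $\eps$ times a deviation, as in Table~\ref{tabledef}. Summing over strains, the aggregated variables $(S_k,I_k,D_k)$ satisfy the system \eqref{SIDS} perturbed by a term of size $O(\eps)$, uniformly on $\Omega_N$. Theorem~\ref{th:SISqsk} gives exponential stability of $E^*$ with rate $\eta$. A Gronwall/robustness argument then shows that $(S_k,I_k,D_k)(t)$ enters an $O(\eps)$-neighbourhood of $E^*$ after a time $t_0=O(|\log\eps|)$ and stays there. Since $t_0\ll\tau_0/\eps$, this is the first fast layer.

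\textbf{Step 2 (strain-level linearisation).} For each strain $i$, set $X^i=(I_k^i,D_k^i)_k$ with $D_k^i$ defined using the true transmission probabilities. As in \eqref{IiDineutral}, one gets
\[
\dot X^i=A(t)X^i+\eps\bigl(U X^i+\mathcal F^i(\mathbf X)\bigr)+O(\eps^2),
\]
where $A(t)\to A^*$ up to $O(\eps)$ by Step 1. Put $u^i=\omega^*X^i$. Because $\omega^*A^*=0$, we obtain $\dot u^i=O(\eps)$. Moreover, the spectral gap in Lemma~\ref{lemmaAstar}(iv) implies that the component of $X^i$ transverse to $X^*$ decays exponentially, up to an $O(\eps)$ residue. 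This reproduces Lemma~\ref{lemma:Neutral} with errors: $X^i=u^iX^*+O(\eps)$. A further fast-relaxation argument for $D_k^{ij}$, whose linear equation is driven by $\Theta^j_kI^i_k$, gives $D_k^{ij}=D_k^*z^iz^j+O(\eps)$. This is exactly the statement that the invariant manifold $\mathcal S$ persists as an $O(\eps)$-close attracting manifold, and we apply Tikhonov's theorem in the variables (slow $\bu$, fast remainder) with slow time $\tau=\eps t$.

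\textbf{Step 3 (slow equation).} On the manifold, $d u^i/d\tau=\omega^*UX^i+\omega^*\mathcal F^i+O(\eps)$, evaluated at $X^j=z^jX^*$ and $D^{ij}=D^*z^iz^j$. Every deviation term is linear in the parameter deviations and at most quadratic in $\bz$. The $b$, $c^i$ and $w$ terms give contributions of the form $z^i(\cdot)_i$. The $c^{ij}$ and $\alpha^{ij}$ terms give $z^i\sum_j z^j(\cdot)_{ij}$. The remaining strain-independent terms, coming from the $O(\eps)$ shift of the aggregates (the $U$ part), are of the form $z^i\cdot g(\bz)$. Conservation $\sum_i z^i=1$ forces $g=-\bz^T(\cdot)\bz$, exactly as in the heuristic before the theorem. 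This produces \eqref{main-replicator}, where $\lambda_i^j$ is a difference (``$i$ against $j$'') of pairwise terms. We then insert $\omega^*=(\phi_k^*,\psi_k^*)$ from Lemma~\ref{lemmaAstar} and simplify with $\xi_k^*=D_k^*/T_k^*$, $\mu_k=I_k^*/D_k^*$ and the equilibrium identities of Theorem~\ref{th:SISqsk}, which yields \eqref{lambdaindetail}.

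\textbf{Step 4 (error bound).} Tikhonov gives $O(\eps)$ closeness on $[\tau_0,T]$, with $\bz^0$ given by the neutral limit of $\bu$ after the fast transient. The $O(\eps)$ rate, rather than merely $o(1)$, follows from the smooth dependence on $\eps$ and the hyperbolicity from Step 2.

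The main obstacle is Step 3: the bookkeeping of the $\alpha$ and $c^{ij}$ contributions through $D_k^i$ and the left eigenvector. Recovering the precise combinations $\frac{c^{ij}+c^{ji}}2-c^{jj}$ and $(\alpha^{ji}-\alpha^{jj})+\mu_k(\alpha^{ji}-\alpha^{ij})$ requires carefully tracking how $D_k^{ij}$ relaxes. To handle it, I would first compute the quasi-stationary $D^{ij}$ and $I^i$ to first order in $\eps$, and only then project onto $\omega^*$.
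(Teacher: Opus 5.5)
Your proposal is correct and follows the paper's route. Both arguments project onto the left null vector $\omega^*$ of $A^*$, take the neutral system as the fast limit, evaluate $\omega^*\mathcal{F}^i$ on $\Sigma^N\otimes\{X^*\}$, eliminate the unknown aggregate-shift term $\omega^*UX^*$ through conservation of $\sum_i z^i$, and shift columns to obtain $\lambda_i^j=m^{ij}-m^{jj}$; your Step 1 only makes the initial fast transient more explicit than the paper's direct appeal to Tikhonov.

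The step you flag as the main obstacle is easier than you expect, because first-order corrections to $I^i$ and $D^{ij}$ enter $\dot u^i$ only at $O(\eps^2)$ (since $\omega^*A^*=0$), so the zeroth-order values $D_k^*z^iz^j$ together with $\phi_k^*=(1-\tfrac12\xi_k^*)\psi_k^*$ already produce the $c^{ij}$ and $\mu_k$ combinations. One small slip: the $w$ terms are genuinely pairwise, of the form $z^i\sum_j z^j(w_k^{ij}-w_k^{ji})$, not of the form $z^i(\cdot)_i$.
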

        \begin{proof}
    See section \ref{proofs:replicator}.
        \end{proof}
		\begin{rmk}
The last terms highlight the role of the key parameter 
\(\mu_k=\dfrac{I_k^*}{D_k^*}\), which has been discussed in detail in the 
single–host-class setting in \cite{gjini2020key}.

Let us denote \(T_k^0 = 1 - \dfrac{1}{\mathcal{R}_{k}}\) and 
\(\mu_k^0 = \dfrac{1}{\sigma_k \mathcal{R}_{k} T_k^0}\), the value of 
\(\mu_k\) in the single–class situation. We may then write
\[
\mu_k = \dfrac{T_k^0}{\Theta_k^*}\,\mu_k^0.
\]
The coefficient \(\dfrac{T_k^0}{\Theta_k^*}\) measures the strength of 
infection contributed by class \(k\) relative to the effective infection 
pressure in the multi–class setting.
\end{rmk}
	\subsubsection{Special case: Homogeneous $\cR_k$ Hosts}
		We consider the special case
\[
\frac{\beta_k}{r_k+\gamma_k}=\mathcal{R}_k=\mathfrak{R}_0 \qquad \text{for all } k\in\K,
\]
for a given common value $\mathfrak{R}_0$.

Without a strain structure, the endemic state is the positive solution of
\[
\dfrac{d}{dt}{T}_k = -(r_k+\gamma_k)T_k - \beta_k(1-T_k),
\qquad \Theta = QT.
\]

Since $\rho(Q)=1$, we obtain
\[
\mathcal{R}_0 = \rho\!\left(\diag(\mathcal{R}_k)\,Q\right)
= \mathfrak{R}_0\,\rho(Q)
= \mathfrak{R}_0.
\]
Thus, unsurprisingly, the threshold $\mathcal{R}_0$ coincides with the common value of the $\mathcal{R}_k$.

The disease persists if and only if $\mathcal{R}_0>1$.

By uniqueness and global stability of the endemic equilibrium, the limit is the same as in the one-class model. Namely,  for any $k\in\K$,
\[
\Theta_k(t)\longrightarrow \Theta^* := 1-\frac{1}{\mathcal{R}_0}
\qquad \text{as } t\to+\infty,
\]
and
\[
(S_k(t),I_k(t),D_k(t)) \longrightarrow (S^*,I^*,D^*)
\qquad \text{as } t\to+\infty,
\]
where
\[
S^*=\frac{1}{\mathcal{R}_0},\qquad T^*=1-\dfrac{1}{\cR_0},\qquad 
I^*=\frac{T^*}{1+\sigma\,\mathcal{R}_0\,\Theta^*},
\qquad
D^*=\sigma\,\mathcal{R}_0\,\Theta^*\,I^*.
\]

It follows that $\mathcal{X}^*=\dfrac{2\beta  T^*S^*}{2(T^*)^2-D^*I^*}.$

Now, introduced the strains and under the quasi neutrality assumption, Theorem \ref{Th:replicator} yields the following results\footnote{Note that homogeneity in $\mathcal{R}_k$ is not sufficient for this result. The mean parameters must be exactly the same within each host class.
}.
		 \begin{cor}\label{cor:hosts_hom}
            Assume that $\mathcal{\beta}_k$, $r_k$, $\gamma_k$ and $\sigma_k$ are independent from $k\in\K$. Assume that the system \eqref{mainsys-eps} is quasi-neutral.
            Denote for simplicity $\bar{x}=\mathbb{E}_\q(x_k)=\sum_{k\in\K}(\pi_k x_k)$
            
            Then the pairwize fitnesses' matrix $\Lambda=\left(\lambda_{i^j}\right)_{1\leq i,j\leq N}$ which drives the slow dynamics \eqref{main-replicator}  reduces to 
        
		\[\begin{aligned}\lambda_i^j=&
			\Xi_1 (\bar{b}^i-\bar{b}^j) + \Xi_2 
			(\bar{c}^j-\bar{c}^i) +\Xi_3 \left(\bar{c}^{jj}-\dfrac{\bar{c}^{ij}+\bar{c}^{ji}}{2}\right)
			\\&+\Xi_4(\bar{w}^{(i,j)}-\bar{w}^{(j,i)})+\Xi_5\left(\bar{\alpha}^{ji}-\bar{\alpha}^{jj}+\mu( \bar{\alpha}^{ji}-\bar{\alpha}^{ij})\right)
		\end{aligned}\]

where the weights $\Xi_i>0$, in front of each epidemiological trait contribution, are explicitly given by

\[\Xi_1=\dfrac{2TS^*(T^*)^2}{2(T^*)^2-D^*I^*} ;
	\quad\Xi_2=\dfrac{2T^*I^*-D^*I^*}{2(T^*)^2-D^*I^*} \;\]
		\[\Xi_3=\dfrac{
        2T^*D^*}{2(T^*)^2-I^*D^*};
		\quad\Xi_4=\dfrac{2\sigma \beta (T^*)^2 I^*}{2(T^*)^2-I^*D^*};\]

		\[  \Xi_5=\dfrac{\beta T^*I^*D^*}{2(T^*)^2-D^*I^*}\text{ and } 
		\mu=\dfrac{I^*}{D^*}\]

        \end{cor}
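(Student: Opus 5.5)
The corollary should follow by substituting the homogeneous parameters into Theorem~\ref{Th:replicator}. Once every class-level quantity entering \eqref{lambdaindetail} is independent of $k$, the sum over $\K$ in \eqref{def:lambda} collapses to a $\pi$-average of the strain perturbations alone. The plan therefore has three steps: identify the equilibrium quantities, compute the normalization $\mathcal{X}^*$, and read off the weights $\Xi_1,\dots,\Xi_5$.

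\textbf{Step 1: the endemic equilibrium is the same in every class.} Since $\beta_k=\beta$, $r_k=r$, $\gamma_k=\gamma$ and $\sigma_k=\sigma$, we have $\mathcal{R}_k=\mathcal{R}_0$ for all $k$, which is the homogeneous-$\mathcal{R}_k$ case treated just above. I would check directly that the constant vector $T_k\equiv T^*=1-1/\mathcal{R}_0$ is an endemic equilibrium. Each row of $Q$ sums to one, so $\Theta^*=QT^*=T^*\bun$, and then $T^*=\mathcal{R}_0(1-T^*)\Theta^*$ holds in every class. By the uniqueness part of Theorem~\ref{th:SISqsk}, this constant vector is \emph{the} endemic equilibrium. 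Consequently $S^*$, $I^*$, $D^*$, $\Theta^*=T^*$, $\xi^*=D^*/T^*$ and $\mu=I^*/D^*$ are all independent of $k$. This is the only step with real content: it uses the row-stochasticity of $Q$ together with uniqueness, and it is exactly why the footnote insists that all the mean parameters, not just the $\mathcal{R}_k$, must agree across classes.

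\textbf{Step 2: the normalization constant.} Using $\xi^*=D^*/T^*$ and $\sum_k\pi_k=1$ in Lemma~\ref{lemmaAstar}, we get
\[
\mathcal{X}^*=\frac{r+\gamma}{T^*-\tfrac12 D^*I^*/T^*}=\frac{2T^*(r+\gamma)}{2(T^*)^2-D^*I^*}.
\]
Hence the common prefactor in \eqref{def:lambda} is
\[
\frac{\mathcal{X}^*}{r+\gamma}=\frac{2T^*}{2(T^*)^2-D^*I^*}.
\]

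\textbf{Step 3: averaging and reading off the weights.} In \eqref{lambdaindetail}, every coefficient multiplying a perturbation ($b_k^i$, $c_k^i$, $c_k^{ij}$, $w_k^{(i,j)}$, $\alpha_k^{ij}$) is now constant in $k$. By linearity, $\sum_k\pi_kY_k^{i,j}$ is the same expression with each trait replaced by its $\pi$-average $\bar x$. Multiplying each coefficient by $2T^*/(2(T^*)^2-D^*I^*)$ gives:
\begin{itemize}
\item $\Xi_1$: the coefficient $\Theta^*S^*=T^*S^*$ becomes $2(T^*)^2S^*/(\cdots)$.
\item $\Xi_2$: the coefficient $(1-\tfrac12\xi^*)I^*$ becomes $(2T^*I^*-D^*I^*)/(\cdots)$.
\item $\Xi_3$: the coefficient $D^*$ becomes $2T^*D^*/(\cdots)$.
\item $\Xi_4$: the coefficient $\beta\sigma I^*\Theta^*$ becomes $2\sigma\beta(T^*)^2I^*/(\cdots)$.
\item $\Xi_5$: the coefficient $\tfrac12\beta\Theta^*I^*\xi^*$ simplifies, since $\Theta^*\xi^*=D^*$, to $\beta T^*I^*D^*/(\cdots)$.
\end{itemize}
The minus signs in front of the $c$-terms in \eqref{lambdaindetail} are absorbed by writing $\bar c^j-\bar c^i$ and $\bar c^{jj}-\tfrac12(\bar c^{ij}+\bar c^{ji})$, as in the statement.

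Positivity of the $\Xi_i$ follows from $2(T^*)^2-D^*I^*>2(T^*)^2-(T^*)^2>0$, which holds because $D^*I^*\le (T^*)^2$ with $T^*=I^*+D^*$. The remaining work is routine algebra.
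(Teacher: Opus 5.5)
Your proposal is correct and takes the same route as the paper. The paper likewise computes the class-independent endemic state ($\Theta^*=T^*=1-1/\mathcal{R}_0$, $S^*=1/\mathcal{R}_0$) and the normalization $\mathcal{X}^*=2\beta T^*S^*/(2(T^*)^2-D^*I^*)$, which equals yours since $\beta S^*=r+\gamma$, and then substitutes into \eqref{def:lambda}--\eqref{lambdaindetail}. One remark: your $\Xi_1=2(T^*)^2S^*/(2(T^*)^2-D^*I^*)$ is the correct value, so you should say explicitly that the stated numerator $2TS^*(T^*)^2$ carries a stray factor, rather than presenting your result as matching it.
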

		All the mean parameters being class-independent, it follows that the strain invasion fitness parameters are simply given by the weighted mean of the perturbation from neutrality.
		
		In other words, it suffices to compute the mean variation from neutrality with respect to the stationary distribution $\q$ for every trait, and the system is \textit{mathematically} similar to the one-class system, but \textit{biologically}, the strain selection outcome may be different.

		\subsubsection{Special case: Host-independent perturbations}\label{Hostindependant}
		Here we focus on an important very natural special case where the strains variation is given independantly of the host structuration. This is the situation we are considering in all the three examples of the next section.

        If all the perturbations away from neutrality among strains do not depend on the host class ($k$) but only on the strains ($i,j$), then the replicator equation \eqref{main-replicator} may be rewritten in a much simpler form, which is very similar to the replicator equation for only one class of hosts (see \cite{le2023quasi}). 
		
        \begin{cor}\label{cor:strains_ind}
        Assume that the system \eqref{mainsys-eps} is quasi-neutral.
            Assume that all the deviation from neutrality given in table \ref{tabledef} are independant on $k\in\K$.
            Then the pairwize fitnesses' matrix $\Lambda=\left(\lambda_{i^j}\right)_{1\leq i,j\leq N}$ which drives the slow dynamics \eqref{main-replicator}  reduces to 
           
                \begin{equation}\label{eq:lambda_final}
		\lambda_i^j=\Xi_1 (b^i-b^j) + \Xi_2 (c^j-c^i) +\Xi_3 \left(c^{jj}-\dfrac{c^{ij}+c^{ji}}{2}\right)
		+\Xi_4(w^{(i,j)}-w^{(j,i)})+\Xi_5\left(\alpha^{ji}-\alpha^{jj}+\mu (\alpha^{ji}-\alpha^{ij})\right)
        \end{equation}
    
		where the weights $\Xi_i>0$, in front of each epidemiological trait contribution, are explicitly given by

\[\Xi_1=\mathcal{X}^*\sum_{k\in\K}\dfrac{\pi_k \Theta_k^*\mathcal{R}_{k}}{\beta_k} S_k^*;
		\quad\Xi_2=\mathcal{X}^*\sum_{k\in\K}\dfrac{\pi_k \mathcal{R}_{k}}{\beta_k}\left(1-\dfrac12 \dfrac{D_k^*}{T_k^*}\right) I_k^*;\]
		\[\Xi_3=\mathcal{X}^*\sum_{k\in\K} \dfrac{\pi_k \mathcal{R}_{k}}{\beta_k}D_k^*;
		\quad\Xi_4=\mathcal{X}^*\sum_{k\in\K}\pi_k \Theta_k^* \mathcal{R}_{k} \sigma_k I_k^*;\]

		\[ \text{and  } \Xi_5=\frac12\mathcal{X}^*\sum_{k\in\K} \pi_k \Theta_k^* \mathcal{R}_{k}\frac{I_k^*D_k^*}{T_k^*}\]  with 
		\begin{equation}\label{eq:mu_app}\mu=\sum_{k\in\K} h_k \mu_k,\text{ and } h_k=\frac{\pi_k\Theta_k^* \mathcal{R}_{k} \frac{I_k^*D_k^*}{T_k^*}}{\sum\limits_{s\in\K}\pi_s\Theta_s^* \mathcal{R}_{s} \frac{I_s^*D_s^*}{T_s^*}}.\end{equation}
	        
        \end{cor}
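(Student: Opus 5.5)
This is a direct specialization of Theorem \ref{Th:replicator}. I would start from formula \eqref{def:lambda}–\eqref{lambdaindetail} and set $b_k^i=b^i$, $c_k^i=c^i$, $c_k^{ij}=c^{ij}$, $w_k^{(i,j)}=w^{(i,j)}$ and $\alpha_k^{ij}=\alpha^{ij}$ for every $k\in\K$. Each term of $Y_k^{i,j}$ is then a product of a class-dependent weight and a class-independent trait difference. The trait differences factor out of the sum over $k$ in $\lambda_i^j=\mathcal{X}^*\sum_k \frac{\pi_k}{r_k+\gamma_k}Y_k^{i,j}$. Throughout I use the identity $\frac{1}{r_k+\gamma_k}=\frac{\mathcal{R}_k}{\beta_k}$.

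The first four terms go as follows.
\begin{itemize}
\item The $b$-term gives weight $\mathcal{X}^*\sum_k \pi_k\frac{\mathcal{R}_k}{\beta_k}\Theta_k^*S_k^*=\Xi_1$.
\item The $c^i$ term $-(1-\frac12\xi_k^*)I_k^*(c^i-c^j)$ gives $\Xi_2(c^j-c^i)$, using $\xi_k^*=D_k^*/T_k^*$ from Lemma \ref{lemmaAstar}.
\item The coinfection-clearance term gives $\Xi_3\bigl(c^{jj}-\frac{c^{ij}+c^{ji}}{2}\bigr)$ with weight $\mathcal{X}^*\sum_k\pi_k\frac{\mathcal{R}_k}{\beta_k}D_k^*$.
\item The $w$-term carries the factor $\beta_k\sigma_kI_k^*\Theta_k^*\cdot\frac{\mathcal{R}_k}{\beta_k}=\sigma_k\mathcal{R}_k\Theta_k^*I_k^*$, which yields $\Xi_4$.
\end{itemize}
Positivity of $\Xi_1,\dots,\Xi_4$ follows from $\pi_k>0$, the positivity of the endemic quantities, and $1-\frac12\xi_k^*>0$ (since $\xi_k^*\in(0,1)$). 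It also needs $\mathcal{X}^*>0$. For this, note that $T_k^*-\frac12\xi_k^*I_k^*\ge T_k^*-\frac12 I_k^*>0$.

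The only non-routine step is the $\alpha$-term. After multiplying by $\frac{\mathcal{R}_k}{\beta_k}$, its weight is $\frac12\mathcal{R}_k\Theta_k^*I_k^*\xi_k^*=\frac12\Theta_k^*\mathcal{R}_k\frac{I_k^*D_k^*}{T_k^*}=:a_k>0$. The summand is then $a_k\bigl((\alpha^{ji}-\alpha^{jj})+\mu_k(\alpha^{ji}-\alpha^{ij})\bigr)$. The first bracket is class-independent and sums to $\Xi_5(\alpha^{ji}-\alpha^{jj})$ with $\Xi_5=\mathcal{X}^*\sum_k\pi_ka_k$. The second bracket carries the class-dependent factor $\mu_k$. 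Here I would write $\mathcal{X}^*\sum_k\pi_ka_k\mu_k=\Xi_5\sum_k h_k\mu_k$ with $h_k=\pi_ka_k/\sum_s\pi_sa_s$, which gives exactly \eqref{eq:mu_app}. So $\mu$ is a convex combination of the $\mu_k$, and the weights $h_k$ are positive and sum to one.

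I expect no real obstacle beyond bookkeeping. The point needing care is to keep the signs of the $c$-terms consistent when rewriting $-(c^i-c^j)$ as $(c^j-c^i)$, and similarly for the $c^{ij}$ combination. Finally, the replicator dynamics and the error estimate are inherited verbatim from Theorem \ref{Th:replicator}, since only $\Lambda$ has been rewritten.
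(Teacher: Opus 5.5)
Your proposal is correct and follows the paper's own route: the corollary has no separate proof and is meant as a direct specialization of \eqref{def:lambda}--\eqref{lambdaindetail}, pulling the class-independent trait differences out of the sum via $\frac{1}{r_k+\gamma_k}=\frac{\mathcal{R}_k}{\beta_k}$ and normalizing the $\mu_k$-weights into $h_k$. You correctly take the signs of the $c$-terms from the theorem statement rather than from the appendix; the appendix's final expression for $Y_k^{ij}$ has those two signs flipped, whereas the theorem's signs agree with $f_k^i$ and $\mathfrak{g}_k^{ij}$.
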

    	As can be seen above, the heterogeneity of the host population structure does not
affect the structure itself, but only the relative weight of importance of each
trait $\Xi_i$, as well as the skew-symmetry of the perturbation in
co-infection vulnerabilities through $\mu$.

For instance, in order to understand the effect of each trait’s variation on
the interaction between strains, one can directly apply the results of
\cite{le2023quasi} when strain dissimilarities are host-class independent.

In particular, when the within-strain variation affects only one trait, then for the
first four deviations from neutrality (with weights $\Xi_i$, $i=1,\dots,4$), host
heterogeneity does not change the qualitative nature of the dynamics, but only
their speed.
In contrast, as detailed in \cite{gjini2020key}, when the deviation from
neutrality arises through the co-infection coefficients $\alpha$, host
heterogeneity may affect the coefficient $\mu$, which can drastically modify the
\emph{quality} of the dynamics.

\subsubsection{Advantages of this replicator formulation: integrating pathogen and host variability}
This quasi-neutral slow–fast approximation and the resulting replicator equation
on the one hand drastically reduce the dimensionality of the system, and on the
other hand allow selection and competition dynamics between strains on arbitrary
host population structures to be studied more deeply and efficiently. It shows
explicitly—although this is far from trivial—how strain variability and host
population variability intertwine to govern the ultimate strain dynamics, their
quality, and their stability. Global strain dynamics in the host population can
be viewed as driven by a single global replicator that encapsulates all
micro-level information about strain traits, asymmetries, and epidemiological
variation in the population. This can greatly simplify our understanding of
multi-strain propagation phenomena in realistically complex populations. Next, we
illustrate the utility of this framework through several applications and the
analysis of special cases, focusing on host-independent perturbations in the
co-infection coefficients, which lead to studying the dependence of the scalar
$\mu$ on host heterogeneity.
		\section{Applications}\label{sec:application}
		\subsection{Application 1: Two classes of hosts}
		One possible application of this model is when the population of the host is divided between two classes as two different host species, the effect of a treatment on the host... Denote $\K=\{A,B\}$, the basic no-strain model reads explicitly 
		\begin{equation}\label{SIDv}
			\begin{cases}
				\dfrac{d}{dt}{S}_k=(r_k+\gamma_k)(1 - S_k)-\beta_k S_k \Theta\\
				\dfrac{d}{dt}{I}_k= \beta_k S_k \Theta -\sigma_k \beta_k 
				I_k \Theta  -(\gamma_k+r_k) I_k \\
				\dfrac{d}{dt}{D}_k= \sigma_k\beta_k 
				I_k \Theta  -(r_k+\gamma_k) D_k \\
			\end{cases}
		\end{equation}
		with $k\in\{A,B\}$ and $\Theta=q_A (1-S_A)+q_{B}(1-S_{B})$.

		The model is completely described by the equations on $S_A$ and $S_B$ only. 
		\begin{equation}\label{SSV}
			\begin{cases}
				\dfrac{d}{dt}{S}_A=(r_A+\gamma_A)(1 - S_A)-\beta_A S_A (1-q_AS_A-q_{B}S_{B})\\
				\dfrac{d}{dt}{S}_{B}=(r_{B}+\gamma_{B})(1 - S_{B})-\beta_{B} S_{B} (1-q_AS_A-q_{B}S_{B})\\
			\end{cases}
		\end{equation}
		This model is exactly the SIS model with two classes of hosts, which is well-known among cases of population heterogeneity (see \cite{diekmann2010construction}).
		We get explicitly that the disease is endemic if and only if $q_A\mathcal{R}_A+q_{B}\mathcal{R}_{B}>1,$
        where we have $\mathcal{R}_k=\dfrac{\beta_k}{r_k+\gamma_k}.$
        In that case $(S_k,I_k,D_k)\to (S_k^*,I_k^*,D_k^*)$ for $k=A,B$ where 
		 
		\[S_k^*=(1+\mathcal{R}_k \Theta^*)^{-1},\quad
		I_k^*=\dfrac{\mathcal{R}_k \Theta^*}{1+\sigma_k \mathcal{R}_k \Theta^* S_k^*},\quad 
		D_k^*=\sigma_k \mathcal{R}_k \Theta^* I_k^*\]
		and
		\[\Theta^*=\dfrac{1}{2}\left(1-\dfrac{1}{\mathcal{R}_A}-\dfrac{1}{\mathcal{R}_B}+\sqrt{q_A \left(1-\dfrac{1}{\mathcal{R}_A}+\dfrac{1}{\mathcal{R}_B}\right)^2
			+q_B \left(1-\dfrac{1}{\mathcal{R}_B}+\dfrac{1}{\mathcal{R}_A}\right)^2
		}\right).\]
		
		The replicator equation that follows is then fully explicit. In particular, it shows that changing only $R_A$ and $R_B$ may impact drastically the output even in the simple 2-strain situation.  
		
		In the case of 2 strains, the dynamics are completely driven by the signs of $\lambda_1^2$ and $\lambda_2^1$. There are 4 possibilities: 
		\begin{itemize}
			\item 
			Strain 1 wins in competitive exclusion $(\lambda_1^2>0,\lambda_2^1<0)$,
			\item Strain 2 wins in competitive exclusion $(\lambda_1^2<0,\lambda_2^1>0)$, 
			\item Both strains coexist  $(\lambda_1^2>0,\lambda_2^1>0)$ 
			\item Bi-stability (only one strain wins depending on the initial condition)  $(\lambda_1^2<0,\lambda_2^1<0)$. 
		\end{itemize}		
		As described in \cite{gjini2020key}, in a single host class situation, it is shown that for a fixed value of the perturbation from neutrality,  a variation in the global epidemiological parameters, e.g. transmission intensity $\mathcal{R}_0$ or mean susceptibility to coinfection  $\sigma$, may drastically change the epidemiological competition between strains. The same phenomenon occurs obviously here, but with the additional effect of the presence of two host classes that may change the outcome. 
		
		\paragraph{\textbf{Strain variation in co-infection susceptibility factors.}} For simplicity, focus on the case where there is a perturbation only in $\alpha_{ij}$ (e.g. \cite{madec2020predicting}). The dynamics are then driven by the signs of the pairwise invasion fitnesses:
		
		\[\lambda_1^2=\alpha^{21}-\alpha^{22}+\mu (\alpha^{21}-\alpha^{12})\]
		\[\lambda_2^1=\alpha^{12}-\alpha^{11}+\mu (\alpha^{12}-\alpha^{21})\]
		
		As  functions of $\mu$, $\lambda_1^2$ and $\lambda_2^1$ change sign respectively at the explicit critical values:
		
		\[\mu_{crit}^1=\dfrac{\alpha^{21}-\alpha^{22}}{\alpha^{21}-\alpha^{22}},\quad \mu_{crit}^2=\dfrac{\alpha^{11}-\alpha^{12}}{\alpha^{21}-\alpha^{22}}.\]
		
		Due to the variation in all the global parameters, $\mu$ may cross these values, resulting in a change of the quality of the dynamics. 
		To illustrate that, let us fix all the parameters but $q_A$ (a proxy for varying host population structure).

		Notice $\mu=\mu(q_A)$ is a non-linear function of $q_A$.
		Of course we have $\mu(0)=\mu_B=(\sigma_B(\mathcal{R}_B-1))^{-1}$ and $\mu(1)=\mu_A=(\sigma_A(\mathcal{R}_A-1))^{-1}$. In between, 
		$\mu(q_A)$ depends explicitly but neither linearly nor monotonously on  $q_A$. This non-linearity allows us to construct interesting phenomena. For instance, it is possible that $\mu(0)=\mu(1)$ and then $\lambda_i^j(0)=\lambda_i^j(1)$ but for intermediate values of $q$, $\mu$ may cross the critical values $\mu_{crit}^1$ and $\mu_{crit}^2$ resulting in a change in the signs of the $\lambda$'s and thus on the qualitative behavior of the 2-strain dynamics.  
		
		Such an example is given in Figure \ref{fig:exampletwohost}.  In this example, 
			we have chosen the parameters as follows. For the strains, we have set $\alpha^{11}>\alpha^{12}>\alpha^{21}>\alpha^{22}$. Ecologically speaking, this means that strain 1 is a good co-colonizer with itself, but is easy to be co-colonized by the other strain, and strain 2 is a bad co-colonizer with itself but is resistant to co-colonization by the other. Mathematically, this implies that $\lambda_1^2$ and $\lambda_2^1,$ respectively, decrease from positive to negative values, and increase from negative to positive values, as $\mu$ increases from $0$ to $+\infty$.
			
			For the host class parameters, we have set
			$\mathcal{R}_A<\mathcal{R}_B$ and $\sigma_A>\sigma_B$ such that $\sigma_A (\mathcal{R}_A-1)=\sigma_B (\mathcal{R}_B-1)$. The class $A$ is then a class with a small prevalence of infection but a lot of co-infection, and the class $B$ is the opposite. However, from the point of view of epidemiological interaction between strains, which is driven by $\mu_A=\mu_B$, each of the two classes alone shows the same qualitative strain dynamics.  
			This setup leads to a non-monotonous dependence of $\mu$ on $q_A$ and thus of $\lambda_1^2$ and $\lambda_2^1$ as well.

            In Figure \ref{fig:exampletwohost}, we observe that strain 2 excludes strain 1 in both classes of host when isolated ($q_A=0$ or $q_A=1$). However, thanks to heterogeneous population structure, in the case of epidemiological transmission between these classes (intermediate value of $q_A$), strain 1 may: be able to survive, coexist with strain 2, or even to exclude strain 2 from the system, depending on particular parameter values. This phenomenon results from our specific choice of both the host-dependent epidemiological parameters $\mathcal{R}_k$ and $\sigma_k$ and the strain-dependent deviation from neutrality $\alpha$'s, but is broadly illustrative of typical qualitative shifts in strain dynamics induced by population-level heterogeneity.
			
		\begin{figure}[h!]
			\centering
			\includegraphics[width=0.7\linewidth]{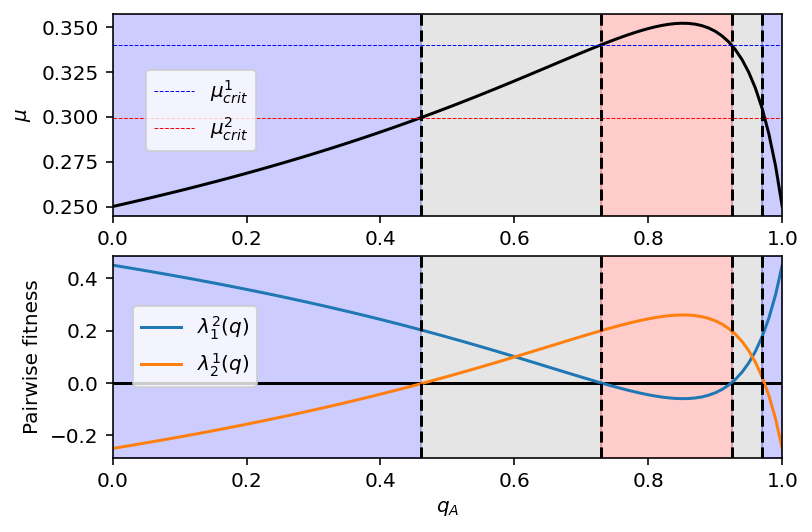}
			\caption{\textbf{Model illustration for 2-strain SIS system outcomes on a two-class host population, each with a different basic-reproduction number and mean coinfection vulnerability.} The strain 1 is the only survivor in both classes $A$ and $B$ ($\lambda_1^2>0,\lambda_2^1<0$ in both extremes $q_A=0,q_A=1$). By continuity, this is true when $q_A\approx 0$ or $q_A\approx 1$ (filled in blue). For intermediate values of $q_A$, there are 3 possibilities (bottom panel): i) strain 1 still is the only survivor (blue-shaded region), ii) both strains coexist (filled in gray) or iii) strain 2 is the only survivor (filled in red). The shifts between these qualitative regimes occur exactly at the values when $\mu$ of the heterogeneous population intersects with $\mu_{crit}^1$ and $\mu_{crit}^2$ (top panel).
				The epidemiological parameters are $\mathcal{R}_A=1.5$, $\mathcal{R}_B=5$, $\sigma_A=8$ and $\sigma_B=1$ so that $\mu_A=\mu_B=\frac14$. The perturbations from neutrality among strains are in pairwise vulnerabilities to coinfection \cite{madec2020predicting}: $(\alpha_{ij})=\begin{pmatrix}5.1&4\\-1&-2.7 \end{pmatrix}.$ }
			\label{fig:exampletwohost}
		\end{figure}

		\subsection{Application 2: Structure induced by vaccination}

The previous example may be applied in an intervention scenario such as a
vaccination model with two classes: A, vaccinated hosts, and B, non-vaccinated
hosts.

Here, however, we can explore this framework in slightly more detail by allowing
a heterogeneous distribution of protection effects among vaccinated hosts, in
the context of a universal vaccine (equally effective against all strains).
\subsubsection{Model formulation}
\paragraph{Vaccination effect.}
Within this model, we can implement the effect of an intervention or treatment
through four distributions in the host population: $\gamma_k$, the clearance
rate (for instance, under antibiotic treatment); $\beta_k$, the infection rate
of class $k$; $q_k$, the transmission rate of class $k$; and $\sigma_k$, the
ratio between primary and secondary infection rates. For simplicity, we assume
that all hosts share the same clearance rate $\gamma$ and the same coinfection
vulnerability factor $\sigma$. Thus, we examine homogeneous and heterogeneous
vaccination effects through changes in per-capita infection rates.

\noindent\textbf{No intervention population: homogeneous baseline $\beta_0$.}
The model is given by the following equations\footnote{Here the index $k$ serves
only to track the later emergent discrete population structure.}:
\begin{equation}\label{SID-novaccin}
\begin{cases}
\dfrac{d}{dt}{S}_k=r(1 - S_k)-\beta_0 S_k \Theta +\gamma (I_k +D_k ),\\[0.3em]
\dfrac{d}{dt}{I}_k= \beta_0 S_k \Theta -\sigma \beta_0 I_k \Theta  -(\gamma+r) I_k,\\[0.3em]
\dfrac{d}{dt}{D}_k= \sigma\beta_0 I_k \Theta  -(r+\gamma) D_k,
\end{cases}
\qquad
\Theta=\sum_{k\in\K} q_k (I_k+D_k).
\end{equation}

\noindent\textbf{Homogeneous intervention: $\beta=(1-v)\beta_0$.}
We assume full vaccination coverage. The model has the same structure as above,
but all hosts now experience a uniformly reduced infection rate $\beta$:
\begin{equation}\label{SID-vaccinhom}
\begin{cases}
\dfrac{d}{dt}{S}_k=r(1 - S_k)-\beta S_k \Theta +\gamma (I_k +D_k ),\\[0.3em]
\dfrac{d}{dt}{I}_k= \beta S_k \Theta -\sigma \beta I_k \Theta  -(\gamma+r) I_k,\\[0.3em]
\dfrac{d}{dt}{D}_k= \sigma\beta I_k \Theta  -(r+\gamma) D_k,
\end{cases}
\qquad
\Theta=\sum_{k\in\K} q_k (I_k+D_k).
\end{equation}

\noindent\textbf{Heterogeneous intervention: $\beta_k=(1-v_k)\beta_0$.}
Here, vaccinated hosts may experience different infection rates after
vaccination. Vaccination alone therefore generates heterogeneity in
transmission, manifested as a discrete structure in the $\beta_k$ values.\footnote{The
case of heterogeneous intervention combined with pre-existing heterogeneity will
be addressed in future work.} The system becomes a genuine multi-host-class
model:
\begin{equation}\label{SID-vaccink}
\begin{cases}
\dfrac{d}{dt}{S}_k=r(1 - S_k)-\beta_k S_k \Theta +\gamma (I_k +D_k ),\\[0.3em]
\dfrac{d}{dt}{I}_k= \beta_k S_k \Theta -\sigma \beta_k I_k \Theta  -(\gamma+r) I_k,\\[0.3em]
\dfrac{d}{dt}{D}_k= \sigma\beta_k I_k \Theta  -(r+\gamma) D_k,
\end{cases}
\qquad
\Theta=\sum_{k\in\K} q_k (I_k+D_k).
\end{equation}

Different distributions of vaccine effects may yield the same overall basic
reproduction number $\mathcal{R}_0$ post-vaccination, while differing in the
details of protection across individuals.

We assume that strain deviations from neutrality are independent of the host
class. In the homogeneous vaccination case, the intervention primarily induces a
change from $\mathcal{R}_0^{ori}=\frac{\beta_0}{\gamma+r}$ to
$\mathcal{R}_0^{hom}=\frac{\beta_0(1-v)}{\gamma+r}$, which subsequently affects
the total prevalence of infection and co-infection, and may also influence
strain dynamics through downstream effects on $\Lambda$ (see also
\cite{le2022disentangling}).

In the heterogeneous vaccination case, the variation $\beta_k=\beta_0(1-v_k)$
generates a distribution of $\mathcal{R}_k$. The basic reproduction number is
\begin{equation}
\mathcal{R}_0=\mathbb{E}_q(\mathcal{R}_k)=\dfrac{\beta_0}{r+\gamma}\bigl(1-\mathbb{E}_q(v_k)\bigr).
\end{equation}

Hence, for the same mean efficacy $\mathbb{E}_q(v_k)$, the basic reproduction
number remains the same. From the point of view of disease elimination (i.e.,
the condition $\mathcal{R}_0\leq 1$), heterogeneity plays no role.

However, if $\mathcal{R}_0>1$, heterogeneity does matter. First, according to
Corollary~\ref{prop:SIDS}, the total prevalence $\Theta^*$ is smaller in the
heterogeneous vaccination case. This indicates that a vaccine may be overall more
effective if some host classes respond better than others.

\paragraph{Strain interaction.}
When studying strain interactions, the impact of heterogeneous vaccination is
even more pronounced. Heterogeneity in the $v_k$ can further interfere with
strain selection, even when the overall prevalence is the same as under a
homogeneous vaccine.

In the specific case where strain variability is independent of host responses
to vaccination, Corollary~\ref{cor:strains_ind} yields the following pairwise
invasion coefficients:
\[
\begin{aligned}
\lambda_i^j
=&\;
\Xi_1 (b^i-b^j)
+ \Xi_2 (c^j-c^i)
+ \Xi_3 \!\left(c^{jj}-\frac{c^{ij}+c^{ji}}{2}\right)
\\
&\;
+ \Xi_4\bigl(w^{(i,j)}-w^{(j,i)}\bigr)
+ \Xi_5\!\left(\alpha^{ji}-\alpha^{jj}+\mu(\alpha^{ji}-\alpha^{ij})\right),
\end{aligned}
\]
where the effect of vaccination appears in the $\Xi_i$, in $\mu$
(see also Section~\ref{Hostindependant}).

\subsubsection{Exploration of the role of vaccine heterogeneity on  multi-strain selection}
From Proposition~\ref{prop:SIDS}, we already know that for a fixed mean vaccine
effect $\mathbb{E}_q(v_k)$, and thus a fixed basic reproduction number
$\mathcal{R}_0$, a heterogeneous distribution yields a greater reduction in
total prevalence $\Theta^*$.

We now ask how such heterogeneity affects strain
interaction and, consequently, epidemiological selection. As before, we focus
here on the effect of host heterogeneity on $\mu$ (see
Figure~\ref{fig:vaccinmu}).

\begin{figure}[h!]
\centering
\includegraphics[width=0.9\linewidth]{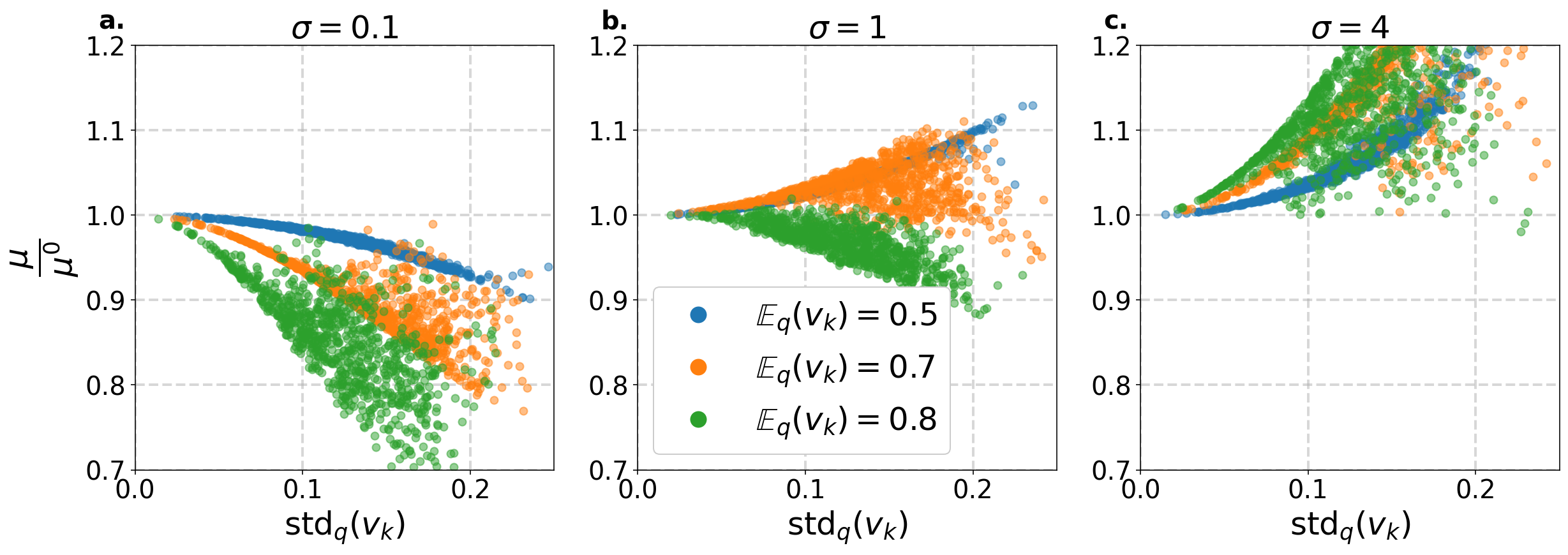}
\caption{\textbf{Vaccination-induced heterogeneity in
$\beta_k=\beta_0(1-v_k)$ may affect strain selection through its effect on
$\mu$ given in \eqref{eq:mu_app}.} The heterogeneity of $\mathcal{R}_k=\beta_k(r+\gamma)^{-1}$ may result
in a smaller or larger $\mu$ than the homogeneous value $\mu^0$. The relative
effect depends on the coinfection vulnerability factor $\sigma$, on the mean
vaccine efficacy relative to $q$ ($v=\mathbb{E}_q(v_k)$), and on the
heterogeneity of the vaccine-effect distribution
$\mathrm{std}_q(v_k)=\mathbb{E}_q(v_k^2)-v^2$. On average, small values of
$\sigma$ and large values of $v$ lead to smaller $\mu$ under vaccination, while
large $\sigma$ and small $v$ lead to larger $\mu$. The effect strengthens as
$\mathrm{std}_q(v_k)$ increases. Hence, transmission heterogeneity induced by
vaccination may alter $\mu$ and thereby affect the dynamics and final outcomes
of interacting pathogens. Here $\beta_0=10$.}
\label{fig:vaccinmu}
\end{figure}

We find that, everything else equal, the heterogeneity (measuring by $\mathrm{std}_q(v_k)$) of the vaccine monotonically change the value of $\mu$.
But $\mu$ may be increasing or decreasing with the heterogeneity depending on the global epidemiological parameters as $\sigma$ and $\beta_0$ and on the mean vaccine efficiency  $\mathbb{E}_q(v_k)$.
Ours simulations indicate that for small $\sigma$, sand small $\beta_0$ and for large mean vaccine efficiency, $\mu$ decrease with the heterogeneity of the vaccine. In contrast,
for large $\sigma$,  $\beta_0$ and for small mean vaccine efficiency, $\mu$ increase with the heterogeneity of the vaccine.

The global effect of $\mu$ has been studied independently of any particular host structure in \cite{gjini2020key}. 
This work shows that small $\mu$ tends to stabilize the dynamics while creating multiple stationary attractors with only a few persisting strains, 
whereas large $\mu$ tends to destabilize the system and favors more complex behavior such as cycles or chaos, allowing more strains to persist.

Hence, applying to this vaccine setup,  our simulations indicates that 
\begin{enumerate}[label=(\roman*)]
\item When the global mean parameters of the disease are low after vaccination (i.e. small $\beta_0$, small $\sigma$ or large $\mathbb{E}_q(v_k)$) the heterogeneity of the vaccine stabilize the dynamics but favorises multiple attractors. 
\item When the global mean parameters of the disease are large after vaccination (i.e. large $\beta_0$, large $\sigma$ or small $\mathbb{E}_q(v_k)$) the heterogeneity of the vaccine destabilize the dynamics yielding to complex attractors. 
\end{enumerate}
This situation is illustrated in figure \ref{fig:vaccinmudyn} wherein we have set the deviation from neutrality
\begin{equation}\label{alphavaccin}(\alpha_{ij})_{1\leq i,j\leq 4}=\begin{pmatrix}
    -0.72& -0.25 & -0.44& -0.38\\
     -0.07&  0.66&  0.6& -0.51\\
      0.99& -0.26&  0.4&  0.43\\
      0.92& -0.61& -0.26&  0.5
\end{pmatrix}.\end{equation}

\begin{figure}[h!]
\centering
\includegraphics[width=0.9\linewidth]{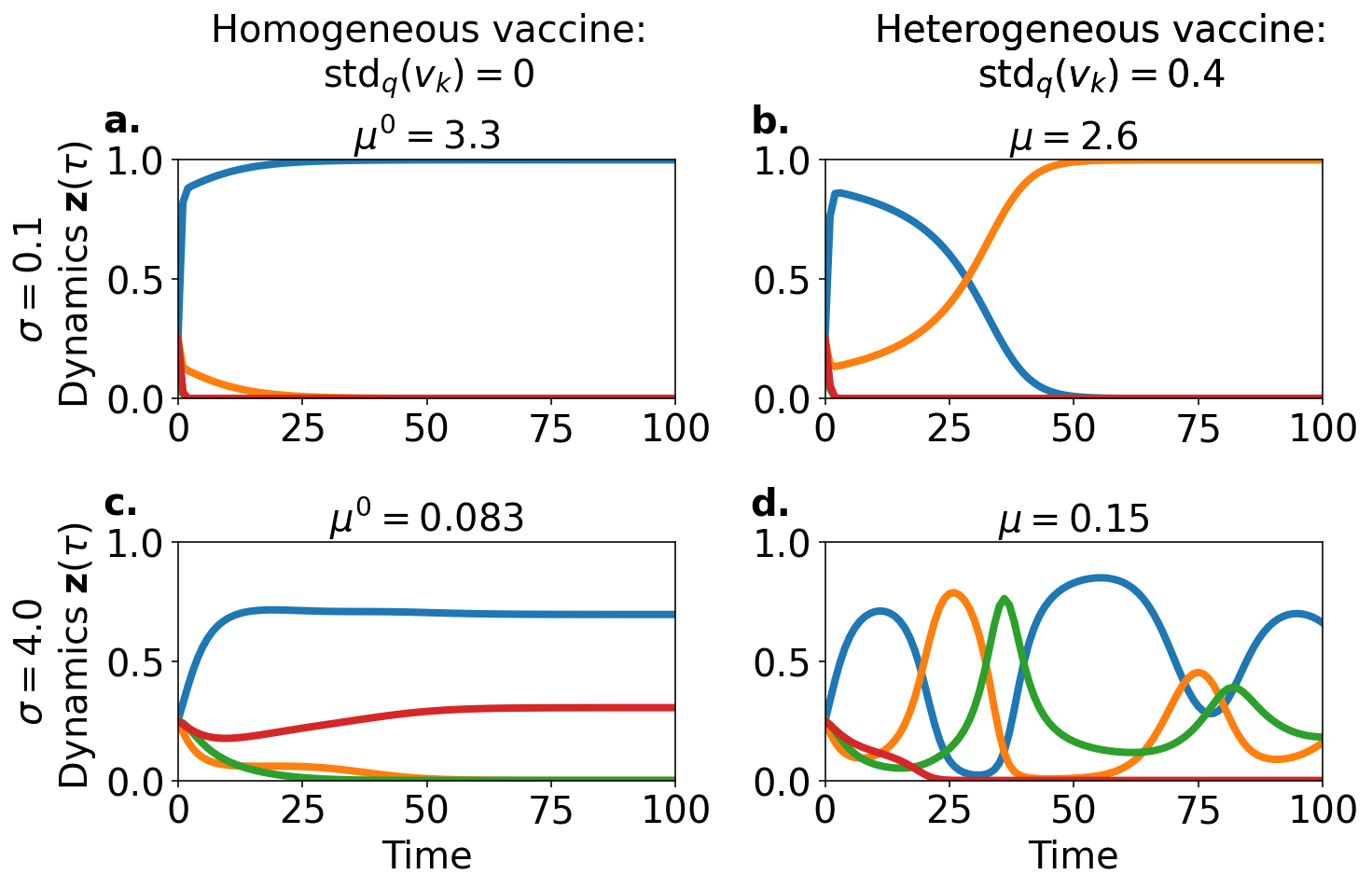}
\caption{
\textbf{Vaccination-induced heterogeneity alters strain selection through its effect on $\mu$ given in \eqref{eq:mu_app}.}
We illustrate how $\mu$ pre- and post-vaccine shapes four-strain replicator dynamics. All simulations use $\beta_0=10$, deviation from neutrality $\alpha$ defined in \eqref{alphavaccin} and mean vaccine efficacy of 50\%: $\mathbb{E}_q(v_k)=0.5$. 
The first row corresponds to a multi-strain system with relatively more competition among strains $\sigma=0.1$ (lower coinfection vulnerability factor $<1$), and the second row to a system with relatively more facilitation among strains $\sigma=4$ (high coinfection vulnerability factor $>1$). 
Left panels show dynamics under a homogeneous universal vaccine; right panels show dynamics under a universal vaccine with the same mean efficacy but variable protection across individuals $\mathrm{std}_q(v_k)=0.4$. 
\textbf{a-b.} Heterogeneity in vaccine effect decreases $\mu$ for small $\sigma$ (low coinfection propensity), and changes drastically the outcome of strain selection. \textbf{c-d.} The same heterogeneity in vaccine effect can markedly increase $\mu$ for large $\sigma$ (higher coinfection propensity), again significantly modifying strain selection.
}
 
\label{fig:vaccinmudyn}
\end{figure}

        \subsection{Application 3: Heterogeneous host contact network}
		Following \cite{moreno2002epidemic}, this model may be applied to a context of a complex host contact network, a case we study in detail in a companion paper \cite{networkPaper}.
		\subsubsection{Model formulation}
		Each host may interact with others though a static network of connections. 
		
		Here, the classes $\K$ describe the possible number of contacts of hosts in the population. For our purpose, the network $\mathcal{N}=\{\K,p\}$ is completely characterized by the proportion of nodes of degree $k\in\K$ though the probability law $p=(p_k)_{k\in\K}$. The mean connectivity of a node is then 
		\[\mathbb{E}_p(k)=\sum_{k\in\K} k p_k.\]
		 
		Using the mean-field approximation approach \cite{moreno2002epidemic}, the epidemic model is constructed as follows.
		
		The infective rate of a class is proportional to the number of contacts, times the probability of infection per contact $\lambda$:
		\[\beta_k=\lambda k.\]
		The probability of being infected by a class $k$ is proportional to the relative frequency of infected hosts in this class: $T_k=I_k+D_k=(1-S_k)$ and to the probability of being in contact with a class $k$ is:
		\[q_k=\dfrac{k p_k}{\E_p(k)}.\]
		
		The probability to be infected is proportionate to the mean field parameter \[\Theta=\dfrac{\E_p(k T_k)}{\E_p(k)}=\E_q(T_k).\]
		
		Since hosts differ only by their degree, all other parameters are independent of $k$.
		We obtain the following network system:
		
		\begin{equation}\label{SID-betaconstant}
			\begin{cases}
				\dfrac{d}{dt}{S}_k=r(1 - S_k)-\lambda k S_k \Theta +\gamma (I_k +D_k )\\
				\dfrac{d}{dt}{I}_k= \lambda k S_k \Theta -\sigma \lambda k 
				I_k \Theta  -(\gamma+r) I_k \\
				\dfrac{d}{dt}{D}_k= \sigma\lambda k 
				I_k \Theta  -(r+\gamma) D_k \\
			\end{cases},\;
		\end{equation}
		We set $\rho=\dfrac{\lambda}{r+\gamma}$ so that $\mathcal{R}_k=\rho k$.  According to theorem \ref{th:SISqsk},  the disease persists if and only if\footnote{Here, we use the notation $\mathcal{R}_0=\mathcal{R}_{\mathcal{N}}$ in order to insist on the effect of the network $\mathcal{N}$.} \[\mathcal{R}_{\mathcal{N}}:=\rho \dfrac{\E_p(k^2)}{\E_p(k)}>1.\] In that case $\Theta(t)\to \Theta^*\in(0,1)$ the only solution of 
		\[\rho\sum_{k\in\K} \dfrac{k^2p_k}{1+\rho k \Theta}=\E_p(k)\]
		
		The global attractor is defined by the following:
		\[S_k^*=(1+\rho k\Theta^*)^{-1},\; T_k^*=\rho\Theta^* k S_k^*,\quad I_k^*= \dfrac{\rho k \Theta^*}{1+\sigma\rho k\Theta^*} S_k^*\text{ and } D_k^*=\sigma\Theta^* \rho k I_k^*.\]

        Next, we focus on the interaction between strains.
		Since the population structuring depends only on host behavior/traits: the number of connections, it is natural to assume that the strain traits are independent of the hosts' classes. Then the replicator system is fully described by the quantities:

		\[
		\lambda_i^j=\Xi_1 (b^i-b^j) + \Xi_2 (c^j-c^i) +\Xi_3 \left(c^{jj}-\dfrac{c^{ij}+c^{ji}}{2}\right)
		+\Xi_4(w^{(i,j)}-w^{(j,i)})+\Xi_5\left(\alpha^{ji}-\alpha^{jj}+\mu (\alpha^{ji}-\alpha^{ij})\right)\]
		where the weights $\Xi_i>0$ are explicitly given by
		
		\[\Xi_1=\dfrac{\Theta^*\mathcal{X}^*}{(r+\gamma) \bar{k}}\sum_{k\in\K}p_kk S_k^*;
		\quad\Xi_2=\dfrac{\rho \Theta^*\mathcal{X}^*}{(r+\gamma) \bar{k}}\sum_{k\in\K}p_k k^3 \left(1-\dfrac12 \dfrac{D_k^*}{T_k^*}\right) \dfrac{S_k^*}{1+\sigma\rho k \Theta^*};\]
		\[\Xi_3=\dfrac{\sigma\mathcal{X}^*(\Theta^*\rho)^2}{\bar{k}(r+\gamma)}\sum_{k\in\K}p_kk^3\dfrac{ S_k}{1+\sigma \rho k \Theta^*};
		\quad\Xi_4=(r+\gamma)\Xi_3,\quad \Xi_5=\dfrac{(r+\gamma)}{2\sigma} \Xi_3,\] and
		\begin{equation}\label{munet}		\mu=\dfrac{1}{\sigma \rho \Theta^*}\sum_{k\in\K} h_k \dfrac{1}{k},\text{ with } h_k=p_k\frac{k^4}{(1+\sigma k\rho\Theta^*)^2(1+\rho k \Theta^*)}\left(\sum\limits_{s\in\K}p_s\frac{s^4}{(1+\sigma s\rho\Theta^*)^2(1+\rho s \Theta^*)} \right)^{-1}.\end{equation}
		
        \subsubsection{Analyzing the role of population structure heterogeneity}
        \label{sec:five}
        This quasi-neutral and replicator framework for multi-strain dynamics (Theorem \ref{Th:replicator}) allows to analyze more easily and transparently the precise role of heterogeneity in host population structure on general endemic prevalence of multi-strain pathogens and on their selective dynamics over time. We do not intend to be exhaustive in this exploration, but we address this issue in the simple case of host population structure manifested in the host contact degree distribution. We will analyze starting from the multi-strain system presented in the last example, corresponding to dynamics on a host contact network. 
    
		We compare an arbitrary host contact network $\mathcal{N}$ with the corresponding homogeneous network $\mathcal{N}_0$ with only one class with connectivity\footnote{If $\E_p(k)\notin\mathbb{N}$ then this is just a one-class SIDS system with $\beta=\rho \E_p(k)$. We use the vocabulary of a homogeneous network $\mathcal{N}_0$ to highlight the effect of the heterogeneity of the network $\mathcal{N}$.\label{footnote}} $\E_p(k)$. First, we compare the global endemic persistence quantities in the two scenarios (Lemma \ref{lemma:net0}), then we explore the strain selection (Lemma \ref{mubar}).
		\begin{lemma}\label{lemma:net0}
			Let $\mathcal{N}=(\K,(p_k)_{k\in\K})$ be a network with mean connectivity $k_0=\E_p(k)$. Denote $\mathcal{R}_\mathcal{N}=\rho \dfrac{\E_p(k^2)}{\E_p(k)}$.
			
			Let  $\mathcal{N}_0$ be the homogeneous network with constant connectivity $\E_p(k)$ and denote $\mathcal{R}_{\mathcal{N}_0}=\rho \E_p(k)$
			
			Assume that $\mathcal{N}\neq \mathcal{N}_0$.
			\begin{enumerate}[label=(\roman*)]
				
				\item In $\mathcal{N}$, we have $\Theta(t)=\mathbb{E}_q(T_k (t))$ converges to  $\Theta_{\mathcal{N}}:=\Theta^*\in[0,1]$.
				
				The disease persists, that is $\Theta^*>0$, if and only if $\mathcal{R}_{\mathcal{N}}>1$.
				
				\item In $\mathcal{N}_0$,  
				$\Theta(t)=T(t)$  converges to $\Theta_{\mathcal{N}_0}=T_{\mathcal{N}_0}=\max\left(0,1-\dfrac{1}{\mathcal{R}_{\mathcal{N}_0}}\right)$.
				
				The disease persists if and only if $\mathcal{R}_{\mathcal{N}_0}>1$.
				
				\item $\mathcal{R}_{\mathcal{N}}>\mathcal{R}_{\mathcal{N}_0}$. 
				
				\item If $\mathcal{R}_{\mathcal{N}}>1$ then $\Theta_{\mathcal{N}_0}<\Theta_{\mathcal{N}}$ 
			\end{enumerate} 
		\end{lemma}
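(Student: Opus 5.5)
The plan is to read all four items off Theorem~\ref{th:SISqsk} and the implicit equation \eqref{eq:theta}, specialised to the network case. Here $Q=\bun q^T$ with $q_k=kp_k/\E_p(k)$ and $\mathcal{R}_k=\rho k$. Throughout, $\mathcal{N}\neq\mathcal{N}_0$ is understood as: $p$ is not a Dirac mass, so the degree $k$ is non-degenerate under $p$.

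\emph{Items (i) and (ii).} By Remark~\ref{rmk:pi}, $\mathcal{R}_0=\E_q(\mathcal{R}_k)=\rho\sum_k \frac{k^2p_k}{\E_p(k)}=\mathcal{R}_{\mathcal{N}}$. Theorem~\ref{th:SISqsk} then gives global convergence of $(S_k,I_k,D_k)$, hence of $\Theta(t)=\E_q(T_k(t))$. The limit is $0$ when $\mathcal{R}_{\mathcal{N}}\le 1$, and the positive root $\Theta^*$ of \eqref{eq:theta} when $\mathcal{R}_{\mathcal{N}}>1$. For $\mathcal{N}_0$ there is a single class with $\mathcal{R}=\rho\E_p(k)$, and Theorem~\ref{th:SISqsk} gives the classical value $\max(0,1-1/\mathcal{R}_{\mathcal{N}_0})$.

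\emph{Item (iii).} This is the strict variance inequality $\E_p(k^2)>\E_p(k)^2$, which holds because $k$ is non-degenerate under $p$. Dividing by $\E_p(k)$ and multiplying by $\rho$ gives $\mathcal{R}_{\mathcal{N}}>\mathcal{R}_{\mathcal{N}_0}$.

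\emph{Item (iv).} Assume $\mathcal{R}_{\mathcal{N}}>1$. If $\mathcal{R}_{\mathcal{N}_0}\le1$, then $\Theta_{\mathcal{N}_0}=0<\Theta_{\mathcal{N}}$ and we are done. Otherwise set $k_0=\E_p(k)$ and $\Theta_0=1-1/(\rho k_0)\in(0,1)$. Define
\[
F(\Theta)=\frac{1}{k_0}\sum_{k\in\K}p_k\,\frac{\rho k^2}{1+\rho k\Theta},
\]
so that $\Theta_{\mathcal{N}}$ is the unique root of $F=1$ in $(0,1)$. Since $F$ is strictly decreasing, it suffices to show $F(\Theta_0)>1$.

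The key step is convexity in $k$ of $g(k)=\rho k^2/(1+bk)$ with $b=\rho\Theta_0>0$. The decomposition
\[
\frac{k^2}{1+bk}=\frac{k}{b}-\frac{1}{b^2}+\frac{1}{b^2(1+bk)}
\]
shows that $g$ is strictly convex on $k>0$. Jensen's inequality, strict because $p$ is non-degenerate, then gives
\[
F(\Theta_0)=\frac{\E_p(g(k))}{k_0}>\frac{g(k_0)}{k_0}=\frac{\rho k_0}{1+\rho k_0\Theta_0}=1,
\]
where the last equality uses $1+\rho k_0\Theta_0=\rho k_0$. Hence $\Theta_{\mathcal{N}}>\Theta_0=\Theta_{\mathcal{N}_0}$.

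The only delicate point is choosing the right convex function. The natural Jensen applied to $\mathcal{R}_k/(1+\mathcal{R}_k\Theta)$ under $q$ points the wrong way: that function is concave in $\mathcal{R}_k$, which is exactly why Corollary~\ref{prop:SIDS} gives the opposite inequality at fixed $\mathcal{R}_0$. So one must rewrite the sum under $p$, where the extra factor $k$ coming from $q_k\propto kp_k$ restores convexity.
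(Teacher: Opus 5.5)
Your proposal is correct and follows essentially the paper's argument: (i)--(iii) come from Theorem~\ref{th:SISqsk} and the strict variance inequality, and (iv) from Jensen's inequality for the convex map $k\mapsto k^2/(1+\rho k\Theta)$ combined with monotonicity in $\Theta$. The only cosmetic difference in (iv) is the evaluation point: you apply Jensen at $\Theta_{\mathcal{N}_0}$ and use that your averaged $F$ is decreasing, whereas the paper applies it at $\Theta_{\mathcal{N}}$ and uses that $\Theta\mapsto g_\Theta(\E_p(k))$ is decreasing.
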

		\begin{proof}
			$(i)$ is a direct application of theorem \ref{th:SISqsk} with $\K\subset\mathbb{N}$, $q_k=\dfrac{kp_k}{\E_p(k)}$, $\beta_k=\lambda k$, $r_k=r$ and $\gamma_k=\gamma$.
			
			$(ii)$ is well known and may also be seen as a particular case of $(i)$ with only one class of host.  
			
			$(iii)$ is a consequence of the positivity of the variance :
			$\E_p(k^2)>\E_p(k)^2.$
			
			$(iv)$ If $\Theta_{\mathcal{N}_0}=0$ there is nothing to prove. Assume that $\Theta_{\mathcal{N}_0}>0$. For any $\Theta\in(0,1)$ define $g_\Theta(k)=\dfrac{k^2}{1+\rho k\Theta}$.
			
			By definition of $\Theta_{\mathcal{N}_0}$ and $\Theta_{\mathcal{N}}$, we have 
			
			\begin{equation}\label{Jensen}
				g_{\Theta_{\mathcal{N}_0}}(\E_p(k))=\dfrac{\E_p(k)}{\rho}=\E_p(g_{\Theta_{\mathcal{N}}}(k)).\end{equation}
			
			It is straightforward to show that $k\mapsto g_\Theta(k)$ is a  convex function for any $\Theta\in(0,1)$ and 
			the Jensen inequality implies
			\[\E_p(g_\Theta(k))>g_\Theta(\E_p(k)).\]
			
			Thus
			\[g_{\Theta_{\mathcal{N}_0}}(\E_p(k))=\E_p(g_{\Theta_{\mathcal{N}}}(k))>g_{\Theta_{\mathcal{N}}}(\E_p(k)\]
			and the conclusion follows by the decreasing nature  of $\Theta\mapsto g_\Theta(\E_p(k))$. 
			
		\end{proof}
		This lemma shows that keeping fixed the mean contact rate, the disease benefits from the heterogeneity of the network in two ways. 
        
        Firstly qualitatively (point $(iii)$ ) for the mere persistence of the disease, but also secondly quantitatively (point $(iv)$ ):  when the disease persists in both a homogeneous population $\mathcal{N}_0$ and a contact-structured population $\mathcal{N},$ then the global probability $\Theta$ to be infected is higher in the network case $\mathcal{N}$. \

\begin{rmk}
    It is instructive to contrast point~(iv), which is specific to a network structure, with the general result of Proposition~\ref{prop:SIDS}. 
Two different definitions of heterogeneity yield opposite effects on prevalence. 

In the general situation of Proposition~\ref{prop:SIDS}, the basic reproduction number $\cR_0$ is fixed, and we find that host heterogeneity reduces the prevalence. 
This finding is consistent with existing results in the literature and is referred to as the "frailty effect": 
individuals in highly susceptible classes are infected quickly, leading to a decline in the overall prevalence.
  
    In the random network situation of \eqref{lemma:net0}, the mean number of contacts is fixed. 
As noted, for instance, in \cite{moreno2002epidemic}, increasing heterogeneity in the degree distribution (e.g., higher variance of degree) 
enhances the likelihood of high‑degree nodes, which act as persistent transmission hubs and, in turn, increase the prevalence.

\end{rmk}

		Next, we focus on the interaction between strains.  In the particular situation when the strain perturbation appears in the $\alpha$'s (pairwise co-infection vulnerabilities among strains), the dynamics are driven by $\mu$ \cite{gjini2020key}. The following lemma shows that $\mu$ is always lower in a heterogeneous network than in a homogeneous network with the same mean connectivity.
		\begin{lemma}\label{mubar}
			Let $\mathcal{N}=(\K,(p_k)_{k\in\K})$ be any network with  mean connectivity  $\E_p(k)=\sum_{k\in\K} k p_k$ and $\mathcal{N}_0$ be the homogeneous network with constant connectivity $\E_p(k)$. Assume that $\mathcal{R}_{\mathcal{N}_0}=\rho \E_p(k)>1$.
			
			Set respectively $\mu(\mathcal{N})$ and $\mu(\mathcal{N}_0)$ the values of $\mu$ for these networks given by \eqref{munet}.
			
			We have 
			\[\mu(\mathcal{N})\leq\mu(\mathcal{N}_0)\]
			with equality if and only if $\mathcal{N}=\mathcal{N}_0$
		\end{lemma}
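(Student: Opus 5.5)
The plan is to split $\mu(\mathcal{N})$ in \eqref{munet} into a prevalence factor $\dfrac{1}{\sigma\rho\Theta^*}$ and a weighted mean of $1/k$, and to bound each factor separately by its homogeneous counterpart. The first bound comes from Lemma~\ref{lemma:net0}; the second is a covariance (Chebyshev) inequality of the same flavour as the one used in Corollary~\ref{prop:SIDS}.

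\textbf{Step 1: the homogeneous value.} For $\mathcal{N}_0$ there is a single class $k_0=\E_p(k)$, so $h_{k_0}=1$ in \eqref{munet}. Hence
\[
\mu(\mathcal{N}_0)=\frac{1}{\sigma\rho\Theta_{\mathcal{N}_0}k_0},
\qquad \Theta_{\mathcal{N}_0}=1-\frac{1}{\rho k_0}>0 .
\]

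\textbf{Step 2: rewrite the heterogeneous value.} Set $a=\rho\Theta_{\mathcal{N}}$ and
\[
g(k)=\frac{k^3}{(1+\sigma a k)^2(1+ak)} .
\]
Then $h_k = p_k\,k\,g(k)/\E_p(k\,g(k))$, so
\[
\mu(\mathcal{N})=\frac{1}{\sigma\rho\Theta_{\mathcal{N}}}\,\frac{\E_p(g(k))}{\E_p(k\,g(k))}.
\]

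\textbf{Step 3: compare the weighted means of $1/k$.} The key observation is that $g$ is increasing on $(0,\infty)$. Indeed,
\[
g(k)=\frac{k}{1+\sigma ak}\cdot\frac{k}{1+\sigma ak}\cdot\frac{k}{1+ak}
\]
is a product of positive increasing functions. Since $k\mapsto k$ is also increasing, the covariance inequality gives $\mathtt{cov}_p(k,g(k))\ge 0$. This can be proved by symmetrization: $\tfrac12\sum_{k,s}p_kp_s(k-s)(g(k)-g(s))\ge0$. It follows that
\[
\E_p(k\,g(k))\ \ge\ \E_p(k)\,\E_p(g(k)),
\]
with equality if and only if $p$ is concentrated on a single degree, i.e.\ $\mathcal{N}=\mathcal{N}_0$. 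Therefore $\E_p(g)/\E_p(kg)\le 1/k_0$.

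\textbf{Step 4: compare the prevalences.} By Lemma~\ref{lemma:net0}(iii), $\mathcal{R}_{\mathcal{N}}\ge\mathcal{R}_{\mathcal{N}_0}>1$, so the disease persists in both networks. Lemma~\ref{lemma:net0}(iv) then gives $\Theta_{\mathcal{N}}>\Theta_{\mathcal{N}_0}$ when $\mathcal{N}\neq\mathcal{N}_0$. Combining this with Step 3,
\[
\mu(\mathcal{N})\le\frac{1}{\sigma\rho\Theta_{\mathcal{N}}k_0}\le\frac{1}{\sigma\rho\Theta_{\mathcal{N}_0}k_0}=\mu(\mathcal{N}_0).
\]
Both inequalities are strict unless $\mathcal{N}=\mathcal{N}_0$, which settles the equality case.

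The only delicate point is Step 3. One has to see that the weights $h_k$, which come from $\pi_k\Theta^*\mathcal{R}_kI_k^*D_k^*/T_k^*$, reduce to $p_k\,k\,g(k)$ with $g$ monotone, so that the weighted mean of $1/k$ is compared with $1/\E_p(k)$ in the right direction. Before relying on this, I would recheck the exponent $k^4$ against the explicit network formulas for $S_k^*$, $I_k^*$, $D_k^*$ and $q_k=kp_k/\E_p(k)$. Any power $k^m$ with $m\ge 2$ in $h_k$ still makes the corresponding $g$ increasing, so the argument is robust to that bookkeeping.
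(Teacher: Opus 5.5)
Your proof is correct and is essentially the paper's argument. It uses the same factorization
$\mu(\mathcal{N})=\mu(\mathcal{N}_0)\frac{\Theta_{\mathcal{N}_0}}{\Theta_{\mathcal{N}}}\cdot\frac{\E_p(k)\,\E_p(f(k))}{\E_p(k f(k))}$
with $f(k)=k^3/\bigl((1+\sigma\rho k\Theta^*)^2(1+\rho k\Theta^*)\bigr)$, Lemma~\ref{lemma:net0}(iv) for the prevalence ratio, and the Chebyshev covariance inequality for the increasing $f$. One caveat concerns only your closing aside, which the proof does not rely on: the claim that any power $k^m$ with $m\ge 2$ would do is false, since for $m=2$ or $m=3$ the function $g(k)=k^{m-1}/\bigl((1+\sigma ak)^2(1+ak)\bigr)$ tends to $0$ as $k\to\infty$ and so is not increasing; the factor-by-factor monotonicity genuinely needs $m\ge 4$, and fortunately $k^4$ is exactly what $\pi_k\Theta^*\mathcal{R}_kI_k^*D_k^*/T_k^*$ gives once $\pi_k=kp_k/\E_p(k)$ and $\mathcal{R}_k=\rho k$ are substituted.
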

		\begin{proof}
			Assume that $\mathcal{N}\neq \mathcal{N}_0$ (otherwise there is nothing to prove).
			
			For the homogeneous network with constant connectivity $\E_p(k)$, the situation is the well known one class situation for which  we have ${\mathcal{R}_{\mathcal{N}_0}}=\rho \E_p(k)$, $\Theta_{\mathcal{N}_0}=T_{\mathcal{N}_0}=1-\dfrac{1}{\mathcal{R}_{\mathcal{N}_0}}$ and 
			\[\mu(\mathcal{N}_0)=\dfrac{1}{\sigma \mathcal{R}_{\mathcal{N}_0} \Theta_{\mathcal{N}_0}}=\dfrac{1}{\sigma (\rho{\E_p(k)}-1)}.\]

			We write \[\mu(\mathcal{N})=\mu(\mathcal{N}_0)\dfrac{\Theta_{\mathcal{N}_0}}{\Theta_{\mathcal{N}}}\sum_{k\in\K} h_k \dfrac{\E_p(k)}{k}\]
			
			Firstly, from the lemma \ref{lemma:net0}, we have
			$)\dfrac{\Theta_{\mathcal{N}_0}}{\Theta_{\mathcal{N}}}<1.$

			Secondly, define the function $f(k)=\dfrac{k^3}{(1+\sigma \rho k \Theta^*)^2 (1+\rho k \Theta^*)}$. The explicit formula \eqref{munet}   yields
			\[  \sum_{k\in\K} h_k \dfrac{\E_p(k)}{k}= \dfrac{\E_p(f(k))\E_p(k)}{\E_p(f(k)k)}.\]

			Since the function $f$ is increasing in $k$, we have:
			\[0<\E_p\left((k-\E_p(k))(f(k)-f\left(\E_p(k)\right))\right)
			=\E_p\left((k-\E_p(k))(f(k)-\E_p(f(k))\right)
			=\E_p\left(f(k)k\right)-\E_p(f(k))\E_p(k).\]

			Hence
			\[\sum_{k\in\K} h_k \dfrac{\E_p(k)}{k}= \dfrac{\E_p(f(k))\E_p(k)}{\E_p(f(k)k)}<1\]
			which ends the proof.

		\end{proof}

        \begin{figure}[t!]
			\centering
			\includegraphics[width=0.75\linewidth]{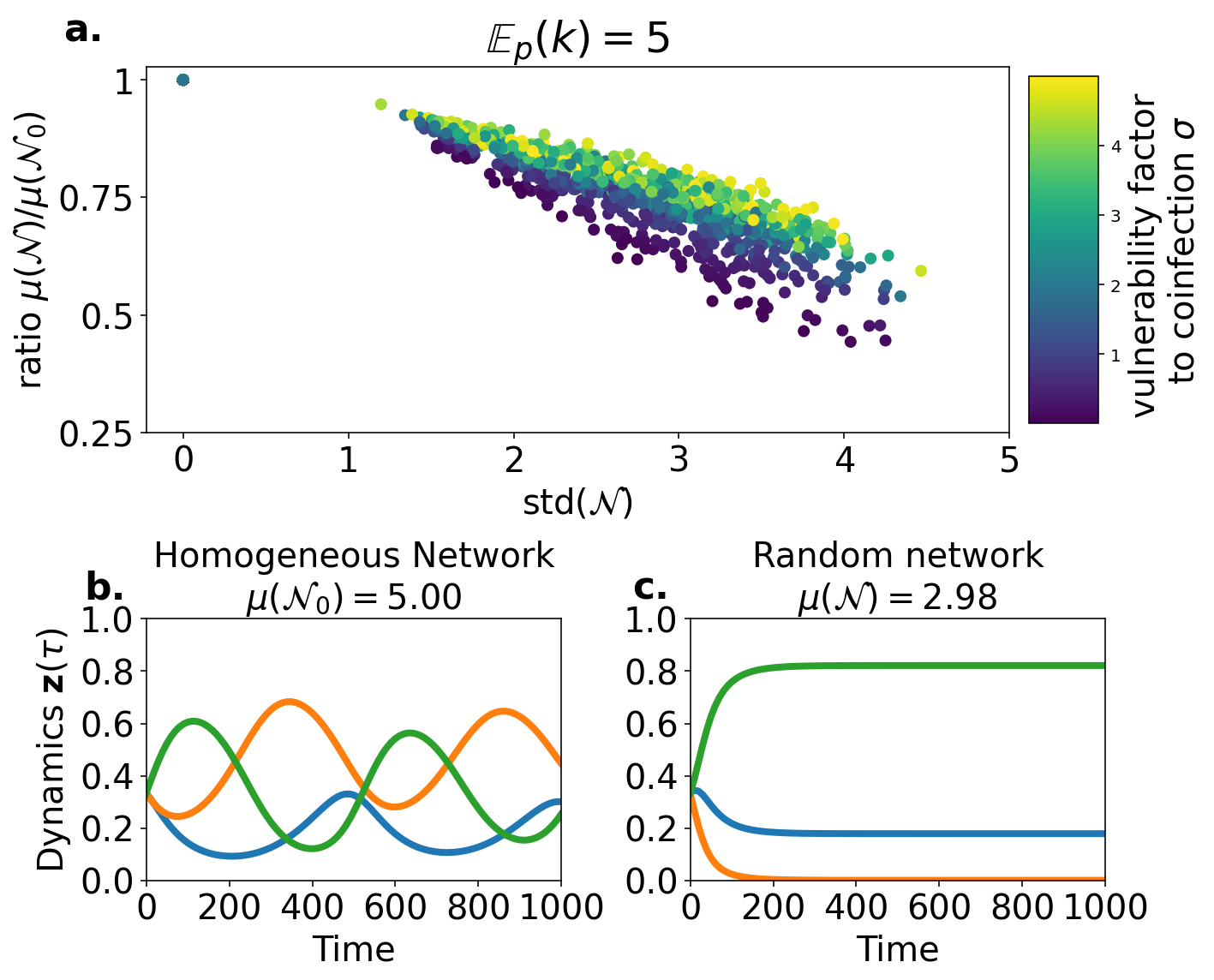}
            \caption{
            \textbf{Effect of host contact network heterogeneity on $\mu$ — a key driver of strain selection in the replicator.}\textbf{a.} We set and the per contact rate of infection $\rho=0.6$. We compare the values of $\mu(\mathcal{N})$ and $\mu(\mathcal{N}_0)$ for 1000 randomly generated networks and random coinfection susceptibility factor $\sigma\in[0,5]$. For each network we draw one marker with coordinates $\left(\mathrm{std}(\mathcal{N}), \mu(\mathcal{N})/\mu(\mathcal{N}_0)\right)$ and a color corresponding to the value of $\sigma$. We fix $\rho=0.6$  and the mean contact degree $\mathbb{E}_p(k)=5$ yielding  a fixed reproduction number in the homogeneous situation, $\mathcal{R}(\mathcal{N}_0)=\rho\,\mathbb{E}_p(k)=3.$.
            We observe that the ratio — and thus the effect of network heterogeneity on strain dynamics — increases with the heterogeneity $\mathrm{std}(\mathcal{N})$ and decreases with the epidemiological parameters $\sigma$. For a given deviation from neutrality $\alpha's$, a decrease in $\mu$ stabilizes the dynamics as shown in the two other sub-figures. \textbf{b.} Here we have set, $\sigma=0.4$ and $\rho=0.3$ thus $\mu(\mathcal{N}_0)=\left(\sigma (\rho\mathbb{E}_p(k)-1\right)^{-1}=5$ and 
$(\alpha_{ij})_{1\leq i,j\leq 3}=\begin{pmatrix} 0.15& -0.08&  0.49\\           -0.07& -0.16& -0.42\\
             0.32& -0.38& -0.37\end{pmatrix}$ and simulate the corresponding replicator system starting from uniform initial conditions among 3 strains. \textbf{c.} Same as in \textbf{b} but assuming a random contact network for the host population resulting in the new lower value of $\mu(\mathcal{N}).$}
	\label{fig:exampNetwork}
		\end{figure}	
		As a consequence, the host contact network may affect the strain dynamics by decreasing the value of $\mu$. We have previously shown that for strains varying only in $\alpha_{ij}$, lower values of $\mu$ keeping all else fixed, tend to create multi-stable coexistence fixed points with few strains, as opposed to large values of $\mu$ pushing the system towards more complex regimes of oscillatory attractors with many strains \cite{gjini2020key}. When strains vary in more dimensions, the $\mu$ effects on $\Lambda$ become nonlinear and even more complex \cite{le2023quasi}.   
		Now, how much $\mu$ will change in a network and its final effect will depend on both $\rho \E_p(k)$ (the overall strength of the infection), on $\sigma$ (the strength of co-infection relative to single infection) and on the heterogeneity of the network itself $\mathrm{std}(\mathcal{N})^2=V(\mathcal{N})=\E_p(k^2) - \E_p(k)^2$. Although we don't fully explore the downstream effects of $\mu$ on strain selection, Figure \ref{fig:exampNetwork}a illustrates precisely the altered values of $\mu$ in several heterogeneous host contact networks relative to the homogeneous population, expected to interfere subsequently with strain selection in the replicator equation (example in Fig. \ref{fig:exampNetwork}b). A more in-depth analysis of network effects on multi-strain dynamics is undertaken in \cite{networkPaper}, where we highlight how strain full interaction landscape also affects strongly the sensitivity of final selection dynamics to the network heterogeneity. Many such complexities in multi-strain systems with different layers and sources of host heterogeneity can now be approached with our framework.

        \section{Conclusion and discussion}
        \label{sec:discuss}
        Modelling multi-strain dynamics on realistic heterogeneous host populations is challenging because it combines three sources of complexity at once: pathogen diversity and interactions, host-level heterogeneity, and eventual contact network structure. The difficulties can be conceptual, mathematical, computational, and also data-related. Even single-strain epidemics on networks are hard to analyze exactly \cite{abubakar2025pairwise}. With multiple strains, 
pair and higher-order moment closures become unreliable, invasion criteria for new strains are unclear, stability and coexistence conditions are difficult to derive, and classic quantities like $\mathcal{R}_0$ become strain- and population structure-dependent. 
This limits theoretical insight and forces reliance on simulations and restricted parameter regimes, e.g., few strains \cite{yu2025spatial, zhang2016epidemic,pinotti_interplay_2020} or omission of important processes such as co-infection or detailed immune histories. Here, we bring a significant contribution to address these challenges \cite{wikramaratna2015five}, providing a new powerful analytical avenue for studying, simulating, and predicting multi-strain dynamics with coinfection, on structured host populations, including host contact networks.

        The core difficulty in multi-strain formalisms on networks is that strain interactions, host heterogeneity, and network structure amplify each other, typically leading to very large state spaces, loss of analytical simplicity, heavy computational demands, and limited empirical validation. 
        By using the strain similarity assumption as a simplifying core organizing principle in our epidemiological SIS multi-strain model with coinfection, we have shown, as in \cite{madec2020predicting,le2023quasi, madec2025derivation}, that model reduction is possible, arriving at a global replicator equation for strain frequencies over the entire host population (our central result in Theorem \ref{Th:replicator}). This has both computational and analytical advantages, including the promise of easier aggregated parameter identifiability and sensitivity analysis. Perhaps the most important conceptual and theoretical contribution of this finding is to link directly and mathematically mutual invasibility to co-circulation and coexistence dynamics between strains, in a significantly more general setup than previous work \cite{madec2020predicting, le2023quasi, le2022disentangling, park2024predicting}, explicitly capturing the roles of population structure and strain trait variability, and their interplay. 

Mathematically, our results are similar in spirit to our previous work invoking the replicator equation to reduce multi-strain epidemiological models \cite{madec2020predicting,le2023quasi,madec2025derivation}. 
As in these earlier studies, we apply Tikhonov’s theorem for slow–fast systems under the assumption of quasi-neutrality of strains. 
We thus rewrite the system in a slow–fast form, in which the fast dynamics are entirely determined by (i) the global aggregated variables, as in the single-strain case, and (ii) a linear system asymptotically governed by a matrix $A^*$ whose dominant eigenvalue is equal to $0$. 
The explicit characterization of this fast system allows us to derive the reduced slow dynamics, which necessarily take the form of a replicator equation.
Providing a complete description of this fast system is more delicate in the present setting than in our earlier work.

For (i), the global variables follow a non-trivial SIS-type system whose dynamics may {\em a priori} be challenging to describe. 
Here, (i) is resolved since the classical SIS system with host heterogeneity has been extensively studied in the literature, and the corresponding results extend directly to our coinfection model (Theorem \ref{th:SISqsk}). 
It is worth noting that in extensions of this framework—for instance, models with sequential clearance or infection rates depending on the current level of infection—Item~(i) may become a significantly more challenging problem.

On the other hand, Item~(ii) represents the main mathematical difficulty addressed in this paper, 
and the new approach we develop here can naturally be extended to more general coinfection models. This involves a matrix of size $2 \times p$, where $p>1$ denotes the number of host classes. 
In contrast to the trivial case $p=1$ considered in previous works, this requires the analysis of the spectrum of a matrix of potentially large dimension. 
We resolve this difficulty by rewriting the system in such a way that $A^*$ has a Metzler structure, which enables a detailed spectral analysis (lemma \ref{lemmaAstar}).

        While some of our results resonate with already-known features of epidemics in structured populations or networks \cite{van2002reproduction, pastor2001epidemic2}, many results are new, including: (i) the formal derivation of the model reduction based on similarity arguments and time-scale separation, (ii) the nontrivial way in which host population structure and strain selection intertwine in the explicit payoff matrix of the replicator equation, denoting pairwise invasion fitnesses between strains, (iii) the exact role of population heterogeneity on multi-strain endemic disease, and (iv) finally, the generality of the derived expressions for any population structure, intrinsic to the host population or induced by interventions.

Most notably, we show that heterogeneity can have opposing effects on different quantities of epidemiological interest, depending on what is controlled for. We formally prove that in a heterogeneous population, at endemic steady state, the overall prevalence of infection is lower than in a homogeneous population with the same $\mathcal{R}_0$ (Proposition \ref{prop:prevalence}). On the other hand, the threshold for disease persistence is lower in heterogeneous contact networks with same mean degree, and probability of infection is higher, making it easier for diseases to persist (Lemma \ref{lemma:net0}). We also prove that in heterogeneous host contact networks, the key quantity related to a special average of single-to-coinfection prevalence ratio across the entire population $(\mu)$ is always lower than its counterpart in a homogeneous setting with same mean connectivity (Lemma \ref{mubar}). However in more general heterogeneous population structures, the $\mu$ could be higher or lower than the one in the homogeneous setting (see Application 2 in Section 4, Figure \ref{fig:vaccinmu}), with potentially different effects on strain coexistence, e.g. stabilizing vs. destabilizing, or reducing the number of coexisting strains vs. increasing it \cite{gjini2020key}. 

This paper provides a basis upon which several theoretical and application extensions can be made in the fascinating and highly-relevant field of multi-strain and multi-host parasites \cite{webster2017acquires}. In particular, the special case of our SIS model with $N$ strains on a host contact network is studied in detail in a related paper \cite{networkPaper}. Future research building on this framework could investigate specific scenarios by focusing on particular host population structures and explicitly incorporating the effects of the mean, variance, and higher moments to clarify their qualitative and quantitative roles in shaping the dynamics. Further work could extend the examples presented here by accounting in greater detail for factors such as vaccination, age structure, and heterogeneity in susceptibility and infectivity \cite{mossong2008social,davies2020age,franco2022inferring,langwig2017vaccine}, within host populations or metapopulations. Finally, extensions of this model to account for stochasticity effects in small discrete populations remain an avenue of ongoing and prospective methodological developments. Applications can be far-ranging, from infectious diseases, e.g., multi-strain endemic respiratory pathogens, to multi-species colonization and co-colonization processes, opinion propagation in structured media, complex systems, and evolutionary dynamics. 
		
		\section{Proofs}
        \label{sec:proof}
	\subsection{Proofs of the Theorem \ref{Neutral:thm} on the neutral dynamics}\label{proofs:neutral}
    \subsubsection{A technical Lemma}
    
        Let us start by a technical lemma on irreducibility in block matrices.
		\begin{lemma}\label{lemma:irrecutiblekro}
			Let $Q=(q_{ks})_{1\leq k,s\leq p}$ be a non-negative matrix in $\mathcal{M}_p$ and $B_1,\cdots,B_p$ a family of $p$ strictly positive matrices in $\mathcal{M}_2$. Define
			\[M=\text{diag}(B_1,\cdots,B_p) \left(Q\otimes \mathbb{I}_2\right).\]
			In other words,   the square matrix $M=(M_{ks})_{k,s}\in\mathcal{M}_{pn}$ is  given by the $p\times p$ block structure
			\[M_{ks}=q_{ks} B_k\in\mathcal{M}_n.\]
			
			If $Q$ is irreducible, then $M$ is also irreducible.
		\end{lemma}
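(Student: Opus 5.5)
I will use the graph-theoretic characterization of irreducibility. A non-negative square matrix is irreducible if and only if its associated directed graph is strongly connected. That graph has an edge $a\to b$ whenever the $(a,b)$ entry is positive. Index the rows and columns of $M$ by pairs $(k,a)$ with $k\in\lrbP$ and $a\in\{1,2\}$, where $(k,a)$ denotes the $a$-th coordinate inside the $k$-th block. Then the entry of $M$ in position $((k,a),(s,b))$ equals $q_{ks}(B_k)_{ab}$.

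The key observation is that every $B_k$ is strictly positive. So $(B_k)_{ab}>0$ for all $a,b$, and the entry above is positive exactly when $q_{ks}>0$. Hence, in the graph of $M$, there is an edge $(k,a)\to(s,b)$ if and only if there is an edge $k\to s$ in the graph of $Q$, whatever $a,b\in\{1,2\}$ are. In other words, the graph of $M$ is the graph of $Q$ with each vertex doubled, and each edge of $Q$ lifted to all four edges between the corresponding pairs of copies.

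Next, I will lift paths. Take two vertices $(k,a)$ and $(s,b)$. First suppose $k\neq s$. Irreducibility of $Q$ gives a path $k=k_0\to k_1\to\cdots\to k_m=s$ in the graph of $Q$. The lifted path $(k_0,a)\to(k_1,1)\to\cdots\to(k_{m-1},1)\to(k_m,b)$ is then a path in the graph of $M$; when $m=1$ it is the single edge $(k,a)\to(s,b)$. Now suppose $k=s$. Strong connectivity of $Q$ (assuming $p\geq2$) gives a cycle through $k$, namely a path $k\to\cdots\to k$ of positive length. Lifting it in the same way, with the free internal choices, connects $(k,a)$ to $(k,b)$ for any $a,b$. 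If $p=1$, irreducibility forces $q_{11}>0$ (since $Q\ne0$, which is implicit here because $Q^T$ is Markov), so $M=q_{11}B_1$ is strictly positive and trivially irreducible. Therefore the graph of $M$ is strongly connected, and $M$ is irreducible.

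There is no real obstacle. The only points needing care are the same-block case $k=s$, which needs a closed walk through $k$, and the degenerate case $p=1$. Strict positivity of the $B_k$ is exactly what makes the lifting independent of the internal coordinates. I would also remark that the statement is written with $\mathcal{M}_{pn}$, and the argument works verbatim for $n\times n$ positive blocks.
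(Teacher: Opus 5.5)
Your proof is correct and follows the same route as the paper. Both use the graph-theoretic characterization of irreducibility, together with the observation that strict positivity of each $B_k$ turns every edge $k\to s$ of $Q$ into all four edges between the copies of $k$ and $s$ in the graph of $M$; your explicit path lifting, including the same-block case via a closed walk and the degenerate case $p=1$ (where $Q\neq 0$ is needed), spells out details the paper leaves implicit.
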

		\begin{rmk}
			In particular, if $B_k=B$ for each $k=1,\cdots,p$, then
			\[M=Q\otimes B.\]
		\end{rmk}
		\begin{proof}
			This is a direct consequence of the graph-theoretic characterization of irreducibility for nonnegative matrices (see, e.g., \cite{Bellman1968}). 
			Since each $B_k$ is strictly positive\footnote{
				The strict positivity of each $B_k$ is essential. If the $B_k$ are only assumed to be non-negative and irreducible, the resulting matrix $M$ may still be reducible. 
				For instance, taking 
				$Q=B_1=B_2=\begin{pmatrix}0&1\\1&0 \end{pmatrix}$ 
				produces the reducible matrix
				\[
				M=diag(B_1,B_2)(Q\otimes \mathbb{I}_2)
				=\begin{pmatrix}0&1\\1&0 \end{pmatrix}\otimes\begin{pmatrix}0&1\\1&0 \end{pmatrix}
				=\begin{pmatrix} 
					0&0&0&1\\
					0&0&1&0\\
					0&1&0&0\\
					1&0&0&0
				\end{pmatrix}.
				\]
			}, every edge $q_{ks}>0$ in the graph of $Q$ induces all possible edges between the corresponding block vertices in $M$. 
			Thus, the strong connectivity of $Q$ carries over to $M$, and $M$ is irreducible. 
			For related developments in the context of block matrices, see \cite{Romance2015}.
		\end{proof}
	\subsubsection{Proof of Lemma \ref{lemmaAstar}}\label{proofs:neutral33}
    (i) is immediate.  
			(ii) follows since only diagonal terms may be negative, and irreducibility of $Q\otimes_{col}(B_1,\ldots,B_p)$ is inherited from $Q$ (Lemma \ref{lemma:irrecutiblekro}).  \\
			(iii) Writing $X_k^*=(I_k^*,D_k^*)^T$, one checks directly that $A^*X^*=\bze_{2p}$ using the equilibrium equations.  \\
			(iv) By Perron–Frobenius theory for irreducible Metzler matrices, $0$ is a simple eigenvalue with positive eigenvector, all other eigenvalues having negative real part.  \\
			(v) The existence of a positive left eigenvector $\omega^*=(\phi_k^*,\psi_k^*)_{k\in\K}$ follows from (iv) by the Perron–Frobenius theorem for Metzler matrices. Since $\omega^*$ is unique up to a multiplicative constant, it is uniquely determined by the normalization $\omega^* X^*=1$.  
			
			The condition $\omega^*A^*=\bze_{2p}^T$ can be written explicitly as
			\begin{equation}\label{eq:omega}
				\forall s\in\K:\; (0,0)=\sum_{k\in\K} (\phi^*_k,\psi_k^*)A_{ks}^*
				=\left(\sum_{k\in\K}\beta_k\Bigl(S_k^*\phi^*_k+\tfrac{\sigma_k I_k^*}{2}\psi_k^*\Bigr)q_{ks}\right)(1,1)
				-(r_s+\gamma_s)(\phi^*_s,\psi_s^*)-\sigma_s\beta_s\Theta^*_s\Bigl(\phi_s^*-\tfrac12 \psi_s^*,0\Bigr).
			\end{equation}
			
			For each $s\in\K$, define
			\begin{equation}\label{eq:defus}
				u_s=\sum_{k\in\K}\beta_k\Bigl(S_k^*\phi^*_k+\tfrac{\sigma_k I_k^*}{2}\psi_k^*\Bigr)q_{ks}.
			\end{equation}
			Then, at each $s\in\K$, one obtains
			\[
			\begin{cases}
				(r_s+\gamma_s+\sigma_s\beta_s\Theta_s^*)\,\phi_s^*=u_s+\tfrac12\sigma_s\beta_s\Theta_s^*\,\psi_s^*,\\[1ex]
				(r_s+\gamma_s)\,\psi_s^*=u_s.
			\end{cases}
			\]
			Introducing
			\[
			\xi_s^*=\frac{\sigma_s\beta_s\Theta_s^*}{r_k+\gamma_k+\sigma_s\beta_s\Theta_s^*}=\sigma_s \dfrac{I_s^*}{S_s^*},
			\]
			we deduce
			\begin{equation}\label{eq:B4us}
				\phi_s^*=\frac{u_s}{r_s+\gamma_s}\Bigl(1-\tfrac12\xi_s^*\Bigr),\qquad
				\psi_s^*=\frac{u_s}{r_s+\gamma_s},\quad \forall s\in\K.
			\end{equation}
			
			It remains to compute $u=(u_s)_{s\in\K}$.
			Substituting \eqref{eq:B4us} into \eqref{eq:defus} yields
			\[
			u_s=\sum_{k\in\K}\mathcal{R}_k S_k^* u_k q_{ks}
			+\sum_{k\in\K}\frac{\mathcal{R}_k u_k}{2}\Bigl(\sigma_k I_k^*-S_k^*\xi_k^*\Bigr)q_{ks}.
			\]
			Since by definition $\sigma_k I_k^*-S_k^*\xi_k^*=0$ for all $k$, it follows that $u=(u_s)_{s\in\K}$ satisfies
			\[
			u^T=u^T\text{diag}(\mathcal{R}S^*)Q.
			\]
			By Corollary \ref{corxi}, one deduces $u=\mathcal{X}^*\pi$ for some scalar $\mathcal{X}^*>0$, which is uniquely determined by the normalization $\omega^*X^*=1$.\qed
	\subsubsection{Proof of Lemma \ref{lemma:Neutral}}
\label{proofs:neutral34}

\begin{proof}Denote $E=\R^2$ and $E^p\equiv \R^{2p}$. Let $F={\omega^*}^\perp:=\{Y\in E^p \mid \omega^* Y=0\}$. Then $E^p=\text{span}(X^*)\oplus F$.  
			Define $\Pi_F X=X-(\omega^* X) X^*$ as the projection onto $F$. and take $\mathfrak{X}^i=\Pi_FX^i$. We have $X^i=u^i X^* + \mathfrak{X}^i$.
			
			Moreover, both $\text{span}(X^*)$ and $F$ are invariant under $A^*$, which implies
			\[
			\Pi_F A^*=A^* \Pi_F.
			\]
			Hence, setting $u^i=\omega^* X^i$, we see that \eqref{eq:At} is equivalent to
			\[
			\begin{cases}
				\dfrac{d}{dt} u^i(t)=
                \omega^*\left(A(t)-A^*\right)X^*u_i(t)
                +\omega^*\left(A(t)-A^*\right) \mathfrak{X}^i(t),\\[6pt]
				\dfrac{d}{dt} \mathfrak{X}^i(t)(t)=A^*\mathfrak{X}^i(t)(t)+\Pi_F\left(A(t)-A^*\right) X^i(t).
			\end{cases}
			\]
			From the second equation, we have
            \[\mathfrak{X}^i(t)=e^{tA^*}\mathfrak{X}^i(0)+\int_0^t e^{(t-s)A^*} \Pi_F\left(A(s)-A^*\right) X^i(s). \]
            From the lemma \ref{lemmaAstar} -(iv), we know that  all eigenvalues of $\Pi_FA^*$, lie in the open left half-plane. This together with the assumption of the exponential convergence of $A(t)$ and the fact that  $X^i$ is bounded shows that  there exists $\eta_1>0$ and $C_1>0$ such that for any $t\geq 0$:
			\[
			\|\mathfrak{X}^i(t)\|\leq Ce^{-\eta_1 t} .
			\]
            
            From the first equation, we see  directly that there exists $C_2>0$ and $\eta_2$ such that 
            \begin{equation}\label{duiexp}
            \left|\dfrac{d}{dt} u^i(t) \right|\leq C_2 e^{-\eta_2 t}
            \end{equation}
            By the Cauchy criteria, if follows  that $\lim_{t\to+\infty}u^i(t):=z^i \geq0$ exists.

            Moreover, from the assumption $\sum_{i=1}^N X^i(t)\to X^*$ we get
            \[\sum_{i=1}^N u^i(t)=\omega^* \sum_{i=1}^N X^i(t)\to \omega^* X^*=1
            \]
            the last equality coming from the mere definition of $\omega^*$ in lemma \ref{lemmaAstar}.

            In other word, $\bz\in\Sigma^N$ and   \[
        \lim_{t\to+\infty}\mathbf{X}(t)=X^*\otimes\bz.
    \]
      \end{proof}

	\subsubsection{Proof of Theorem \ref{Neutral:thm}}
    \label{proofs:neutral35}
    \begin{proof}
        \begin{enumerate}[label=(\roman*)]
        \item 
A direct computation shows that $X^i$ satisfies
\[
\frac{d}{dt} X^i = A(t)X^i,
\]
where $A(t)$ depends on time only through the aggregated variables
$(S_k(t),I_k(t),D_k(t))_{k\in\K}$, whose behavior is described in
Proposition~\ref{Neutrall null 1 : aggregated variables.}
Therefore,
\[
(S_k(t),I_k(t),D_k(t))_{k\in\K}
\to (S_k^*,I_k^*,D_k^*)_{k\in\K}=:X^*.
\]
The result follows from Lemma~\ref{lemma:Neutral}.

     \item Again from Lemma~\ref{lemma:Neutral}, we have
$\mathbf{X}(t)\to X^*\otimes \bz$ as $t\to+\infty$, which yields,
for any $k\in\K$ and $i\in\lrbN$,
\[
\lim_{t\to+\infty} (I_k^i(t),D_k^i(t))
= (I_k^* z^i,\, D_k^* z^i).
\]
It follows that
\[
\lim_{t\to+\infty}\Theta^i(t)
= \lim_{t\to+\infty} Q(I^i(t)+D^i(t))
= Q(I^*+D^*)\, z^i.
\]
In other words, $\bTheta(t)\to \Theta^*\otimes \bz$ as $t\to+\infty$.

Now, the equation for $D_k^{ij}$ is
\[
\frac{d}{dt} D_k^{ij}
= \beta_k\sigma_k I_k^i \Theta_k^j - (r_k+\gamma_k) D_k^{ij},
\]
and a standard argument shows that
\[
D_k^{ij}\to
\frac{\beta_k\sigma_k \Theta_k^* I_k^*}{r_k+\gamma_k}\, z^i z^j
= D_k^* z^i z^j
\quad\text{as } t\to+\infty.
\]

        \item The invariance of $\mathcal{S}$ follows from a direct computation.
        \end{enumerate}
    \end{proof}
	\subsection{Proof of the Theorem \ref{Th:replicator} on the Quasi-Neutral dynamics
    \label{proofs:replicator}}
    
   Let $\varepsilon>0$ be fixed. Under the quasi-neutral assumption (strain similarity), the system
\eqref{mainsys} reads, for each $k\in\K$ (with the susceptible equation removed,
since the total population in each class is constant and equal to~1):
\begin{equation}\label{mainsys-eps}
\begin{cases}
\dfrac{d}{dt} I^{i,\varepsilon}_k
 = \beta_k S_k^\varepsilon \Theta_k^{i,\varepsilon}
   - I^{i,\varepsilon}_k \sum_{j=1}^N \beta_k\sigma_k \Theta_k^{j,\varepsilon}
   - (r_k+\gamma_k) I^{i,\varepsilon}_k
   + \varepsilon f_k^i(\bI_k^\varepsilon,\bD_k^\varepsilon,\bTheta_k^\varepsilon)
   + o(\varepsilon),\\[0.3em]
\dfrac{d}{dt} D_k^{ij,\varepsilon}
 = \beta_k\sigma_k \Theta_k^{j,\varepsilon} I^{i,\varepsilon}_k
   - (r_k+\gamma_k) D_k^{ij,\varepsilon}
   + \varepsilon \mathfrak{g}_k^{ij}(\bI_k^\varepsilon,\bD_k^\varepsilon,\bTheta_k^\varepsilon)
   + o(\varepsilon).
\end{cases}
\end{equation}

Here
\[
\Theta_k^{i,\varepsilon}
   = \sum_{s\in\K} q_{ks}(I_s^{i,\varepsilon} + D_s^{i,\varepsilon}),
   \qquad
D_s^{i,\varepsilon} =  \sum_{j\in\lrbN} \bigl(\mathbb{P}_k^{(i,j)\to i}D_s^{ij,\varepsilon} + \mathbb{P}_k^{(j,i)\to i}D_s^{ji,\varepsilon}\bigr),
\]
and the functions $s_k^i$, $f_k^i$, and $g_k^{ij}$ denote the first-order
expansion given by the quasi-neutral definition~\ref{def:qneutral}. Their
explicit expressions are provided in Appendix~\ref{appendix:order1}.

As in the perfectly neutral case ($\varepsilon=0$), the system can be rewritten
in an almost\footnote{The system is not strictly triangular because all
variables appear in $\bar f_k$ and $\bar g_k$. However, the structure is
perfectly triangular when $\varepsilon=0$, which is the key ingredient for the
slow–fast reduction.} triangular form:
(a) the aggregated quantities
$I_k^\varepsilon=\sum_{i\in\lrbN} I_k^{i,\varepsilon}$ and
$D_k^\varepsilon=\sum_{(i,j)\in\lrbN^2} D_k^{ij,\varepsilon}$, collected in
$X^\varepsilon=\bigl(I_k^\varepsilon,D_k^\varepsilon\bigr)_{k\in\K}$;
(b) the infectious variables
$X^{i,\varepsilon}=\bigl(I_k^{i,\varepsilon},D_k^{i,\varepsilon}\bigr)_{k\in\K}$;
and (c) the second-level infection variables $D_k^{ij,\varepsilon}$.
We obtain:
\begin{subequations}\label{qn:triangle}
\begin{align}
\dfrac{d}{dt} X^\varepsilon
 &= \begin{pmatrix}
    \beta_k S_k^\varepsilon \Theta_k^\varepsilon
      - \bigl(\beta_k\sigma_k\Theta_k^\varepsilon - (r_k+\gamma_k)\bigr) I_k^\varepsilon\\
    \beta_k\sigma_k \Theta_k^\varepsilon I_k^\varepsilon
      - (r_k+\gamma_k) D_k^\varepsilon
    \end{pmatrix}_{k\in\K}
    + \varepsilon
      \begin{pmatrix}
      \bar f_k(\bI_k^\varepsilon,\bD_k^\varepsilon,\bTheta_k^\varepsilon)\\
      \bar g_k(\bI_k^\varepsilon,\bD_k^\varepsilon,\bTheta_k^\varepsilon)
      \end{pmatrix}_{k\in\K}
    + o(\varepsilon),\label{qn:agreg}\\[0.4em]
\dfrac{d}{dt} X^{i,\varepsilon}
 &= A(X^\varepsilon) X^{i,\varepsilon}
    + \varepsilon\bigl(\mathcal{F}_k^i(\bI_k^\varepsilon,\bD_k^\varepsilon,\bTheta_k^\varepsilon)\bigr)_{k\in\K}
    + o(\varepsilon),\label{qn:sf}\\[0.4em]
\dfrac{d}{dt} D_k^{ij,\varepsilon}
 &= \beta_k\sigma_k \Theta_k^{j,\varepsilon} I_k^{i,\varepsilon}
    - (r_k+\gamma_k) D_k^{ij,\varepsilon}
    + \varepsilon g_k^{ij}(\bI_k^\varepsilon,\bD_k^\varepsilon,\bTheta_k^\varepsilon)
    + o(\varepsilon).\label{qn:Dij}
\end{align}
\end{subequations}

Here 
$
\mathcal{F}_k^i
 = \begin{pmatrix}
   f_k^i\\[0.2em]
   g_k^i\end{pmatrix}$,
$\bar f_k = \sum_{i\in\lrbN} f_k^i,$ and$
\bar g_k = \sum_{i\in\lrbN} g_k^{i}$
wherein we have set 
   \begin{equation}\label{eq:defFi}
g_k^i(\bI_k^\varepsilon,\bD_k^\varepsilon,\bTheta_k^\varepsilon)=\sum_{j\in\lrbN} \frac12(\mathfrak{g}_k^{ij}(\bI_k^\varepsilon,\bD_k^\varepsilon,\bTheta_k^\varepsilon)+\mathfrak{g}_k^{ji}(\bI_k^\varepsilon,\bD_k^\varepsilon,\bTheta_k^\varepsilon))+\beta_k\sigma_k \sum_{j=1}^N (\omega_k^{i,j}-\omega_k^{j,i})\Theta_k^{j,\eps} I_k^{i,\eps} ,\end{equation}

\medskip
\paragraph{Slow–fast decomposition.}
To apply Tikhonov’s theorem, we focus on~\eqref{qn:sf}.  
Introduce the decomposition
\[
u^{i,\varepsilon} = \omega^* X^{i,\varepsilon},
\qquad
\mathfrak{X}^{i,\varepsilon} = \Pi_F X^{i,\varepsilon},
\qquad
X^{i,\varepsilon} = u^{i,\varepsilon} X^* + \mathfrak{X}^{i,\varepsilon},
\]
which yields
\begin{equation}\label{qn:change}
\begin{cases}
\dfrac{d}{dt} u^{i,\varepsilon}
 = \omega^*\bigl(A(X^\varepsilon)-A^*\bigr)
   (u^{i,\varepsilon}X^*+\mathfrak{X}^{i,\varepsilon})
   + \varepsilon\,\omega^*\mathcal{F}^i(\bI^\varepsilon,\bD^\varepsilon,\bTheta^\varepsilon)
   + o(\varepsilon),\\[0.4em]
\dfrac{d}{dt} \mathfrak{X}^{i,\varepsilon}
 = \Pi_F A^* \mathfrak{X}^{i,\varepsilon}
   + \Pi_F\bigl(A(X^\varepsilon)-A^*\bigr)
     (u^{i,\varepsilon}X^*+\mathfrak{X}^{i,\varepsilon})
   + \varepsilon\,\Pi_F\mathcal{F}^i(\bI^\varepsilon,\bD^\varepsilon,\bTheta^\varepsilon)
   + o(\varepsilon).
\end{cases}
\end{equation}

All variables except $\bu^\varepsilon$ evolve on the fast $O(1)$ time scale,
while $\bu^\varepsilon$ evolves on the slow $O(\varepsilon)$ time scale.

\medskip
\paragraph{Fast limit.}
Letting $\varepsilon\to0$ in~\eqref{qn:triangle} yields the fast system
\begin{equation}\label{qnfast:triangle}
\begin{cases}
\dfrac{d}{dt} X = \begin{pmatrix}
\beta_k S_k\Theta_k - (\beta_k\sigma_k\Theta_k - (r_k+\gamma_k)) I_k\\
\beta_k\sigma_k\Theta_k I_k - (r_k+\gamma_k) D_k
\end{pmatrix}_{k\in\K},\\[0.4em]
\dfrac{d}{dt} \mathfrak{X}^i
 = \Pi_F A^* \mathfrak{X}^i
   + \Pi_F\bigl(A(I_\cdot,D_\cdot)-A^*\bigr)(z^i X^*+\mathfrak{X}^i),\\[0.4em]
\dfrac{d}{dt} D_k^{ij}
 = \beta_k\sigma_k \Theta_k^j I_k^i - (r_k+\gamma_k) D_k^{ij},\\[0.4em]
\dfrac{d}{dt} u^i
 = \omega^*\bigl(A(X)-A^*\bigr)(u^i X^*+\mathfrak{X}^i).
\end{cases}
\end{equation}

This is exactly the neutral system, fully described by
Theorems~\ref{th:SISqsk} and~\ref{Neutral:thm}. In particular,
\begin{equation}\label{fastlimit}
\lim_{t\to+\infty} \mathfrak{X}(t)=\mathbf{0},\qquad
\lim_{t\to+\infty} \bu(t)=\bz\in\Sigma^N,
\qquad
\lim_{t\to+\infty} (\bI(t),\bD(t),\bTheta(t))
 = (I^*\otimes\bz,\; D^*\otimes\bz\otimes\bz,\; \Theta^*\otimes\bz),
\end{equation}
with exponential convergence.

\medskip
\paragraph{Slow limit.}
Set $\tau=\varepsilon t$.  
From the first equation of~\eqref{qn:change} we obtain
\begin{equation}\label{qn:slow}
\dfrac{d}{d\tau} u^{i,\varepsilon}
 = \frac{1}{\varepsilon}\,
   \omega^*\bigl(A^\varepsilon(X^\varepsilon(\tfrac{\tau}{\varepsilon}))-A^*\bigr)
   (u^{i,\varepsilon}X^*+\mathfrak{X}^{i,\varepsilon})
 + \omega^*\mathcal{F}^i(\bI^\varepsilon(\tfrac{\tau}{\varepsilon}),
                         \bD^\varepsilon(\tfrac{\tau}{\varepsilon}),
                         \bTheta^\varepsilon(\tfrac{\tau}{\varepsilon}))
 + o(1).
\end{equation}

Fix $\tau>0$.

 By the quasi–neutrality assumption, the map
\[
\varepsilon \longmapsto A^\varepsilon(X)
\]
is $C^1$ at $\varepsilon=0$, uniformly for $X$ in a neighbourhood of the
fast attractor. Together with the exponential convergence of
$A^\varepsilon(X^\varepsilon(t))$ towards $A^*$ as $t\to+\infty$, this implies
that the composite map
\[
\varepsilon \longmapsto A^\varepsilon\!\left(X^\varepsilon\!\left(\tfrac{\tau}{\varepsilon}\right)\right)
\]
is differentiable at $\varepsilon=0$ for every fixed $\tau>0$. Hence the limit
\[
U(\tau):=\lim_{\varepsilon\to0}
\frac{A^\varepsilon(X^\varepsilon(\tfrac{\tau}{\varepsilon}))-A^*}{\varepsilon}
\]
exists and is simply the derivative at $\varepsilon=0$.

As $\varepsilon\to0$, applying \eqref{fastlimit} shows that
$\mathfrak{X}^{i,\varepsilon}(\tau/\varepsilon)\to0$ and
$\bu^\varepsilon\to\bz(\tau)\in\Sigma^N$, where $\bz(\tau)$ satisfies
\begin{equation}\label{qn:slowlim}
\dfrac{d}{d\tau} z^i
 = \bigl(\omega^* U(\tau) X^*\bigr) z^i
   + \omega^*\mathcal{F}^i(I^*\otimes\bz,\; D^*\otimes\bz\otimes\bz,\; \Theta^*\otimes\bz),
\end{equation}

A direct computation shows that $z^i$ factors out of $\mathcal{F}^i$:
\begin{equation}\label{def:fi}
\omega^*\mathcal{F}^i(I^*\otimes\bz,\; D^*\otimes\bz\otimes\bz,\; \Theta^*\otimes\bz)
 = z^i f^i(\bz),
\end{equation}
for an explicit functions $f^i$ given in \eqref{eq:definition_fi}.

Setting $-q(\tau)=\omega^*U(\tau)X^*$, we obtain
\begin{equation}\label{qn:slowlim2}
\dfrac{d}{d\tau} z^i(\tau)
 = z^i(\tau)\bigl(f^i(\bz(\tau)) - q(\tau)\bigr).
\end{equation}

Since $\bz\in\Sigma^N$, we have
$\sum_{i\in\lrbN} z^i(f^i(\bz)-q)=0$, hence
$q(\tau)=\sum_{i\in\lrbN} z^i f^i(\bz)$.
Thus the slow dynamics reduces to the replicator equation
\begin{equation}\label{qn:slowlim3}
\dfrac{d}{d\tau} z^i(\tau)
 = z^i(\tau)\Bigl(f^i(\bz(\tau))
   - \sum_{j\in\lrbN} z^j(\tau) f^j(\bz(\tau))\Bigr).
\end{equation}

To obtain the explicit form of the replicator coefficients, it suffices to
compute the functions $f^i$, defined by~\eqref{def:fi} from the first-order
expansion $\mathcal{F}_k^i$ of the quasi-neutral system.The details of these expansion is given in appendix \ref{appendix:order1}.
It appears that the functions $f^i$ are all linear in $\bz$, that is,
\[
f^i(\bz)=\sum_{j=1}^N m^{ij} z^j
\]
for some parameters $(m^{ij})$. Setting $M=(m^{ij})$, equation~\eqref{qn:slowlim3} takes the form of the replicator equation for $\bz\in \Sigma^N$:
\begin{equation}\label{qn:slowlim4}
\dfrac{d}{d\tau} z^i(\tau)
 = z^i(\tau)\Bigl((M\bz(\tau))_i
   - \bz(\tau)^T M\bz(\tau)\Bigr);\quad i=1,\cdots,N.
\end{equation}

There is no reason for $m^{ii}$ to vanish at this stage. However, it is classical that a linear replicator equation remains invariant under the addition of any constant to a column of $M$. Hence, setting $\Lambda=(\lambda_i^j)_{1\leq i,j\leq N}$ where
\[
\lambda_i^j = m^{ij} - m^{jj},
\]
we obtain the final formula
\begin{equation}\label{qn:slowlim5}
\dfrac{d}{d\tau} z^i(\tau)
 = z^i(\tau)\Bigl((\Lambda\bz(\tau))_i
   - \bz(\tau)^T \Lambda\bz(\tau)\Bigr),
\end{equation}
with $\Lambda$ explicitely given by the formula in Theorem~\ref{Th:replicator}.

{\bf Estimates}
The Tickonov's theorem implies that there exists a constant $C_1$ such that for any $\eps\in(0,\eps_0)$ and $\tau\in [\tau_0,T]$
\[\|X^{i,\eps}\left(\dfrac{\tau}{\eps}\right)-X^*z^i(\tau) \|\leq C_1 \eps.\]
The estimate on $(S_k,\bI_k,\bD_k)_{k\in\K}$ follows.

	
    \section*{Acknowledgements}
    This work received funding from the Portuguese Foundation for Science and Technology (FCT grant number 2022.03060.PTDC - Models4Invasion) and was partly supported by the European Commission (NOSEVAC-Modelling grant nr 101159175).
    \bibliography{references_Theoreticalpaper}
	\appendix
	\section{On Metzler Matrices}\label{Appendix:Metzler}
	A square matrix $\mathbb{A}\in\mathcal{M}_n$ is called a Metzler matrix if all its off-diagonal elements are non-negative. Thus, for a large enough $m\in\R$ the matrix $\mathbb{B}=\mathbb{A}+m\mathbb{I}_n$ is non-negative and we have $\lambda\in\text{sp}(\mathbb{A})\Longleftrightarrow\lambda+m\in\text{sp}(\mathbb{B})$.
	
	Therefore, the spectrum of a Metzler matrix inherits key characteristics from the spectrum of positive matrices, thanks to the Perron–Frobenius Theorem for Metzler matrices (adapted from Theorem 10.2 \cite{FB-LNS}).

    We recall the notation 
    \[\alpha(\mathbb{A})=\sup \{ \mathrm{Re}(\lambda),\;\lambda\in\mathrm{sp}(\mathbb{A})\}.\]
	\begin{thm}[Perron-Frobenius Theorem for Metzler matrix]\label{Th:A1}
		Let $n\in\N^*$ and $\mathbb{A}\in\mathcal{M}_n$ be Metzler.
        \begin{enumerate}[label=(\roman*)]
        \item $\alpha(\mathbb{A})$ is an  eigenvalue of $\mathbb{A}$, and
        \item the right and left eigenvector of $\alpha(\mathbb{A})$ are non-negative
        \end{enumerate}
        If additionally $\mathbb{A}$ is irreducible, then
        \begin{enumerate}[label=(\roman*),start=3]
        \item for any other eigenvalues $\lambda\in\mathrm{sp}(\mathbb{A})$, $\mathrm{Re}(\lambda)<\alpha(\mathbb{A})$, and
        \item the right and left eigenvectors of $\alpha(\mathbb{A})$ are unique and positive up to a multiplicative constant. 
        \end{enumerate}
	\end{thm}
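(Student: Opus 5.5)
The plan is to reduce everything to the classical Perron–Frobenius theorem for non-negative matrices by a diagonal shift. Since $\mathbb{A}$ is Metzler, I fix $m\in\R$ large enough that $\mathbb{B}=\mathbb{A}+m\mathbb{I}_n\geq 0$. As recalled just before the statement, $\lambda\in\mathrm{sp}(\mathbb{A})$ if and only if $\lambda+m\in\mathrm{sp}(\mathbb{B})$. Moreover, $\mathbb{A}$ and $\mathbb{B}$ have exactly the same eigenvectors, both right and left. Finally, they have the same off-diagonal entries, so $\mathbb{A}$ is irreducible if and only if $\mathbb{B}$ is.

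\textbf{Points (i) and (ii).} The classical Perron–Frobenius theorem for non-negative matrices gives that $\rho(\mathbb{B})$ is an eigenvalue of $\mathbb{B}$ with non-negative right and left eigenvectors. If one only wants to assume the positive-matrix version, it can be obtained by applying it to $\mathbb{B}+\delta\bun_n\bun_n^T$ and letting $\delta\to0$, using continuity of the spectrum and compactness of the normalized eigenvectors. Hence $\rho(\mathbb{B})-m\in\mathrm{sp}(\mathbb{A})$, with non-negative right and left eigenvectors. It remains to identify $\rho(\mathbb{B})-m$ with $\alpha(\mathbb{A})$. For any $\lambda\in\mathrm{sp}(\mathbb{A})$,
\[
\mathrm{Re}(\lambda)+m=\mathrm{Re}(\lambda+m)\leq|\lambda+m|\leq\rho(\mathbb{B}),
\]
so $\mathrm{Re}(\lambda)\leq\rho(\mathbb{B})-m$. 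This upper bound is attained by an eigenvalue, so $\alpha(\mathbb{A})=\rho(\mathbb{B})-m$, which gives (i) and (ii).

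\textbf{Point (iv).} If $\mathbb{A}$ is irreducible, then $\mathbb{B}$ is an irreducible non-negative matrix. The irreducible Perron–Frobenius theorem then says that $\rho(\mathbb{B})$ is a simple eigenvalue whose right and left eigenvectors are positive and unique up to scaling. Since the eigenvectors are shared between $\mathbb{A}$ and $\mathbb{B}$, this is exactly (iv).

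\textbf{Point (iii).} This is the only step needing care, because an irreducible but imprimitive $\mathbb{B}$ may have other eigenvalues of modulus $\rho(\mathbb{B})$. I avoid this by working with real parts instead of moduli. Let $\lambda\in\mathrm{sp}(\mathbb{A})$ with $\lambda\neq\alpha(\mathbb{A})$ and suppose, for contradiction, that $\mathrm{Re}(\lambda)=\alpha(\mathbb{A})$. Then $\mathrm{Im}(\lambda)\neq0$, and therefore
\[
|\lambda+m|>\mathrm{Re}(\lambda+m)=\alpha(\mathbb{A})+m=\rho(\mathbb{B}).
\]
This contradicts the definition of $\rho(\mathbb{B})$. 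Combined with the bound from (i), we get $\mathrm{Re}(\lambda)<\alpha(\mathbb{A})$. I expect no real obstacle here; the main point is simply to argue by real parts rather than by primitivity. Simplicity of $\alpha(\mathbb{A})$ as an eigenvalue transfers from (iv) as well, which is the form in which the result is used in Lemma~\ref{lemmaAstar}.
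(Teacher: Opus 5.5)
Your proof is correct and follows the route the paper indicates: the paper only sketches the diagonal shift $\mathbb{B}=\mathbb{A}+m\mathbb{I}_n\geq 0$ and defers to the classical Perron--Frobenius theorem through its cited reference, and you carry out that transfer in full. Your real-part argument for (iii) avoids any primitivity assumption (it uses irreducibility only to make ``other eigenvalue'' unambiguous via simplicity), and it fills cleanly the one step the paper leaves implicit.
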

	A lot of properties on Metzler Matrix are proven in \cite{FB-LNS}, see in particular Theorems 10.3 and 10.14. Here, we extract some properties we use in this article. purpose.
	\begin{thm}[Stability criteria for Metzler Matrices]\label{Th:A2}
    Let $\mathbb{A}\in\mathcal{M}_n$ be an irreducible Metzler matrix. The following statements are equivalent:
    \begin{enumerate}[label=(\roman*)]
        \item $\alpha(\mathbb{A})<0$,
        \item $\mathbb{A}$ is invertible and $-\mathbb{A}^{-1}>\mathbb{O}_n$,
        \item there exists $\mathbf{v}>0$ such that $\mathbb{A}\mathbf{v}<\mathbf{0}_n$,
        \item there exists $\zeta>\mathbf{0}_n$ such that $\zeta^T\mathbb{A}<\mathbf{0}_n^T$,
        \item for each $\mathbf{v}\geq \mathbf{0}_n$ and $\mathbf{v}\neq \mathbf{0}_n$,
        $\mathbb{A}\mathbf{v}$ has at least one negative entry.
        \end{enumerate}
	\end{thm}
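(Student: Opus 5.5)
We will reduce every item to the Perron--Frobenius theorem for Metzler matrices (Theorem~\ref{Th:A1}), used in one convenient form. Since $\mathbb{A}$ is irreducible and Metzler, Theorem~\ref{Th:A1}-(iv) gives vectors $\mathbf{v}_0>\mathbf{0}_n$ and $\zeta_0>\mathbf{0}_n$ with
\[
\mathbb{A}\mathbf{v}_0=\alpha(\mathbb{A})\mathbf{v}_0,\qquad \zeta_0^T\mathbb{A}=\alpha(\mathbb{A})\zeta_0^T .
\]
The same holds for $\mathbb{A}^T$, which is again irreducible and Metzler and has the same spectrum. Most implications then follow by pairing a candidate vector with $\zeta_0$ or $\mathbf{v}_0$. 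We prove the cycle (i)$\Rightarrow$(ii)$\Rightarrow$(iii)$\Rightarrow$(i), then (i)$\Leftrightarrow$(iv) by transposition, and finally (i)$\Leftrightarrow$(v).

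\emph{(i)$\Rightarrow$(ii).} Choose $m$ with $\mathbb{B}=\mathbb{A}+m\mathbb{I}_n\geq\mathbb{O}_n$. Then $e^{t\mathbb{A}}=e^{-mt}e^{t\mathbb{B}}$. Because $\mathbb{B}$ is nonnegative and irreducible, for each pair $(k,l)$ some power $\mathbb{B}^j$ with $j\leq n-1$ has a positive $(k,l)$ entry. Hence the series $e^{t\mathbb{B}}=\sum_j t^j\mathbb{B}^j/j!$ is entrywise positive for every $t>0$. Since $\alpha(\mathbb{A})<0$, we have $\|e^{t\mathbb{A}}\|\leq Ce^{-\eta t}$ for some $\eta>0$. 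The integral $\int_0^\infty e^{t\mathbb{A}}\,dt$ therefore converges, and since $\frac{d}{dt}e^{t\mathbb{A}}=\mathbb{A}e^{t\mathbb{A}}$ it equals $-\mathbb{A}^{-1}$. An integral of positive matrices is positive, so $-\mathbb{A}^{-1}>\mathbb{O}_n$. This is the only step that really uses irreducibility quantitatively: it upgrades $\geq$ to $>$. We expect it to be the main point needing care, although the series argument above should settle it.

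\emph{(ii)$\Rightarrow$(iii).} Set $\mathbf{v}=-\mathbb{A}^{-1}\bun_n$. By (ii) it is positive, and $\mathbb{A}\mathbf{v}=-\bun_n<\mathbf{0}_n$.

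\emph{(iii)$\Rightarrow$(i).} Pair with the positive left eigenvector: $\zeta_0^T\mathbb{A}\mathbf{v}=\alpha(\mathbb{A})\,\zeta_0^T\mathbf{v}$. The left-hand side is negative because $\zeta_0>\mathbf{0}_n$ and $\mathbb{A}\mathbf{v}<\mathbf{0}_n$. Since $\zeta_0^T\mathbf{v}>0$, it follows that $\alpha(\mathbb{A})<0$.

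\emph{(i)$\Leftrightarrow$(iv).} Condition (iv) for $\mathbb{A}$ is exactly condition (iii) for $\mathbb{A}^T$. Since $\alpha(\mathbb{A}^T)=\alpha(\mathbb{A})$, the cycle already proved, applied to $\mathbb{A}^T$, gives this equivalence.

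\emph{(i)$\Rightarrow$(v).} Suppose some $\mathbf{v}\geq\mathbf{0}_n$, $\mathbf{v}\neq\mathbf{0}_n$, has $\mathbb{A}\mathbf{v}\geq\mathbf{0}_n$. Then $0\leq\zeta_0^T\mathbb{A}\mathbf{v}=\alpha(\mathbb{A})\,\zeta_0^T\mathbf{v}$. Here $\zeta_0^T\mathbf{v}>0$, because $\zeta_0$ is strictly positive and $\mathbf{v}$ is nonnegative and nonzero. This forces $\alpha(\mathbb{A})\geq0$, a contradiction.

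\emph{(v)$\Rightarrow$(i).} If $\alpha(\mathbb{A})\geq0$, then $\mathbb{A}\mathbf{v}_0=\alpha(\mathbb{A})\mathbf{v}_0\geq\mathbf{0}_n$ has no negative entry, contradicting (v).

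Exactly the same pairing with $\zeta_0$ also gives the sign statement used in Proposition~\ref{prop:E0stable}: if $\mathbb{A}\mathbf{v}>\mathbf{0}_n$ for some $\mathbf{v}>\mathbf{0}_n$, then $\alpha(\mathbb{A})>0$.
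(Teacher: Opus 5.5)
Your proof is correct. The paper gives no argument of its own here: it only says these properties are extracted from \cite{FB-LNS} (Theorems 10.3 and 10.14 there). Your route is self-contained and rests only on Theorem~\ref{Th:A1}:
\begin{itemize}
\item Every implication except (i)$\Rightarrow$(ii) comes from pairing with the positive Perron vectors $\zeta_0$ and $\mathbf{v}_0$ of $\alpha(\mathbb{A})$.
\item The implication (i)$\Rightarrow$(ii) comes from the representation $-\mathbb{A}^{-1}=\int_0^\infty e^{t\mathbb{A}}\,dt$, together with entrywise positivity of $e^{t(\mathbb{A}+m\mathbb{I}_n)}$ for $t>0$. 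That positivity is what lets irreducibility upgrade $\geq$ to $>$.
\end{itemize}

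The citation is shorter. Your version shows exactly where irreducibility and the strict positivity of the Perron vectors are used, and it keeps the appendix logically closed on Theorem~\ref{Th:A1}.

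Keep your closing remark. In Proposition~\ref{prop:E0stable} the paper invokes (v) to conclude $\alpha(\mathbb{J})>0$ from $\mathbb{J}\mathfrak{T}>\mathbf{0}$. The equivalence (i)$\Leftrightarrow$(v) only yields $\alpha(\mathbb{J})\geq 0$. Your pairing $\zeta_0^T\mathbb{J}\mathfrak{T}=\alpha(\mathbb{J})\,\zeta_0^T\mathfrak{T}$ supplies the strict inequality the paper actually asserts.
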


	\section{Explicit first-order expansion functions}
\label{appendix:order1}

Using the expansion notations of Table~\ref{tabledef}, the function appearing in the first-order expansion of equation~\eqref{mainsys-eps} reads explicitly
\[
f_k^i(\bI_k,\bD_k,\bTheta_k)=
b_k^i \Theta_k^i S_k - c_k^i I_k^i
- \sum_{j=1}^N (\beta_k \alpha_{k}^{ij} + b_k^j \sigma_k)\, \Theta_k^j I_k^i .
\]
Similarly,
\[
\mathfrak{g}_k^{ij}(\bI_k,\bD_k,\bTheta_k)=
-c_k^{ij} D_k^{ij}
+ (\beta_k \alpha_{k}^{ij} + b_k^i \sigma_k)\, \Theta_k^j I_k^i .
\]

When rewriting the system in terms of \((I_k^i, D_k^i)_{k\in\K}\) in~\eqref{eq:defFi}, we obtain
\[
\mathcal{F}^i=\begin{pmatrix} f_k^i \\ g_k^i \end{pmatrix}_{k\in\K},
\]
where
\[
g_k^i(\bI_k,\bD_k,\bTheta_k)
= \sum_{j=1}^N \frac12\big(\mathfrak{g}_k^{ij}(\bI_k,\bD_k,\bTheta_k)
+ \mathfrak{g}_k^{ji}(\bI_k,\bD_k,\bTheta_k)\big)
+ \beta_k \sigma_k \sum_{j=1}^N (\omega_k^{i,j}-\omega_k^{j,i})\, \Theta_k^{j} I_k^{i},
\]
which expands explicitly as
\begin{align*}
g_k^i(\bI_k,\bD_k,\bTheta_k)=
&\;\frac12 \sigma_k \sum_{j=1}^N \big(b_k^i \Theta_k^j I_k^i + b_k^j \Theta_k^i I_k^j\big)\\
&\;-\sum_{j=1}^N \frac12 \big(c_k^{ji} D_k^{ji} + c_k^{ij} D_k^{ij}\big)\\
&\;+\frac12 \beta_k \sum_{j=1}^N \big(\alpha_k^{ij} \Theta_k^j I_k^i
+ \alpha_k^{ji} \Theta_k^i I_k^j\big)\\
&\;+\beta_k \sigma_k \sum_{j=1}^N (\omega_k^{i,j}-\omega_k^{j,i})\, \Theta_k^{j} I_k^{i}.
\end{align*}

From~\eqref{qn:slowlim}, it suffices to compute
\[
\omega^* \mathcal{F}^i(I^*\otimes\bz,\; D^*\otimes\bz\otimes\bz,\; \Theta^*\otimes\bz)
= \sum_{k\in\K} \big(\phi_k^* f_k^i + \psi_k^* g_k^i\big)
(I^*\otimes\bz,\; D^*\otimes\bz\otimes\bz,\; \Theta^*\otimes\bz),
\]
where \(\sum_{j=1}^N z^j = 1\) and \(\omega^*=(\phi_k^*,\psi_k^*)_{k\in\K}\).

We have
\[
f_k^i(I^*\otimes\bz,\; D^*\otimes\bz\otimes\bz,\; \Theta^*\otimes\bz)
= z^i \sum_{j=1}^N z^j \Big(
\Theta_k^*(S_k^* b_k^i - \sigma_k I_k^* b_k^j)
- I_k^* c_k^i
- \beta_k \Theta_k^* I_k^* \alpha_k^{ij}
\Big),
\]
and
\begin{align*}
g_k^i(I^*\otimes\bz,\; D^*\otimes\bz\otimes\bz,\; \Theta^*\otimes\bz)
= z^i \sum_{j=1}^N z^j \Bigg(
&\frac12 \sigma_k \Theta_k^* I_k^* (b_k^i + b_k^j)\\
&- D_k^* \frac{c_k^{ji} + c_k^{ij}}{2}\\
&+ \frac12 \beta_k \Theta_k^* I_k^* (\alpha_k^{ij} + \alpha_k^{ji})\\
&+ \beta_k \sigma_k \Theta_k^* I_k^* (\omega_k^{i,j}-\omega_k^{j,i})
\Bigg).
\end{align*}

Hence,
\[
\omega^* \mathcal{F}^i(I^*\otimes\bz,\; D^*\otimes\bz\otimes\bz,\; \Theta^*\otimes\bz)
= z^i f^i(\bz),
\]
where, using the relations
\(\psi_k^*=\mathcal{X}^* \dfrac{\pi_k}{r_k+\gamma_k}\),
\(\phi_k^*=(1-\tfrac12 \xi_k^*) \psi_k^*\),
and \(\xi_k^* S_k^*=\sigma_k I_k^*\) from Lemma~\ref{lemmaAstar},
\begin{equation}\label{eq:definition_fi}
f^i(\bz)=\sum_{j=1}^N m^{ij} z^j
= \mathcal{X}^* \sum_{k\in\K} \frac{\pi_k}{r_k+\gamma_k}
\sum_{j=1}^N \mathfrak{m}_k^{ij} z^j,
\end{equation}
with
\begin{align*}
\mathfrak{m}_k^{ij}=\;
&\Theta_k^* S_k^* b_k^i
- \frac12 \sigma_k \Theta_k^* I_k^* (1+\xi_k^*) b_k^j\\
&+ I_k^* \Big(1-\frac12 \xi_k^*\Big) c_k^i
+ D_k^* \frac{c_k^{ji}+c_k^{ij}}{2}\\
&+ \beta_k \sigma_k \Theta_k^* I_k^* (w_k^{ij}-w_k^{ji})\\
&+ \frac12 \beta_k \Theta_k^* I_k^*
\Big(\xi_k^* \alpha_k^{ji}
+ (1-\xi_k^*)(\alpha_k^{ji}-\alpha_k^{ij})\Big).
\end{align*}

The last term 

To obtain the final replicator equation~\eqref{qn:slowlim5} in terms of the fitness matrix \(\Lambda\), we define
\(\lambda_i^j = m^{ij} - m^{jj}\), which yields
\[
\lambda_i^j
= \mathcal{X}^* \sum_{k\in\K} \frac{\pi_k}{r_k+\gamma_k} Y_k^{i,j},
\]
 where \(Y_k^{ij}=m_k^{ij}-m_k^{jj}\), satisfy 
\begin{align*}Y_k^{ij}=\;
&\Theta_k^* S_k^* (b_k^i-b_k^j)
+ I_k^* \Big(1-\frac12 \xi_k^*\Big) (c_k^i-c_k^j)
+ D_k^* \left(\frac{c_k^{ji}+c_k^{ij}}{2}-c_k^{jj}\right)+ \beta_k \sigma_k \Theta_k^* I_k^* (w_k^{ij}-w_k^{ji})\\
&+ \frac12 \beta_k \Theta_k^* I_k^*
\Big(\xi_k^* (\alpha_k^{ji}-\alpha_k^{jj})
+ (1-\xi_k^*)(\alpha_k^{ji}-\alpha_k^{ij})\Big).
\end{align*}

 Factorizing the last term by $\xi_k^*$ and denoting $\mu_k=\dfrac{1-\xi_k^*}{\xi_k^*}=\dfrac{I_k^*}{D_k^*}$ gives  the final expression~\eqref{lambdaindetail}.
    \end{document}